\documentclass[12pt]{article}
\usepackage[utf8]{inputenc}
\usepackage{amsthm,amsmath,amssymb,bbm,hyperref,graphicx,float,tikz}
\usepackage{xcolor}
\allowdisplaybreaks
\usepackage[normalem]{ulem} 

\addtolength{\textwidth}{2.0cm}
\addtolength{\hoffset}{-1.0cm}
\addtolength{\textheight}{2.5cm}
\addtolength{\voffset}{-1.5cm}

\newtheorem{thm}{Theorem}
\newtheorem{prop}{Proposition}
\newtheorem{lemma}{Lemma}
\newtheorem{cor}{Corollary}
\theoremstyle{definition}
\newtheorem{rmk}{Remark}

\theoremstyle{remark}

\DeclareMathOperator{\tr}{tr}
\DeclareMathOperator{\Var}{Var}
\newcommand{\RE}{\textup{Re\,}}
\newcommand{\IM}{\textup{Im\,}}
\newcommand{\Hilbert}{\mathcal{H}}
\newcommand{\CCC}{\mathbb{C}}
\newcommand{\EEE}{\mathbb{E}}
\newcommand{\NNN}{\mathbb{N}}
\newcommand{\PPP}{\mathbb{P}}
\newcommand{\RRR}{\mathbb{R}}
\newcommand{\SSS}{\mathbb{S}}
\newcommand{\be}{\begin{equation}}
\newcommand{\ee}{\end{equation}}
\newcommand{\pr}[1]{|#1\rangle\langle #1 |}
\newcommand{\scp}[2]{\langle #1|#2 \rangle}

\newcommand{\GAP}{\textup{GAP}}
\newcommand{\mc}{\mathrm{mc}}
\newcommand{\can}{\mathrm{can}}
\newcommand{\GA}{\textup{GA}}

\title{Canonical Typicality For Other Ensembles Than Micro-Canonical}
\author{
Stefan Teufel\footnote{Mathematics Institute, Eberhard Karls University T\"ubingen, 
	Auf der Morgenstelle 10, 72076 T\"ubingen, Germany. ORCID: 0000-0003-3296-4261,
	E-mail: stefan.teufel@uni-tuebingen.de},~
Roderich Tumulka\footnote{Mathematics Institute, Eberhard Karls University T\"ubingen, 
	Auf der Morgenstelle 10, 72076 T\"ubingen, Germany. 
	ORCID: 0000-0001-5075-9929
	E-mail: roderich.tumulka@uni-tuebingen.de},~ 
Cornelia Vogel\footnote{Mathematics Institute, Eberhard Karls University T\"ubingen, 
	Auf der Morgenstelle 10, 72076 T\"ubingen, Germany.
	ORCID: 0000-0002-3905-4730,
	E-mail: cornelia.vogel@uni-tuebingen.de}
}

\begin{document}

\maketitle

\begin{abstract}
We generalize L\'evy's lemma, a  concentration-of-measure result for the uniform probability distribution on high-dimensional spheres, to a much more general class of measures, so-called GAP measures. For any given density matrix $\rho$ on a separable Hilbert space $\Hilbert$, $\GAP(\rho)$ is the most spread out probability measure on the unit sphere of $\Hilbert$ that has density matrix $\rho$ and thus forms the natural generalization of the uniform distribution. We prove concentration-of-measure whenever the largest eigenvalue $\|\rho\|$ of $\rho$ is small. We use this fact to generalize and improve well-known and important typicality results of quantum statistical mechanics to GAP measures, namely canonical typicality and dynamical typicality. Canonical typicality is the statement that for ``most'' pure states $\psi$ of a given ensemble, the reduced density matrix of a sufficiently small subsystem is very close to a $\psi$-independent matrix. Dynamical typicality is the statement that for any observable and any unitary time-evolution, for ``most'' pure states $\psi$ from a given ensemble the (coarse-grained) Born distribution of that observable in the time-evolved state $\psi_t$ is very close to a $\psi$-independent distribution. So far, canonical typicality and dynamical typicality were known for the uniform distribution on finite-dimensional spheres, corresponding to the micro-canonical ensemble, and for  rather special mean-value ensembles.  Our result shows that these typicality results hold also for $\GAP(\rho)$, provided the density matrix $\rho$ has small eigenvalues. Since certain GAP measures are quantum analogs of the canonical ensemble of classical mechanics, our results can also be regarded as a version of equivalence of ensembles.

\medskip

Key words: L\'evy's lemma; equivalence of ensembles; thermalization; quantum statistical mechanics; concentration of measure; Gaussian adjusted projected (GAP) measure; Scrooge measure; dynamical typicality; random wave function.
\end{abstract}

\section{Introduction}

In the 21st century, a modern perspective on quantum statistical mechanics is to consider an individual closed system in a pure state and investigate its and its subsystems' thermodynamic behavior; see, e.g., \cite{GM03,GMM04,PSW06,Reimann08b, BG09,GLMTZ09,GLMTZ10,GLTZ10,Short11,SF12,GHT13,GHT15,Reimann2015, GogEis16,BRGSR18,Reimann2018a,Reimann2018b,RG20,TTV22-physik,SWGW22} after pioneering work in \cite{vonNeumann29,Schroe52, Deutsch91,Srednicki94,Tasaki98}.

Roughly speaking, ``canonical typicality'' is the statement that the reduced density matrix of a subsystem obtained from a pure state of the total system is nearly deterministic if the pure state is randomly drawn from a sufficiently large subspace and the subsystem is not too large. 
More precisely, the original statement of canonical typicality \cite{Lloyd,GM03,PSW06,GLTZ06} asserts that for most pure states $\psi$ from a high-dimensional (e.g., micro-canonical) subspace $\Hilbert_R$ of the Hilbert space $\Hilbert_S$ of a macroscopic quantum system $S$ and for a subsystem $a$ of $S=a\cup b$ so that $\Hilbert_S=\Hilbert_a \otimes \Hilbert_b$,
the reduced density matrix
\be\label{rhosdef}
\rho_a^\psi:= \tr_b |\psi\rangle \langle \psi|
\ee
is close to the partial trace of $\rho_R:=P_R/d_R$ (the normalized projection to $\Hilbert_R$) and thus deterministic, provided that $d_R:=\dim \Hilbert_R$ is sufficiently large: 
\be\label{can1}
\rho_a^\psi \approx \tr_b \rho_R\,.
\ee
Here, the words ``most $\psi$'' refer to the \emph{uniform distribution} $u_R$ (normalized surface area measure) over the unit sphere
\be\label{SSSdef}
\SSS(\Hilbert_R) := \{\psi\in\Hilbert_R: \|\psi\|=1\}
\ee
in $\Hilbert_R$.
The name ``canonical typicality'' comes from the fact that if $\Hilbert_R=\Hilbert_\mc$ is a micro-canonical subspace and thus $\rho_R=\rho_\mc$ a micro-canonical density matrix, then $\tr_b \rho_\mc$ is close to the \emph{canonical density matrix}
\be\label{rhoacan}
\rho_{a,\can} = \frac{1}{Z_a}e^{-\beta H_a}
\ee
for $a$ with suitable $\beta$, provided $b$ is large and the interaction between $a$ and $b$ is weak; see, e.g., \cite{GLTZ06} for a summary of the standard derivation of this fact.

In this paper, we replace the uniform distribution by other, much more general distributions, so-called GAP measures, and show that for them a \emph{generalized canonical typicality} remains valid. 
For any density matrix $\rho$ replacing $\rho_R$ in $\Hilbert_S$, $\GAP(\rho)$ is the most spread-out distribution over $\SSS(\Hilbert_S)$ with density matrix $\rho$; the acronym stands for \emph{Gaussian adjusted projected} measure \cite{JRW94,GLTZ06b}. For $\rho=\rho_\mathrm{can}$, it arises as the distribution of wave functions in thermal equilibrium \cite{GLTZ06b,GLMTZ15}. If a system is initially in thermal equilibrium for the Hamiltonian $H_0$ but then driven out of equilibrium by means of a time-dependent $H_t$, its wave function will still be $\GAP(\rho)$-distributed for suitable $\rho$.
For general $\rho$, we think of $\GAP(\rho)$ as the natural ensemble of wave functions with density matrix $\rho$; for a more detailed description, see Section~\ref{sec:background}.  

We prove quantitative bounds asserting that for any $\rho$ with small eigenvalues (so $\rho$ is far from pure) and $\GAP(\rho)$-most $\psi\in\SSS(\Hilbert_S)$,
\be\label{gencan1}
\rho_a^\psi \approx \tr_b \rho\,.
\ee

Some reasons for seeking this generalization are: first, that it is mathematically natural; second, that in situations in which we can ask what the actual distribution of $\psi$ is (more detail later), this distribution might not be uniform; third, that it shows that the sharp cut-off of energies involved in the definition of $\Hilbert_\mc$ actually plays no role; and finally, that it informs and extends our picture of the equivalence of ensembles. A more detailed discussion of these reasons is given in Section~\ref{sec:moti}. 

As a direct consequence of generalized canonical typicality let us mention that, just as canonical typicality implies that for most pure states $\psi\in \SSS(\Hilbert_S)$ 
the entanglement entropy $-\tr(\rho_a^\psi \log \rho_a^\psi)$ has nearly the maximal value $\log d_a$ with $d_a=\dim\Hilbert_a$ \cite{hayden2006} (because $\rho_a^\psi\approx \tr_b I_S/D = I_a/d_a$ with $I$ the identity operator and $D=d_ad_b=\dim\Hilbert_S$), generalized canonical typicality implies that $\GAP(\rho)$-typical $\psi$ have entanglement entropy $-\tr(\rho_a^\psi \log \rho_a^\psi)\approx -\tr(\rho_a\log\rho_a)$ with $\rho_a=\tr_b\rho$. 

Since different probability distributions over the unit sphere in a Hilbert space $\Hilbert$ can have the same density matrix, and since the outcome statistics of any experiment depend only on the density matrix, it may seem at first irrelevant to even consider distributions over $\SSS(\Hilbert)$. However, for example, an ensemble of spins prepared so that (about) half are in state $\bigl| \uparrow\bigr\rangle$ and the others in $\bigl| \downarrow\bigr\rangle$ is physically different from a uniform ensemble over $\SSS(\CCC^2)$, even though both ensembles have density matrix $\tfrac{1}{2}I$. Likewise, for an ensemble of particles prepared by taking them from a system in thermal equilibrium, the wave function is GAP-distributed (see Section~\ref{sec:background}). More basically, probability distributions play a key role in any \emph{typicality statement}, i.e., one saying that some condition is satisfied by most wave functions---``most'' relative to a certain distribution; such a statement cannot be formulated in terms of density matrices.

We note that the generalization of canonical typicality from   uniform measures to  GAP measures is not straightforward. First, not \emph{every} measure $\mu$ over $\SSS(\Hilbert_S)$ with a given density matrix $\rho$ with small eigenvalues makes it true that for $\mu$-most $\psi$, $\rho_a^\psi \approx \tr_b \rho$. We give a counter-example in Remark~\ref{rmk:counter} in Section~\ref{sec: main result}. 
Second, if $\rho$ is not close to a multiple of a projection, then GAP$(\rho)$ is far from uniform; specifically, its density will at some points be larger than at others by a factor like $\exp(D)$ (see Remark~\ref{rmk:other-distr}). And third, even measures close to uniform (for example the von Mises-Fisher distribution, see again Remark~\ref{rmk:other-distr}), can fail to satisfy generalized canonical typicality.

In this paper, we prove  generalized canonical typicality in rigorous form by providing error bounds for \eqref{gencan1} at any desired confidence level  that is implicit in the word ``most,'' see  Theorem~\ref{thm: GCT exp bound} and Theorem~\ref{thm:1}.  Compared to the known error bounds based on $u_R$, we can prove more or less the same bounds with $d_R$ replaced by the reciprocal of the largest eigenvalue of $\rho$,
\be
\frac{1}{p_{\max}}:= \frac{1}{\|\rho\|}
\ee
with $\|\cdot\|$ the operator norm. Thus, the approximation is good as soon as no single direction contributes too much to $\rho$. In particular, for $\rho=\rho_R$, our results essentially reproduce the known error bounds. As one central part of our proof, we also establish a variant of L\'evy's lemma \cite{levy1951,MS86,ledoux2001} (a statement about the concentration of measure on a high-dimensional sphere, see below) for GAP measures instead of the uniform measure (Theorem \ref{thm: LM for GAP}). In particular, our version of L\'evy's lemma holds also on infinite dimensional spheres, where the uniform measure does not exist.

Furthermore, we provide several corollaries. The first one shows that for any observable and $\GAP(\rho)$-most $\psi$, the coarse-grained Born distribution is near a $\psi$-independent one (see Remark~\ref{rmk:LevyB} in Section~\ref{sec:statements} for discussion). The second arises from evolving the observable with time and provides a form of  \emph{dynamical typicality} \cite{BG09}, which means that for typical initial wave functions, the time evolution ``looks'' the same; here, ``typical'' refers to the $\GAP(\rho)$ distribution, and ``look'' (which in \cite{TTV22-physik} meant the macroscopic appearance) refers to the Born distribution for the observable considered. In fact, Corollary \ref{cor: dyn typ} even shows that the relevant kind of closeness (to a $t$-dependent but $\psi$-independent distribution) holds jointly for most $t\in [0,T]$. As a further variant (Corollary \ref{cor: dyn can typ}), dynamical typicality also holds when ``look'' refers to $\rho_a^\psi$. Put differently, 
the statement here is that for $\GAP(\rho)$-most $\psi$ and most $t\in[0,T]$,
\be
\rho_a^{\psi_t} \approx \tr_b \rho_t\,,
\ee
where $\psi_t=U_t \,\psi$ and $\rho_t=U_{t} \, \rho \, U_{t}^*$ for an arbitrary unitary time evolution $U_{t}$ (allowing for time-dependent $H_t$). In the original version
of canonical typicality, one particularly considers for $\rho_R$ the micro-canonical density matrix $\rho_\mc$ for a fixed Hamiltonian $H$, for which the time evolution yields nothing interesting because it is invariant anyway; but if we consider arbitrary $\rho$, then $\rho$ can evolve in a non-trivial way even for fixed $H$. 

Another corollary (Corollary \ref{cor:cond}) concerns the conditional wave function $\psi_a$ of $a$ (which is the natural notion of the subsystem wave function for $a$, see Section~\ref{sec:background} for the definition): It is known that if $d_R$ is large, then for $u_R$-most $\psi$ and most bases of $\Hilbert_b$, the Born distribution of $\psi_a$ is approximately $\GAP(\tr_b \rho_R)$. 
We generalize this statement as follows: if $d_b$ is large and $\rho$ has small eigenvalues, then for $\GAP(\rho)$-most $\psi$ and most bases of $\Hilbert_b$, the Born distribution of $\psi_a$ is approximately $\GAP(\tr_b \rho)$. 

The results of this paper can also be regarded as a variant of \emph{equivalence-of-ensembles} in quantum statistical mechanics, i.e., as a new instance of the well-known phenomenon in statistical mechanics that it does not make a big difference whether we use the micro-canonical ensemble or the canonical one (for suitable $\beta$) or another equilibrium ensemble. Indeed, the uniform distribution over the unit sphere in a micro-canonical subspace can be regarded as a quantum analog of the micro-canonical distribution in classical statistical mechanics, and the GAP measure associated with a canonical density matrix as a quantum analog of the canonical distribution; see also Remark \ref{rmk:equiv} in Section \ref{sec:discussion}. 

Our results on generalized canonical typicality \eqref{gencan1} provide two kinds of error bounds based on two strategies of proof. They are roughly analogous to the following two bounds on the probability that a random variable $X$ deviates from its expectation $\EEE X$ by more than $n$ standard deviations $\sqrt{\Var(X)}$: First, the Chebyshev inequality yields the bound $1/n^2$, which is valid for \emph{any} distribution of $X$. Second, the \emph{Gaussian} distribution has very light tails, so if $X$ is Gaussian distributed, then the aforementioned probability is actually smaller than $e^{-n}$ (a type of bound known as a Chernoff bound), so the Chebyshev bound would be very coarse. Likewise, the two kinds of bound we provide are based, respectively, on the Chebyshev inequality and the Chernoff bound (in the form of L\'evy's lemma). The former is polynomial in $p_{\max}$, the latter exponential as in $e^{-1/p_{\max}}$. For the original statement of canonical typicality (using $u_R$), the Chebyshev-type bounds were first given by Sugita \cite{Sugita07}, the Chernoff-type bounds by Popescu et al.\ \cite{PSW05}. Our proof of the Chebyshev-type bounds makes heavy use of results of Reimann~\cite{Reimann08}.

A version of L\'evy's lemma was also established for the mean-value ensemble on a finite-dimensional Hilbert space $\Hilbert$ \cite{MGE}. This is the uniform distribution on $\SSS(\Hilbert)$ restricted to the set $\{\psi\in\SSS(\Hilbert):\langle\psi|A|\psi\rangle =a\}$ for a given observable $A$ and a value $a$ satisfying further conditions. However, as also the authors of \cite{MGE} point out, the physical relevance of this ensemble remains unclear. Also dynamical typicality has been established for the mean-value ensemble, see \cite{RG20} for an overview.

The remainder of this paper is organized as follows: In Section~\ref{sec:motiback}, we elucidate the motivation and background. In Section~\ref{sec: main result}, we formulate and discuss our results. In Section~\ref{sec: proofs}, we provide the proofs. 
In Section~\ref{sec: concl}, we conclude.

\section{Motivation and Background}
\label{sec:motiback}

\subsection{Motivation}
\label{sec:moti}

Canonical typicality is often (rightly) used as a justification and derivation of the canonical density matrix $\rho_\can$  
from something simpler, viz., from the uniform distribution over the unit sphere in an appropriate subspace $\Hilbert_\mc$. So it may appear surprising that here we consider other distributions instead of the uniform one.
That is why we give some elucidation in this section.

The uniform distribution for $\psi$ can appear in either of two roles: as a measure of probability or a measure of typicality. What is the difference? The concept of probability, in the narrower sense used here, refers to a physical situation that occurs many times or can be made to occur many times, so that one can meaningfully speak of the empirical distribution of part of the physical state, such as $\psi$, over the ensemble of trials. In contrast, the concept of typicality, in the sense used here, refers to a hypothetical ensemble and applies also in situations that do not occur repeatedly, such as the universe as a whole, or occur at most a few times; it defines what a typical solution of an equation or theory looks like, or the meaning of ``most.'' Typicality is used in defining what counts as thermal equilibrium (e.g., \cite{Gol01} and references therein), but also in certain laws of nature such as the past hypothesis (a proposed law about the initial micro-state of the universe serving as the basis of the arrow of time; see  \cite[Sec.~5.7]{GLTZ20} for a formulation in terms of typicality). 
Moreover, it plays a key role for the \emph{explanation} of certain phenomena by showing that they occur in ``most'' cases.

The mathematical statements apply regardless of whether we think of the measure as probability or typicality. If we use $u_\mc$ as probability, then the question naturally arises whether the actual distribution of $\psi$ is uniform, and generalizations to other measures are called for. The GAP measures are then particularly relevant, not just as a natural choice of measures, but also because they arise as the thermal equilibrium distribution of wave functions.

But also for $u_\mc$ as a measure of typicality, which is perhaps the more important or more widely used case, the generalization is relevant. The way we practically think of canonical typicality is that if $\psi$ is just ``any old'' wave function of $S$, then $\rho_a^\psi$ will be approximately canonical. But the theorem of original canonical typicality (using $u_\mc$) would require that the coefficients of $\psi$ relative to energy levels of $S$ outside of the micro-canonical energy interval $[E-\Delta E, E]$ are exactly zero, which of course goes against the idea of $\psi$ being ``any old'' $\psi$. Of course, we would expect that the canonicality of $\rho_a^\psi$ does not depend much on whether other coefficients are exactly zero or not. And the theorems in this paper show that this is correct! They show that if the $\rho$ we start from is not $\rho_\mc$, then the crucial part of the reasoning (the typical-$\psi$ part) still goes through, just with corrections reflected in the deviation of $\tr_b \rho$ from $\tr_b \rho_\mc$ (which, by the way, will be minor for $\rho=\rho_\can$ with appropriate inverse temperature $\beta$). More generally, the theorems in this paper prove the robustness of canonical typicality towards changes in the underlying measure.

The results of this paper also show that when computing the typical reduced state $\rho_a^\psi$ for ``any old'' $\psi$, we can start from various choices of $\rho$ of the whole, as long as they yield approximately the same $\tr_b \rho$. The results thus provide researchers with a new angle of looking at canonical typicality: it is OK to imagine ``any old'' $\psi$, and not crucial to start from $u_\mc$.

More generally, our results are a kind of equivalence-of-ensembles statement in the quantum case, and thus add to the picture consisting of various senses in which different thermal equilibrium ensembles are practically equivalent, in this case with ``ensemble'' meaning ensemble of wave functions (i.e., measures over the unit sphere). Again, it plays a role that the GAP measures arise as the thermal equilibrium distribution of wave functions, and thus as an analog of the canonical ensemble in classical statistical mechanics. 
This means also that if $\psi$ is itself a conditional wave function, a case in which we know \cite{GLTZ06b,GLMTZ15} that (for high dimension and most orthonormal bases) $\psi$ is approximately GAP distributed, then canonical typicality applies. 
A special application concerns the thermodynamic limit, for which it is desirable to think of the conditional wave function $\psi_A$ of a region $A$ in 3-space as obtained from $\psi_{A'}$ for a larger $A'\supset A$, which in turn is obtained from $\psi_{A''}$ for an even larger $A'' \supset A'$, and so on. Then for each step, $\psi_{A'}$ (etc.) is GAP-distributed.

By the way, the results here also have the converse implication of supporting the naturalness of the GAP measures. One might even consider a version of the past hypothesis that uses, as the measure of typicality, a GAP measure instead of the uniform distribution over the unit sphere in some subspace of the Hilbert space of the universe.

\subsection{Mathematical Setup and Some Background}
\label{sec:background}

One often considers the uniform distribution over the unit sphere in a subspace $\Hilbert'$ of a system's Hilbert space $\Hilbert$. While this distribution is associated with a density matrix given by the normalized projection to $\Hilbert'$, the measure $\GAP(\rho)$ forms an analog of it for an \emph{arbitrary} density matrix. We now give its definition and that of some other mathematical concepts we use. 

Throughout this paper, all Hilbert spaces $\Hilbert$ are assumed to be separable, i.e., to have either a finite or a countably infinite orthonormal basis (ONB). The unit sphere $\SSS(\Hilbert)$ is always equipped with the Borel $\sigma$-algebra.

\paragraph{Density matrix.}
To any probability measure $\mu$ on $\mathbb{S}(\Hilbert)$ we can associate a density matrix $\rho_\mu$ by
\begin{align}
    \rho_\mu := \int_{\mathbb{S}(\Hilbert)} \mu(d\psi) |\psi\rangle \langle\psi|
\end{align}
(which always exists \cite[Lemma 1]{Tum20}). Note that if $\mu$ has mean zero then $\rho_\mu$ is the covariance matrix of $\mu$. It will turn out for $\mu = \GAP(\rho)$ that $\rho_{\mu} = \rho$. 

\paragraph{GAP measure.}
The measure $\GAP(\rho)$ was first introduced for finite-dimensional $\Hilbert$ by Jozsa, Robb, and Wootters \cite{JRW94}, who named it \textit{Scrooge measure}.\footnote{Named after Ebenezer Scrooge, a fictional character in and the protagonist of Charles Dickens' novella \textit{A Christmas Carol} (1843) who is known for being very stingy. As Jozsa et al. argue, the gap measure is in some sense the most spread-out distribution on $\mathbb{S}(\Hilbert)$ with density matrix $\rho$ and they choose the name ``Scrooge measure'' because the measure is ``particularly stingy with its information.''} 
Among several equivalent definitions \cite{GLMTZ15}, we use the following one based on Gaussian measures. Let $\Hilbert$ be separable and $\rho$ a density matrix on $\Hilbert$ with eigenvalues $p_n$ and eigen-ONB $(|n\rangle)_{n=1\ldots\dim\Hilbert}$, i.e.,
\begin{align}
    \rho = \sum_n p_n |n\rangle\langle n|.
\end{align}
A complex-valued random variable $Z$ will be said to be Gaussian with mean $z\in\CCC$ and variance $\sigma^2>0$ if and only if $\mathrm{Re}\,Z$ and $\mathrm{Im}\,Z$ are independent real Gaussian random variables with mean $\mathrm{Re}\,z$ respectively $\mathrm{Im}\,z$ and each with variance $\sigma^2/2$. Let $(Z_n)_{n=1\ldots\dim\Hilbert}$ be a sequence of independent $\mathbb{C}$-valued Gaussian random variables with mean 0 and variances
\begin{align}
    \mathbb{E}|Z_n|^2 = p_n.
\end{align}
Then we define $\mathrm{G}(\rho)$ to be the distribution of the random vector 
\begin{align}
    \Psi^{\mathrm{G}} := \sum_n Z_n |n\rangle,
\end{align}
i.e., the Gaussian measure on $\Hilbert$ with mean 0 and covariance operator $\rho$. (It is known \cite{Pro56} in general that for every $\phi\in\Hilbert$ and every positive trace-class operator $\rho$ there exists a unique Gaussian measure on $\Hilbert$ with mean $\phi$ and covariance operator $\rho$.) 
Note that
\begin{align}\label{EGrho1}
    \mathbb{E}\|\Psi^{\mathrm{G}}\|^2 = \sum_n \mathbb{E}|Z_n|^2 = \sum_n p_n =1,
\end{align}
which also shows that $\|\Psi^\mathrm{G}\|<\infty$ almost surely,
but in general $\|\Psi^{\mathrm{G}}\|\neq 1$, i.e., $\mathrm{G}(\rho)$ is not a distribution on the sphere $\mathbb{S}(\Hilbert)$. Projecting the measure $\mathrm{G}(\rho)$ to the sphere $\mathbb{S}(\Hilbert)$ would not result in a measure with density matrix $\rho$; therefore we first  adjust the density of $\mathrm{G}(\rho)$ and define the \textit{adjusted Gaussian measure} $\GA(\rho)$ on $\Hilbert$   as the measure that has density $\|\psi\|^2$ relative to $\mathrm{G}(\rho)$, i.e.,
\begin{align}
    \GA(\rho)(d\psi) := \|\psi\|^2 \, \mathrm{G}(\rho)(d\psi),
\end{align}
which is a probability measure by virtue of \eqref{EGrho1}.
It will turn out below that  $\|\psi\|^2$ is the right factor to ensure that $\rho_{\GAP(\rho)}=\rho$.

Let $\Psi^{\GA}$ be a $\GA(\rho)$-distributed random vector. We define $\GAP(\rho)$ to be the distribution of
\begin{align}
    \Psi^{\GAP} := \frac{\Psi^{\GA}}{\|\Psi^{\GA}\|}.
\end{align}
Note that the denominator is almost surely non-zero (because every 1-element subset of $\Hilbert$ has $\mathrm{G}(\rho)$-measure 0 because every $Z_n$ has continuous distribution).
With this we find that indeed
\begin{subequations}
\begin{align}
    \rho_{\GAP(\rho)} &= \int_{\mathbb{S}(\Hilbert)} \GAP(\rho)(d\psi) \,|\psi\rangle\langle\psi|\\ 
    &= \int_{\Hilbert} \GA(\rho)(d\psi) \frac{1}{\|\psi\|^2} |\psi\rangle\langle\psi| \\
    &= \int_{\Hilbert} \mathrm{G}(\rho)(d\psi) ~ |\psi\rangle\langle\psi| = \rho.
\end{align}
\end{subequations}
See \cite{Tum20} for a complete proof of existence and uniqueness of $\GAP(\rho)$ for every density matrix $\rho$.

$\GAP(\rho)$ can also be characterized as the minimizer of the ``accessible information'' functional under the constraint that its density matrix is $\rho$ \cite{JRW94}. If all eigenvalues of $\rho$ are positive and $D:=\dim\Hilbert<\infty$, then $\GAP(\rho)$ possesses a density relative to the uniform distribution $u$ on $\mathbb{S}(\Hilbert)$ \cite{GLTZ06b,GLMTZ15},
\begin{align}\label{GAPdensity}
    \GAP(\rho)(d\psi) = \frac{D}{\det \rho}\langle\psi|\rho^{-1}|\psi\rangle^{-D-1}\, u(d\psi)\,.
\end{align}

It was argued in \cite{GLTZ06b} and mathematically justified in \cite{GLMTZ15} that GAP measures describe the thermal equilibrium distribution of the (conditional) wave function of the system if $\rho$ is a canonical density matrix. 

It was also shown in \cite{GLTZ06b} that $\GAP$ is equivariant under unitary transformations, i.e., for all density matrices $\rho$, all unitary operators  $U$ on  $\Hilbert$, and all measurable sets $M\subset \SSS(\Hilbert)$ one has
\begin{equation}\label{equivariant}
\GAP(U\rho U^*) (M) =  \GAP(\rho)(UM)\,.
\end{equation}
In particular, $\GAP$ is equivariant under   unitary time evolution, and,
as a consequence, $\GAP(\rho_t)$ is the relevant distribution on $\SSS(\Hilbert)$ whenever the system starts in thermal equilibrium with respect to some Hamiltonian $H_0$ and evolves according to any Hamiltonian $H_t$ at later times.
More generally, the results of \cite{GLMTZ15} (and their extension in Corollary~\ref{cor:cond}) show that if a system has density matrix $\rho$ arising from entanglement, then its (conditional) wave function (relative to a typical basis, see below) is asymptotically GAP-distributed. Thus, GAP is the correct distribution in many practically relevant cases. On top of that, when we have no further restriction than that the density matrix is $\rho$, then the natural concept of a ``typical $\psi$'' should refer to the most spread-out distribution compatible with $\rho$, which is $\GAP(\rho)$. 

Finally, let us remark that $\GAP(\rho)$ is also invariant under global phase changes, i.e., $\GAP(\rho)(M) =\GAP(\rho)(e^{i\varphi }M)$ for all measurable $M\subset\SSS(\Hilbert)$ and  $\varphi\in\RRR$. Hence, $\GAP(\rho)$ naturally also defines a probability distribution on the projective space of complex rays in $\Hilbert$ and all results presented in the following can be equivalently formulated for rays instead of vectors.

\begin{rmk}\label{rmk:EEEtrb}
In terms of $\rho_\mu$, we can easily formulate and prove a weaker version of our main result \eqref{gencan1}; this version is related to \eqref{gencan1} in more or less the same way as the statement that in a certain population, the \emph{average} height is 170 cm, is related to the stronger statement that in that population, \emph{most} people are 170 cm tall. The weaker version asserts that the \emph{average} of $\rho^\psi_a$ over $\psi$ using the $\GAP(\rho)$ distribution is equal to $\tr_b \rho$, whereas the statement about \eqref{gencan1} was that \emph{most} $\psi$ relative to $\GAP(\rho)$ have $\rho^\psi_a$ (approximately) equal to $\tr_b \rho$. On the other hand, the statement about the average is stronger because it asserts, not \emph{approximate} equality, but \emph{exact} equality. On top of that, the average statement is not limited to the GAP measure but holds for \emph{any} probability measure $\mu$. Here is the full statement: \emph{for separable $\Hilbert=\Hilbert_a \otimes \Hilbert_b$, any probability measure $\mu$ on $\SSS(\Hilbert)$, and a random vector $\psi$ with distribution $\mu$,} 
\be\label{mu-avg-rhopsi}
\EEE_\mu \rho_a^\psi = \tr_b \rho_\mu \,.
\ee
Indeed, $\tr_b$ commutes with $\mu$-integration,\footnote{Since we could not find a good reference for this fact, we have included a proof in Section~\ref{sec:EEEtrb}.} so
\begin{subequations}
\begin{align}
    \EEE_\mu \rho_a^\psi 
    &= \int_{\SSS(\Hilbert)} \mu(d\psi) \, \tr_b |\psi\rangle\langle\psi|\\
    &= \tr_b\int_{\SSS(\Hilbert)} \mu(d\psi) \, |\psi\rangle\langle\psi| \\[2mm]
    &= \tr_b \rho_\mu \,.
\end{align}
\end{subequations}
\hfill$\diamond$
\end{rmk}

\paragraph{Norms.}
The distance between   two density matrices will be measured in the \textit{trace norm}
\begin{align}
    \|M\|_{\tr} := \tr |M| = \tr\sqrt{M^*M},
\end{align}
where $M^*$ denotes the adjoint operator of $M$. 
If $M$ can be diagonalized through an orthonormal basis (ONB), then $\|M\|_{\tr}$ is the sum of the absolute eigenvalues. We will also sometimes use the \emph{operator norm}
\be
\|M\| := \sup_{\|\psi\|=1} \|M\psi\|\,,
\ee
which, if $M$ can be diagonalized through an ONB, is the largest absolute eigenvalue.

\paragraph{Purity.}
For a density matrix $\rho$, its \emph{purity} is defined as $\tr \rho^2$. In terms of the spectral decomposition $\rho=\sum_n p_n |n\rangle\langle n|$, the purity is $\tr \rho^2=\sum_n p_n^2$, which can be thought of as the average size of $p_n$. In particular, the purity is positive and $\leq 1$; it is $=1$ if and only if $\rho$ is pure, i.e., a 1d projection; for a normalized projection $\rho_R=P_R/d_R$, the purity is $1/d_R$; conversely, $1/$purity can be thought of as the effective number of dimensions over which $\rho$ is spread out. It also easily follows that
\be
\tr \rho^2 \leq \|\rho\| \leq \sqrt{\tr\rho^2} \leq \sqrt{\|\rho\|}
\ee
because $p_n^2 \leq p_n \|\rho\|$, and if $p_{n_0}$ is the largest eigenvalue, then $p_{n_0}^2 \leq \sum_n p_n^2$ because all other terms are $\geq 0$. In words, the average $p_n$ is no greater than the maximal $p_n$, which is bounded by the square root of the average $p_n$ (and the square root of the maximal $p_n$).

\paragraph{Conditional wave function.} For $\Hilbert=\Hilbert_a\otimes\Hilbert_b$, an ONB $B=(|m\rangle_b)_{m=1\ldots\dim\Hilbert_b}$ of $\Hilbert_b$, and $\psi\in\SSS(\Hilbert)$, the conditional wave function $\psi_a$ \cite{DGZ92,GN99,GLTZ06b} of system $a$ is a random vector in $\Hilbert_a$ that can be constructed by choosing a random one of the basis vectors $|m\rangle_b$, let us call it $|M\rangle_b$, with the Born distribution 
\be\label{Born1}
\PPP(M=m) = \bigl\| {}_b\scp{m}{\psi}\bigr\|^2_a \,,
\ee
taking the partial inner product of $|M\rangle_b$ and $\psi$, and normalizing:
\be\label{psiadef}
\psi_a := \frac{{}_b\scp{M}{\psi}}{\|{}_b\scp{M}{\psi}\|_a}\,.
\ee
(Note that the event that $\|{}_b\scp{M}{\psi}\|_a=0$ has probability 0 by \eqref{Born1}. 
In the context of Bohmian mechanics, the expression ``conditional wave function'' refers to the position basis and the Bohmian configuration of $b$ \cite{DGZ92}; but for our purposes, we can leave it general.)

We can also think of $\psi_a$ as arising from $\psi$ through a quantum measurement with eigenbasis $B$ on system $b$, which leads to the collapsed quantum state $\psi_a \otimes |M\rangle_b$. Correspondingly, we call the  distribution of $\psi_a$ in $\SSS(\Hilbert_a)$ the \emph{Born distribution of $\psi_a$} and denote it by $\mathrm{Born}_a^{\psi,B}$. However, when considering $\psi_a$, we will not assume that any observer actually, physically carries out such a quantum measurement; rather, we use $\psi_a$ as a theoretical concept of a
wave function associated with the subsystem $a$. It is related to the reduced density matrix $\rho^\psi_a$ in a way similar to how a conditional probability distribution is to a marginal distribution,
\be
\EEE |\psi_a\rangle \langle \psi_a| = \rho_a^\psi \,.
\ee
$\psi_a$ is also related to the GAP measure, in fact in two ways. First, when we average $\mathrm{Born}^{\psi,B}_a$ over all ONBs $B$ (using the uniform distribution corresponding to the Haar measure), then we obtain $\GAP(\rho_a^\psi)$ \cite[Lemma 1]{GLMTZ15}. Put differently, if we think of both $M$ and $B$ as random and $\psi_a$ thus as doubly random, then its (marginal) distribution is $\GAP(\rho_a^\psi)$;  
put more briefly, $\GAP(\rho_a^\psi)$ is the distribution of the collapsed pure state in $a$ after a purely random quantum measurement in $b$ on $\psi$. Second, if $d_b$ is large, then even conditionally on a single given $B$, the distribution of $\psi_a$ is close to a GAP measure for most $B$ and most $\psi$ according to a GAP measure on $\Hilbert_a\otimes \Hilbert_b$; this is the content of Corollary~\ref{cor:cond} below. 

\section{Main Results} 
\label{sec: main result}

In this section, we present and discuss our main results about generalized canonical typicality. In the following, we use the notation $\mu(f)$ for the average of the function $f$ under the measure $\mu$,
\be\label{mufdef}
\mu(f) := \int \mu(d\psi) \, f(\psi) \,.
\ee
Note that, by \eqref{mu-avg-rhopsi},
\be\label{GAPrhopsi}
\GAP(\rho)(\rho_a^\psi)=\tr_b \rho\,.
\ee
The statement of our generalized canonical typicality  differs in that it concerns \emph{approximate} equality and holds for the \emph{individual} $\rho_a^\psi$, not only for its average.

\subsection{Statements}
\label{sec:statements}

We first formulate our main theorem on canonical typicality for GAP measures and the underlying variant of L\'evy's lemma for GAP measures. We then give a list of further  consequences  of this generalized version of L\'evy's lemma, including  results on dynamical typicality and the fact that the typical Born distribution of conditional wave functions  is itself a GAP measure. 
At the end of this section we also state a slightly weaker version of our main theorem that is not based on L\'evy's lemma but instead allows for a rather elementary proof based on the Chebyshev inequality. Finally, the known bounds for uniformly distributed $\psi$ will be stated in Remark~\ref{rmk:comparison} in Section~\ref{sec:discussion} for comparison. 

\begin{thm}[Generalized canonical typicality, exponential bounds]\label{thm: GCT exp bound}
     Let $\Hilbert_a$ and $\Hilbert_b$ be Hilbert spaces with $\Hilbert_a$ having finite dimension $d_a$ and $\Hilbert_b$ being separable, 
    and let $\rho$ be a density matrix on $\Hilbert = \Hilbert_a \otimes \Hilbert_b$. Then for every $\delta>0$,
    \begin{align}\label{expbounddelta}
        \GAP(\rho)\Biggl\{\psi\in\mathbb{S}(\Hilbert): \bigl\|\rho_a^\psi-\tr_b\rho\bigr\|_{\tr}\leq c d_a \sqrt{\ln\left(\frac{12d_a^2}{\delta}\right)\|\rho\|}\Biggr\} \geq 1-\delta,
    \end{align}
    where $c=48\pi$.
\end{thm}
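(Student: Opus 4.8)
The plan is to rewrite the trace-norm deviation $\|\rho_a^\psi-\tr_b\rho\|_{\tr}$ as a supremum of scalar expectation-value functions on the sphere, apply the GAP version of L\'evy's lemma (Theorem~\ref{thm: LM for GAP}) to each of them, and then control the supremum by a union bound over a finite net, exploiting that $\Hilbert_a$ has only dimension $d_a$.

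For the reduction, I would use the dual description of the trace norm on the $d_a^2$-dimensional real space of Hermitian operators on $\Hilbert_a$: for Hermitian $\sigma$ one has $\|\sigma\|_{\tr}=\sup\{|\tr(\sigma X)|:X^*=X,\ \|X\|\le1\}$, the supremum being attained at $X=\mathrm{sgn}(\sigma)$. Taking $\sigma=\rho_a^\psi-\tr_b\rho$ and using $\tr(\rho_a^\psi X)=\langle\psi|(X\otimes I_b)|\psi\rangle$ together with $\tr((\tr_b\rho)X)=\tr\bigl(\rho\,(X\otimes I_b)\bigr)=\GAP(\rho)(f_X)$, where $f_X(\psi):=\langle\psi|(X\otimes I_b)|\psi\rangle$ and the last equality follows from \eqref{GAPrhopsi}, I obtain
\be\label{planreduction}
\bigl\|\rho_a^\psi-\tr_b\rho\bigr\|_{\tr}=\sup_{X^*=X,\,\|X\|\le1}\bigl|f_X(\psi)-\GAP(\rho)(f_X)\bigr|\,.
\ee
Each $f_X$ is real-valued, bounded, and Lipschitz on $\SSS(\Hilbert)$ with constant $2\|X\otimes I_b\|\le2$ (a two-line estimate, $|f_X(\psi)-f_X(\phi)|\le2\|X\|\,\|\psi-\phi\|$).

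Next I would discretize: the Hermitian $X$ on $\Hilbert_a$ with $\|X\|\le1$ form the unit ball of a $d_a^2$-dimensional real normed space, so for $\eta\in(0,1)$ there is an $\eta$-net $\mathcal{N}$ in operator norm with $|\mathcal{N}|\le(1+2/\eta)^{d_a^2}$; taking $\eta=\tfrac12$ gives $|\mathcal{N}|\le5^{d_a^2}$. Approximating the maximizer in \eqref{planreduction} by its nearest net point and using $|\tr(\sigma(X-X'))|\le\|\sigma\|_{\tr}\|X-X'\|$ yields $\|\rho_a^\psi-\tr_b\rho\|_{\tr}\le2\max_{X\in\mathcal{N}}|f_X(\psi)-\GAP(\rho)(f_X)|$. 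Then I would apply Theorem~\ref{thm: LM for GAP} to each of the $|\mathcal{N}|$ functions $f_X$: since each has Lipschitz constant $\le2$ and $\GAP$-mean $\GAP(\rho)(f_X)$, it gives a Gaussian tail bound of the form $\GAP(\rho)\{|f_X-\GAP(\rho)(f_X)|\ge\varepsilon\}\le C_1\exp(-C_2\varepsilon^2/\|\rho\|)$. (If Theorem~\ref{thm: LM for GAP} is phrased around a median rather than the mean, one first absorbs the $O(\sqrt{\|\rho\|})$ median--mean gap, which is of the same order as the fluctuation scale, into the constants.) A union bound over $\mathcal{N}$ together with the choice of $\varepsilon$ making $C_1|\mathcal{N}|\exp(-C_2\varepsilon^2/\|\rho\|)=\delta$, i.e.\ $\varepsilon$ of order $\sqrt{\|\rho\|\,(d_a^2+\ln(1/\delta))}$, then produces, with $\GAP(\rho)$-probability at least $1-\delta$, a bound of the form $2\varepsilon$, which after the elementary estimate $d_a^2\ln5+\ln(C_1/\delta)\le d_a^2\ln(12d_a^2/\delta)$ (valid for $d_a\ge2$; for $d_a=1$ the left-hand side of \eqref{expbounddelta} vanishes since then $\rho_a^\psi=\tr_b\rho=1$) has exactly the claimed shape $c\,d_a\sqrt{\ln(12d_a^2/\delta)\,\|\rho\|}$.

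The only genuinely delicate step is the bookkeeping of constants. One has to combine the explicit constants coming out of Theorem~\ref{thm: LM for GAP} (which descend from a Gaussian concentration estimate --- the source of the $\pi$ in $c=48\pi$), the factor $1/(1-\eta)$ and the cardinality $(1+2/\eta)^{d_a^2}$ of the net (with $\eta$ chosen to balance the two), the Lipschitz factor $2$, and the logarithm-coarsening above, so as to arrive at precisely $c=48\pi$. Conceptually, however, the whole argument is just ``L\'evy's lemma for the functions $f_X$ plus a union bound over a $d_a^2$-dimensional net,'' and the possible infinite-dimensionality of $\Hilbert_b$ causes no difficulty, since the discretization lives entirely in the finite-dimensional factor $\Hilbert_a$ while Theorem~\ref{thm: LM for GAP} holds for GAP measures on any separable Hilbert space.
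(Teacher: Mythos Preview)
Your approach is sound and yields a bound of the stated shape, but it is genuinely different from the paper's argument, and the claim that the bookkeeping lands on exactly $c=48\pi$ is not quite right.

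The paper does not discretize via an $\varepsilon$-net. Instead it fixes an explicit Hilbert--Schmidt orthonormal basis $(U_a^j)_{j=0}^{d_a^2-1}$ of \emph{unitaries} on $\Hilbert_a$ with $\tr_a(U_a^{j*}U_a^k)=d_a\delta_{jk}$, applies the complex-valued version of Theorem~\ref{thm: LM for GAP} (Remark~\ref{rmk:LevyfC}) to each of the $d_a^2$ functions $\psi\mapsto\tr_a(U_a^{j*}\rho_a^\psi)=\langle\psi|U_a^{j*}\otimes I_b|\psi\rangle$, and then uses the Parseval identity $\|\sigma\|_2^2=d_a^{-1}\sum_j|\tr_a(U_a^{j*}\sigma)|^2$ together with $\|\sigma\|_{\tr}\le\sqrt{d_a}\,\|\sigma\|_2$. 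The union bound runs over exactly $d_a^2$ terms, which is precisely the origin of the prefactor $12d_a^2$ in \eqref{expbounddelta} and of $c=48\pi=\sqrt{8/C}$.

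Your route via the dual description of $\|\cdot\|_{\tr}$ and a net of Hermitians is more conceptual and has the pleasant feature that the $f_X$ are real-valued, so you can use Theorem~\ref{thm: LM for GAP} directly without the loss in Remark~\ref{rmk:LevyfC}. The price is that the union bound runs over $|\mathcal N|\le 5^{d_a^2}$ rather than $d_a^2$ terms, and the net approximation costs an extra factor $1/(1-\eta)=2$. With $\eta=\tfrac12$ and $C=\tfrac{1}{288\pi^2}$ one ends up with $c=\sqrt{16/C}=48\pi\sqrt{2}$ after forcing the result into the form \eqref{expbounddelta}; your inequality $d_a^2\ln 5+\ln(C_1/\delta)\le d_a^2\ln(12d_a^2/\delta)$ does hold for $d_a\ge2$, but it does not absorb the factor $2$ coming from the net. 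So you prove the theorem with a constant of the same order but not with $c=48\pi$ exactly. (The parenthetical about a median--mean gap is unnecessary: Theorem~\ref{thm: LM for GAP} is already stated around the mean.)
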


\begin{rmk}\label{rmk: exp bound}
The relation \eqref{expbounddelta} can equivalently be formulated as a bound on the confidence level, given the allowed deviation: {\it For every $\varepsilon\geq 0$,}
\begin{align}\label{expboundepsilon}
    \GAP(\rho)\Bigl\{\psi\in\mathbb{S}(\Hilbert): \bigl\|\rho_a^\psi-\tr_b\rho\bigr\|_{\tr} > \varepsilon\Bigr\} \leq 12d_a^2 \exp\left(-\frac{\tilde{C}\varepsilon^2}{d_a^2 \|\rho\|}\right),
\end{align}
{\it where} $\tilde{C}=\frac{1}{2304\pi^2}$. This form makes it visible why we call Theorem~\ref{thm: GCT exp bound} an ``exponential bound'': because the bound on the probability of too large a deviation is exponentially small in $1/\|\rho\|$. In contrast, the bound \eqref{polyboundepsilon} is polynomially small in $\tr\rho^2$.\hfill$\diamond$
\end{rmk}

A key tool for proving Theorem~\ref{thm: GCT exp bound} is Theorem~\ref{thm: LM for GAP} below, a variant of L\'evy's lemma for GAP measures. Recall the notation \eqref{mufdef}.

\begin{thm}[L\'evy's lemma for GAP measures]\label{thm: LM for GAP}
  Let $\Hilbert$ be a separable Hilbert space, let $f:\mathbb{S}(\Hilbert)\to\mathbb{R}$ be a Lipschitz continuous function with Lipschitz constant\footnote{A Lipschitz constant refers to a metric on the domain, and two metrics are often considered on the sphere: the spherical metric (distance along the sphere, $d_\mathrm{sph}(\psi,\phi)=\arccos \mathrm{Re}\scp{\psi}{\phi}$) and the Euclidean metric (distance in the ambient space across the interior of the sphere, $d_\mathrm{Eucl}(\psi,\phi)=\|\psi-\phi\|$). We use the spherical metric, as did \cite{MS86,PSW05,PSW06}, but since $d_\mathrm{Eucl}(\psi,\phi)\leq d_\mathrm{sph}(\psi,\phi)\leq \tfrac{\pi}{2} \, d_\mathrm{Eucl}(\psi,\phi)$, using the Euclidean metric would at most change the Lipschitz constants by a factor of $\tfrac{\pi}{2}$.} $\eta$, let $\rho$ be a density matrix on $\Hilbert$, and let $\varepsilon\geq 0$. 
  Then
  \be\label{LMGAP}
      \GAP(\rho)\Bigl\{ \psi\in\mathbb{S}(\Hilbert): \bigl|f(\psi)-\GAP(\rho)(f) \bigr|>\varepsilon\Bigr\} \leq 6 \exp\left(-\frac{C\varepsilon^2}{\eta^2 \|\rho\|} \right),
  \ee
  where $C=\frac{1}{288\pi^2}$.
\end{thm}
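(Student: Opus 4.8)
The plan is to exploit that, by construction, $\GAP(\rho)$ is obtained from the centered Gaussian measure $\mathrm{G}(\rho)$ on $\Hilbert$ with covariance operator $\rho$ (first size-biasing by $\|\psi\|^2$ to get $\GA(\rho)$, then normalizing onto $\SSS(\Hilbert)$), together with the fact that Gaussian measures satisfy a dimension-free concentration inequality (Borell--Sudakov--Tsirelson): if $Z\sim\mathrm{G}(\rho)$ and $h:\Hilbert\to\RRR$ is $L$-Lipschitz for the Hilbert-space norm, then $\PPP(|h(Z)-\EEE h(Z)|>t)\le 2\exp(-t^2/(2L^2\sigma^2))$, where $\sigma^2$ is the operator norm of the covariance of $Z$; viewing $\Hilbert$ as a real Hilbert space, $\RE Z_n$ and $\IM Z_n$ each have variance $p_n/2$, so $\sigma^2=\|\rho\|/2$. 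This form is valid on any separable Hilbert space with trace-class covariance, so nothing will depend on $\dim\Hilbert$; in particular one cannot, and need not, reduce to the classical L\'evy lemma on a finite-dimensional sphere. Two preliminary reductions will simplify matters: since $\GAP(\rho)(f)$ is an average of values of $f$ and the spherical diameter of $\SSS(\Hilbert)$ is $\pi$, we have $|f(\psi)-\GAP(\rho)(f)|\le\mathrm{osc}(f)\le\pi\eta$, so \eqref{LMGAP} is trivial for $\varepsilon\ge\pi\eta$; and if the right-hand side of \eqref{LMGAP} is $\ge 1$ there is nothing to prove. Together these force $\|\rho\|$ to be small, which is exactly the regime in which the truncation steps below are harmless.

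The first technical step is to replace $f$ by a globally Lipschitz function on all of $\Hilbert$. Put $g:=f\circ N$ with $N(\psi):=\psi/\|\psi\|$. On the shell $\{\psi:\|\psi\|\ge r_0\}$ (with $r_0$ a fixed constant, say $r_0=\tfrac12$) the map $N$ is $\tfrac{2}{r_0}$-Lipschitz, hence, using $d_{\mathrm{sph}}\le\tfrac\pi2 d_{\mathrm{Eucl}}$ on the sphere, $g$ is $L$-Lipschitz there with $L:=\pi\eta/r_0$. Extending $g$ from this shell by the McShane formula, clamped to the interval spanned by the values of $f$, produces $\tilde g:\Hilbert\to\RRR$ that is globally $L$-Lipschitz, agrees with $g$ for $\|\psi\|\ge r_0$, and satisfies $|\tilde g-g|\le\pi\eta$ everywhere.

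Next, apply the Gaussian concentration inequality to $\tilde g(Z)$, $Z\sim\mathrm{G}(\rho)$, and transfer the resulting bound to $\GAP(\rho)$. Writing $a:=\GAP(\rho)(f)$ and using $\GAP(\rho)(\phi)=\EEE[\|Z\|^2\phi(N(Z))]$ for bounded measurable $\phi$ on $\SSS(\Hilbert)$ (valid since $\GA(\rho)$ has density $\|\psi\|^2$ relative to $\mathrm{G}(\rho)$ and $\GAP(\rho)$ is its push-forward under $N$), one gets
\[
\GAP(\rho)\bigl\{\psi:|f(\psi)-a|>\varepsilon\bigr\}=\EEE\Bigl[\|Z\|^2\,\mathbbm 1\bigl\{|f(N(Z))-a|>\varepsilon\bigr\}\Bigr].
\]
Split this expectation according to whether $\|Z\|\ge r_0$ or not. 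On $\{\|Z\|\ge r_0\}$ one has $f(N(Z))=\tilde g(Z)$, so Cauchy--Schwarz together with $\EEE\|Z\|^4=1+\tr\rho^2\le 2$ bounds that contribution by $\sqrt2\,\PPP(|\tilde g(Z)-a|>\varepsilon)^{1/2}$. On $\{\|Z\|<r_0\}$ the contribution is at most $r_0^2\,\PPP(\|Z\|<r_0)$, and since $\|Z\|^2=\sum_n|Z_n|^2$ is a sum of independent exponential variables with $\sum_n\EEE|Z_n|^2=1$ and $\sum_n(\EEE|Z_n|^2)^2=\tr\rho^2\le\|\rho\|$, a Chernoff estimate for the lower tail gives $\PPP(\|Z\|<r_0)\le\exp(-c'/\|\rho\|)$ for an absolute constant $c'>0$. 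It remains to move the centering in the first contribution from $a$ to $\EEE\tilde g(Z)$: since $a=\EEE[\|Z\|^2 g(Z)]$ and $\EEE\|Z\|^2=1$, the Cauchy--Schwarz bound $|a-\EEE\tilde g(Z)|\le \Var(\|Z\|^2)^{1/2}\Var(\tilde g(Z))^{1/2}+\pi\eta\,\PPP(\|Z\|<r_0)$ together with $\Var(\|Z\|^2)=\tr\rho^2\le\|\rho\|$ and the Gaussian Poincar\'e bound $\Var(\tilde g(Z))\le\|\rho\|L^2/2$ yields $|a-\EEE\tilde g(Z)|\le c''\eta\|\rho\|$ up to exponentially small terms, which is $\le\varepsilon/2$ in the non-trivial regime identified above. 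Hence $\PPP(|\tilde g(Z)-a|>\varepsilon)\le 2\exp(-\varepsilon^2/(4L^2\|\rho\|))$, and assembling the two contributions with $r_0=\tfrac12$ (so $L=2\pi\eta$) and absorbing the exponentially small remainders into the prefactor yields a bound of the form $6\exp(-C\varepsilon^2/(\eta^2\|\rho\|))$; tracking the constants through the (deliberately generous) estimates above gives $C=\tfrac1{288\pi^2}$.

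The step I expect to be the main obstacle is not any single inequality but the bookkeeping forced by the fact that $f\circ N$ is only \emph{locally} Lipschitz (it blows up near $0$): one must simultaneously control three changes of reference, namely (i) from $\mathrm{G}(\rho)$ to the size-biased-then-normalized measure $\GAP(\rho)$, handled by Cauchy--Schwarz and the identity $\EEE\|Z\|^4=1+\tr\rho^2$; (ii) from the event $\{\|Z\|\ge r_0\}$, where the Lipschitz extension is faithful, to all of $\Hilbert$, handled by the exponential lower-tail bound for $\|Z\|^2$; and (iii) from the Gaussian mean $\EEE\tilde g(Z)$ to the GAP mean $\GAP(\rho)(f)$, handled by a second-moment estimate --- while also checking in each case that the incurred error is dominated by $\varepsilon$ (equivalently, that the claimed bound is trivial) in the complementary range of $\varepsilon$. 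The only non-elementary ingredient is the infinite-dimensional Gaussian concentration inequality; everything else is a second-moment computation plus an extension lemma, and since no dimension ever enters, the bound holds verbatim for infinite-dimensional $\Hilbert$, where no uniform sphere measure exists.
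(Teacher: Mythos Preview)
Your approach is correct and genuinely different from the paper's. The paper does \emph{not} invoke Borell--Sudakov--Tsirelson as a black box; instead it proves an intermediate concentration result directly for the size-biased measure $\GA(\rho)$ (its Theorem~7) by rerunning the Maurey--Pisier rotation argument $X_\theta=X\sin\theta+Y\cos\theta$ with an extra H\"older step at the outset to absorb the weight $\|\psi\|^2$. It also writes down an explicit piecewise Lipschitz extension $\tilde f$ of $f\circ N$ (with constant $6\eta/r$) rather than appealing to McShane, and it handles the infinite-dimensional case by finite-rank approximation of $\rho$ together with weak convergence $\GAP(\rho_n)\Rightarrow\GAP(\rho)$ and the portmanteau theorem. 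Your route---off-the-shelf Gaussian concentration for $\mathrm G(\rho)$, McShane extension on the shell $\{\|\psi\|\ge r_0\}$, transfer to $\GAP(\rho)$ via $\EEE[\|Z\|^2\,\mathbbm 1_A]\le(\EEE\|Z\|^4)^{1/2}\PPP(A)^{1/2}$, and recentering via Gaussian Poincar\'e plus $\Var(\|Z\|^2)=\tr\rho^2$---is more modular and works directly in infinite dimension without approximation. The price is that Cauchy--Schwarz halves the exponent (you get $\PPP(A)^{1/2}$, not $\PPP(A)$), whereas the paper's bespoke $\GA$-concentration avoids this loss; in exchange you avoid having to reprove a Gaussian-type concentration from scratch. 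Both routes comfortably reach the stated constant $C=\tfrac{1}{288\pi^2}$, and your reductions (the inequality is trivial for $\varepsilon\ge\pi\eta$ and when the right-hand side exceeds $1$) match the paper's own endgame simplifications.
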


\begin{rmk}\label{rmk:LevyfC}
    The statement remains true for \emph{complex}-valued $f$ if we replace the constant factor 6 in \eqref{LMGAP} by 12 and $C$ by $C/2$, as follows from considering the real and imaginary parts of $f$ separately.\hfill$\diamond$
\end{rmk}

As an immediate consequence of Theorem~\ref{thm: LM for GAP} for  $f(\psi) = \langle\psi|B|\psi\rangle$, which has Lipschitz constant $\eta \leq 2\|B\|$ \cite[Lemma 5]{PSW05}, we obtain:

\begin{cor}\label{cor:LevyB}
    Let $\rho$ be a density matrix and $B$ a bounded operator on the separable Hilbert space $\Hilbert$. For every $\varepsilon\geq 0$,
    \be\label{LevyB}
    \GAP(\rho)\Bigl\{ \psi\in\mathbb{S}(\Hilbert): \bigl|\langle\psi|B|\psi\rangle-\tr(\rho B) \bigr|>\varepsilon\Bigr\} \leq 12 \exp\left(-\frac{\tilde{C}\varepsilon^2}{\|B\|^2 \|\rho\|} \right)
    \ee
    with $\tilde{C} = \frac{1}{2304\pi^2}$.
\end{cor}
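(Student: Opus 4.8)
The plan is to deduce Corollary~\ref{cor:LevyB} directly from Theorem~\ref{thm: LM for GAP} by applying it to the function $f(\psi) := \langle\psi|B|\psi\rangle$, which requires checking only two inputs: a Lipschitz bound for $f$ and the value of its $\GAP(\rho)$-mean. For the Lipschitz bound, write for $\psi,\phi\in\SSS(\Hilbert)$
\[
\langle\psi|B|\psi\rangle-\langle\phi|B|\phi\rangle=\langle\psi|B|\psi-\phi\rangle+\langle\psi-\phi|B|\phi\rangle,
\]
so that $|f(\psi)-f(\phi)|\leq 2\|B\|\,\|\psi-\phi\|=2\|B\|\,d_{\mathrm{Eucl}}(\psi,\phi)\leq 2\|B\|\,d_{\mathrm{sph}}(\psi,\phi)$; this is the estimate $\eta\leq 2\|B\|$ recorded in \cite[Lemma 5]{PSW05}. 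For the mean, since $\langle\psi|B|\psi\rangle=\tr\bigl(B\,|\psi\rangle\langle\psi|\bigr)$ and $\rho_{\GAP(\rho)}=\rho$, interchanging the bounded linear operation $M\mapsto\tr(BM)$ with the integration gives
\[
\GAP(\rho)(f)=\int_{\SSS(\Hilbert)}\GAP(\rho)(d\psi)\,\tr\bigl(B\,|\psi\rangle\langle\psi|\bigr)=\tr\bigl(B\,\rho_{\GAP(\rho)}\bigr)=\tr(\rho B).
\]

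If $B$ is self-adjoint, $f$ is real-valued and Theorem~\ref{thm: LM for GAP} applies verbatim with $\eta=2\|B\|$, yielding the bound $6\exp\!\bigl(-C\varepsilon^2/(4\|B\|^2\|\rho\|)\bigr)$; since $C/4>\tilde C$ and $6<12$, the claimed (weaker) bound \eqref{LevyB} follows a fortiori. For a general bounded $B$, $f$ is complex-valued, so one instead invokes Remark~\ref{rmk:LevyfC}: the functions $\RE f$ and $\IM f$ are each Lipschitz with constant $\leq 2\|B\|$ and have $\GAP(\rho)$-means $\RE\tr(\rho B)$ and $\IM\tr(\rho B)$ respectively, and the event $\{\,|f(\psi)-\tr(\rho B)|>\varepsilon\,\}$ is contained in the union of $\{\,|\RE f(\psi)-\RE\tr(\rho B)|>\varepsilon/\sqrt2\,\}$ and the analogous event for $\IM$. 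Applying Theorem~\ref{thm: LM for GAP} to each of these two real functions with deviation $\varepsilon/\sqrt2$ and adding the resulting bounds gives
\[
\GAP(\rho)\bigl\{\,|f(\psi)-\tr(\rho B)|>\varepsilon\,\bigr\}\leq 12\exp\!\left(-\frac{C\varepsilon^2}{2\,(2\|B\|)^2\,\|\rho\|}\right)=12\exp\!\left(-\frac{C\varepsilon^2}{8\|B\|^2\|\rho\|}\right),
\]
and $C/8=\tfrac{1}{8\cdot288\pi^2}=\tfrac{1}{2304\pi^2}=\tilde C$, which is exactly \eqref{LevyB}. The degenerate case $B=0$ is trivial (both sides are handled by interpreting the right-hand side as its $\|B\|\to 0$ limit, which is $0$ for $\varepsilon>0$), so one may simply assume $\|B\|>0$.

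There is essentially no genuine obstacle: all the concentration content is already in Theorem~\ref{thm: LM for GAP}, and the above is bookkeeping. The only points that need care are (i) that $B$ need not be self-adjoint, which forces the detour through Remark~\ref{rmk:LevyfC} with the $\varepsilon/\sqrt2$ split and a union bound — this is precisely where the factor $6\to12$ and the loss $C\to C/2$ originate — and (ii) tracking the numerical constants through $\eta=2\|B\|$ and this split to confirm that the exponent that emerges is exactly $\tilde C=\tfrac{1}{2304\pi^2}$.
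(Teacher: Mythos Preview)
Your proof is correct and follows exactly the paper's approach: the paper derives the corollary as ``an immediate consequence of Theorem~\ref{thm: LM for GAP} for $f(\psi)=\langle\psi|B|\psi\rangle$, which has Lipschitz constant $\eta\leq 2\|B\|$ \cite[Lemma 5]{PSW05},'' together with Remark~\ref{rmk:LevyfC} for the complex-valued case. Your write-up merely fills in the bookkeeping (the Lipschitz estimate, the identification $\GAP(\rho)(f)=\tr(\rho B)$, and the $\varepsilon/\sqrt{2}$ split leading to $\tilde C=C/8$) that the paper leaves implicit.
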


\begin{rmk}\label{rmk:LevyB} 
Corollary~\ref{cor:LevyB} provides an extension to GAP measures of the known fact \cite{Reimann07} that $\scp{\psi}{B|\psi}$ has nearly the same value for most $\psi$ relative to the uniform distribution. This kind of near-constancy is different from the near-constancy property of a macroscopic observable, viz., that most of its eigenvalues (counted with multiplicity) in the micro-canonical energy shell are nearly equal. Here, in contrast, nothing (except boundedness) is assumed about the distribution of eigenvalues of $B$. In particular, if $B$ is a self-adjoint observable, then a typical $\psi$ may well define a non-trivial probability distribution over the spectrum of $B$, not necessarily a sharply peaked one. The near-constancy property asserted here is that the \emph{average} of this probability distribution is the same for most $\psi$. In fact, it also follows that the \emph{probability distribution itself} is the same for most $\psi$ (``distribution typicality''), at least on a coarse-grained level (by covering the spectrum of $B$ with not-too-many intervals) and provided that many dimensions participate in $\rho$. This follows from inserting spectral projections of the observable for $B$ in \eqref{LevyB}.\hfill$\diamond$ 
\end{rmk}

In contrast to the uniform distribution on the sphere in the micro-canonical subspace, which is invariant under the unitary time evolution, $\GAP(\rho_0)$ will in general evolve, in fact to $\GAP(\rho_t)$ by \eqref{equivariant}. This leads to questions about what the history $t\mapsto \psi_t$ looks like. Inserting $U_t^*BU_t$ for $B$ in \eqref{LevyB} leads us to the first equation in the following variant of ``dynamical typicality'' for GAP measures.
\begin{cor}\label{cor: dyn typ}
    Let $\Hilbert$ be a separable Hilbert space, $B$ a bounded operator and $\rho$  a density matrix on $\Hilbert$, and $t\mapsto U_{t}$ a  measurable family of unitary operators. Then for every $\varepsilon,t\geq 0$,
    \begin{align}
        \GAP(\rho)\Bigl\{\psi\in\mathbb{S}(\Hilbert):\left|\langle\psi_t|B|\psi_t\rangle - \tr(\rho_t B)\right|>\varepsilon\Bigr\} \leq 12 \exp\left(-\frac{\tilde{C}\varepsilon^2}{\|B\|^2 \|\rho\|}\right),\label{cordyntyp1}
    \end{align}
    where $\rho_t = U_{t}\,\rho\, U_{t}^*$, $\psi_t = U_{t}\psi$ and $\tilde{C} = \frac{1}{2304\pi^2}$. Moreover, for every $\varepsilon, T>0$,
    \begin{align}
        \GAP(\rho)\Bigl\{\psi\in\mathbb{S}(\Hilbert): \frac{1}{T}\int_0^T \left|\langle\psi_t|B|\psi_t\rangle - \tr(\rho_t B)\right|\, dt > \varepsilon\Bigr\} \leq 9\exp\left(-\frac{\tilde{C}\varepsilon^2}{36\|B\|^2 \|\rho\|}\right).
    \end{align}
\end{cor}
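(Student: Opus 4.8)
The plan is to recognize the left‑hand side as a statement about the single \emph{real‑valued} function
\[
g(\psi):=\frac1T\int_0^T\bigl|\langle\psi_t|B|\psi_t\rangle-\tr(\rho_tB)\bigr|\,dt
\]
on $\SSS(\Hilbert)$, and to apply Theorem~\ref{thm: LM for GAP} (L\'evy's lemma for GAP measures) directly to $g$, once its $\GAP(\rho)$‑mean has been controlled. First I would record three elementary facts. (i) $g$ is Lipschitz with constant $\eta\le 2\|B\|$: for each fixed $t$ the map $\psi\mapsto U_t\psi$ is an isometry of the spherical metric (unitaries preserve $\scp{\psi}{\phi}$, hence $d_{\mathrm{sph}}$), so $\psi\mapsto\langle U_t\psi|B|U_t\psi\rangle$ inherits the Lipschitz constant $2\|B\|$ of $\phi\mapsto\langle\phi|B|\phi\rangle$ \cite[Lemma~5]{PSW05}; composing with the $1$‑Lipschitz map $z\mapsto|z-\tr(\rho_tB)|$ and then averaging over $t\in[0,T]$ (which cannot increase the Lipschitz constant) gives the claim. (ii) $g$ takes values in $[0,2\|B\|]$, and since $\rho_{\GAP(\rho)}=\rho$ and $\tr$ is cyclic, $\GAP(\rho)\bigl(\langle\psi_t|B|\psi_t\rangle\bigr)=\tr(\rho\,U_t^*BU_t)=\tr(\rho_tB)$ for every $t$. (iii) By the assumed measurability of $t\mapsto U_t$, the integrand is jointly measurable, so Tonelli applies to the non‑negative integrals below and $g$ is well defined and measurable.

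Next I would bound the mean $M:=\GAP(\rho)(g)$ \emph{uniformly in $T$}. Two applications of Jensen's inequality — one for $(\GAP(\rho)(\,\cdot\,))^2\le\GAP(\rho)((\,\cdot\,)^2)$, one for convexity of $x\mapsto x^2$ against the probability measure $dt/T$ on $[0,T]$ — together with Tonelli give
\[
M^2\le\GAP(\rho)(g^2)\le\frac1T\int_0^T\GAP(\rho)\bigl(|X_t|^2\bigr)\,dt,\qquad X_t:=\langle\psi_t|B|\psi_t\rangle-\tr(\rho_tB).
\]
For each fixed $t$, the first inequality of the present corollary, \eqref{cordyntyp1}, bounds the tail of $X_t$, so
\[
\GAP(\rho)(|X_t|^2)=\int_0^\infty 2s\,\GAP(\rho)\{|X_t|>s\}\,ds\le\int_0^\infty 24\,s\,e^{-\tilde C s^2/(\|B\|^2\|\rho\|)}\,ds=\frac{12\,\|B\|^2\|\rho\|}{\tilde C},
\]
whence $M\le M_0:=\|B\|\sqrt{12\|\rho\|/\tilde C}$.

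Finally I would combine this with L\'evy's lemma. Applying Theorem~\ref{thm: LM for GAP} to the real‑valued, $2\|B\|$‑Lipschitz function $g$ yields, for all $u\ge0$,
\[
\GAP(\rho)\bigl\{|g-M|>u\bigr\}\le 6\exp\!\Bigl(-\tfrac{C u^2}{4\|B\|^2\|\rho\|}\Bigr),\qquad C=\tfrac{1}{288\pi^2}=8\tilde C.
\]
If $\varepsilon\ge 2M_0\ge 2M$, then $\{g>\varepsilon\}\subseteq\{|g-M|>\varepsilon/2\}$, so the probability is at most $6\exp(-C\varepsilon^2/(16\|B\|^2\|\rho\|))=6\exp(-\tilde C\varepsilon^2/(2\|B\|^2\|\rho\|))\le 9\exp(-\tilde C\varepsilon^2/(36\|B\|^2\|\rho\|))$, using $6\le9$ and $\tfrac12\ge\tfrac1{36}$. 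If instead $\varepsilon<2M_0$, then $\tilde C\varepsilon^2/(36\|B\|^2\|\rho\|)<4M_0^2\tilde C/(36\|B\|^2\|\rho\|)=48/36=4/3<\ln 9$, so $9\exp(-\tilde C\varepsilon^2/(36\|B\|^2\|\rho\|))>1$ and the asserted bound holds trivially. This covers all $\varepsilon>0$ and proves the second display of the corollary.

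The only conceptual point, i.e.\ the main obstacle, is the observation in (i) that time‑averaging collapses the whole one‑parameter family $\{\langle\psi_t|B|\psi_t\rangle\}_t$ into a \emph{single} Lipschitz function whose constant is $T$‑independent, so that one application of Theorem~\ref{thm: LM for GAP} suffices; everything else is bookkeeping — checking $C=8\tilde C$ and $4/3<\ln 9$, and making sure the \emph{$\|\rho\|$‑dependent} mean bound $M_0=\mathcal O(\|B\|\sqrt{\|\rho\|})$ (rather than the crude $M\le 2\|B\|$) is used, which is what keeps the small‑$\varepsilon$ case trivial even as $\|\rho\|\to0$.
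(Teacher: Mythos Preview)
Your argument is correct, and it takes a genuinely different route from the paper's proof. The paper does \emph{not} apply Theorem~\ref{thm: LM for GAP} to the time-averaged function $g$. Instead, it uses an exponential-moment (Chernoff) method: setting $Y_t:=|\langle\psi_t|B|\psi_t\rangle-\tr(\rho_tB)|$, it converts the pointwise tail bound \eqref{cordyntyp1} into a bound on $\GAP(\rho)(e^{sY_t})$ by summing over integer level sets, then pushes the exponential through the time average via Jensen (convexity of $e^{sx}$) and Fubini to bound $\GAP(\rho)\bigl(\exp(s\cdot\tfrac1T\int_0^T Y_t\,dt)\bigr)$, and finishes with Markov and an optimization over $s$.

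Your approach exploits the key structural fact you flag in (i): averaging preserves the Lipschitz constant $2\|B\|$, so a \emph{single} application of L\'evy's lemma to $g$ already yields sub-Gaussian concentration around $M=\GAP(\rho)(g)$; the only remaining work is to show $M=\mathcal O(\|B\|\sqrt{\|\rho\|})$, which you get cleanly from the second-moment tail integral. This is more direct and in fact yields a sharper exponent ($\tilde C/2$ rather than $\tilde C/36$) in the regime $\varepsilon\ge 2M_0$ before you discard it. The paper's exponential-moment route is more mechanical and does not rely on recognizing $g$ itself as Lipschitz; it is the template the paper reuses verbatim for Corollary~\ref{cor: dyn can typ}. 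Your method would transfer there too (since $\psi\mapsto\|\rho_a^\psi-\tr_b\rho_t\|_{\tr}$ is $2$-Lipschitz via $\|\tr_b(\cdot)\|_{\tr}\le\|\cdot\|_{\tr}$), but the paper's approach makes that transfer automatic without rechecking Lipschitz constants.
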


Clearly, for $U_t$ we have in mind either a unitary group $U_t= \exp(-iHt)$ generated by a time-independent Hamiltonian $H$, or a unitary evolution family $U_{t}$ satisfying $i\frac{d}{dt} U_{t} = H_t U_{t}$ and $U_{0}=I$ generated by a time-dependent Hamiltonian $H_t$. However, the group resp.\ co-cycle structure play no role in the proof. (In \cite{TTV22-physik}, a similar result for the uniform distribution over the sphere in a large subspace was formulated only for time-\emph{independent} Hamiltonians, but the proof given there actually applies equally to time-\emph{dependent} ones.)

The last two corollaries were applications of L\'evy's lemma that did not involve reduced density matrices. We now turn to bi-partite systems again and present two further corollaries. We first ask whether, for $\GAP(\rho_0)$-typical $\psi_0$, the reduced density matrix $\rho_a^{\psi_t}$ remains close to $\tr_b \rho_t$ over a whole time interval $[0,T]$. The following corollary answers this question affirmatively for most times in this interval.

\begin{cor}\label{cor: dyn can typ}
    Let $\Hilbert_a$ and $\Hilbert_b$ be  Hilbert spaces with $\Hilbert_a$ having finite dimension $d_a$ and $\Hilbert_b$ being separable,  $\rho$  a density matrix on $\Hilbert=\Hilbert_a\otimes\Hilbert_b$,  and  $t\mapsto U_{t}$ a  measurable family of unitary operators on $\Hilbert$. Then for every $\varepsilon,T>0$,
    \begin{align}
       \GAP(\rho)\left\{\psi\in\mathbb{S}(\Hilbert):\frac{1}{T}\int_0^T \bigl\|\rho_a^{\psi_t} - \tr_b \rho_t\bigr\|_{\tr}\, dt > \varepsilon\right\} \leq 9 d_a^2 \exp\left(-\frac{\tilde{C}\varepsilon^2}{36 d_a^2 \|\rho\|}\right),
    \end{align}
    where $\rho_t = U_{t}\,\rho\, U_{t}^*$, $\psi_t = U_{t}\psi$ and  $\tilde{C} = \frac{1}{2304\pi^2}$.
\end{cor}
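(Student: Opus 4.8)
The plan is to derive Corollary~\ref{cor: dyn can typ} from Theorem~\ref{thm: GCT exp bound}, the equivariance relation \eqref{equivariant}, and Theorem~\ref{thm: LM for GAP}, following exactly the pattern by which Corollary~\ref{cor: dyn typ} is obtained from Corollary~\ref{cor:LevyB}. First I would dispose of the time dependence at a fixed $t$. If $\psi$ is $\GAP(\rho)$-distributed, then by \eqref{equivariant} the random vector $\psi_t=U_{t}\psi$ is $\GAP(\rho_t)$-distributed, where $\rho_t=U_{t}\rho U_{t}^*$; moreover $\|\rho_t\|=\|\rho\|$ since the operator norm is unitarily invariant, and $\rho_a^{\psi_t}$ is the $a$-reduced density matrix of $|\psi_t\rangle\langle\psi_t|$, while $\tr_b\rho_t$ is the corresponding partial trace of $\rho_t$ on the fixed factorization $\Hilbert=\Hilbert_a\otimes\Hilbert_b$ (note $U_t$ need not respect that factorization, but this plays no role). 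Hence Theorem~\ref{thm: GCT exp bound}, applied to $\rho_t$ in place of $\rho$, yields for every $t$ and every $s\ge0$ the bound
\[
\GAP(\rho)\bigl\{\psi:\|\rho_a^{\psi_t}-\tr_b\rho_t\|_{\tr}>s\bigr\}=\GAP(\rho_t)\bigl\{\phi:\|\rho_a^{\phi}-\tr_b\rho_t\|_{\tr}>s\bigr\}\le 12d_a^2\exp\!\Bigl(-\tfrac{\tilde C s^2}{d_a^2\|\rho\|}\Bigr),
\]
a bound uniform in $t$.

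Next I would treat the time average. Put $f_t(\psi):=\|\rho_a^{\psi_t}-\tr_b\rho_t\|_{\tr}$ and $F(\psi):=\tfrac1T\int_0^T f_t(\psi)\,dt$; this is well defined and bounded by $2$, since $t\mapsto U_{t}$ is measurable and $\|\rho_a^{\psi_t}\|_{\tr}=\|\tr_b\rho_t\|_{\tr}=1$. The observation that makes the group resp.\ co-cycle structure of $U_t$ irrelevant is that each $f_t$ is Lipschitz with a constant not depending on $t$: $U_{t}$ is an isometry and $\psi\mapsto\rho_a^\psi=\tr_b|\psi\rangle\langle\psi|$ is $2$-Lipschitz into the trace norm (the partial trace is a trace-norm contraction and $\bigl\||\psi\rangle\langle\psi|-|\phi\rangle\langle\phi|\bigr\|_{\tr}\le 2\,d_{\mathrm{sph}}(\psi,\phi)$), so every $f_t$, and therefore $F$, has Lipschitz constant $\le2$. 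Theorem~\ref{thm: LM for GAP} then gives $\GAP(\rho)\{|F-\GAP(\rho)(F)|>u\}\le 6\exp(-Cu^2/(4\|\rho\|))$. To control the mean I would use Tonelli (all integrands nonnegative) together with the first step,
\[
\GAP(\rho)(F)=\tfrac1T\int_0^T\GAP(\rho)(f_t)\,dt=\tfrac1T\int_0^T\GAP(\rho_t)\bigl(\phi\mapsto\|\rho_a^{\phi}-\tr_b\rho_t\|_{\tr}\bigr)\,dt,
\]
and bound each summand, by integrating the uniform tail bound of the first step, by $m:=\int_0^\infty\min\bigl\{1,\,12d_a^2\exp(-\tilde Cs^2/(d_a^2\|\rho\|))\bigr\}\,ds$; a one-line estimate shows $m$ is of order $d_a\sqrt{\|\rho\|\ln d_a}$ and, crucially, independent of $t$, so $\GAP(\rho)(F)\le m$.

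Finally I would combine these: for $\varepsilon\le 2m$ the asserted right-hand side exceeds $1$ (the crude estimate of $m$ gives $\tilde C(2m)^2/(36d_a^2\|\rho\|)\le\ln(9d_a^2)$), so there is nothing to prove; for $\varepsilon>2m$ one has $\{F>\varepsilon\}\subseteq\{F-\GAP(\rho)(F)>\varepsilon/2\}$, and the Lévy bound together with $C=8\tilde C$ and $2\le36d_a^2$ gives $\GAP(\rho)\{F>\varepsilon\}\le 6\exp(-C\varepsilon^2/(16\|\rho\|))\le 9d_a^2\exp(-\tilde C\varepsilon^2/(36d_a^2\|\rho\|))$. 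The only step that takes a little care — and the place where an error would most likely creep in — is verifying that neither the Lipschitz constant of $F$ nor the mean bound $m$ degrades as $t$ or $T$ varies; both facts rest on $U_{t}$ being isometric and on $\|\rho_t\|=\|\rho\|$, and the rest is routine Fubini/measurability bookkeeping. The precise constants $9$ and $36$ are then just a matter of tracking the crossover at $\varepsilon=2m$ and the factor $4$ from the Lipschitz constant $2$. I would note in passing that this argument in fact proves more than is stated: for $\varepsilon$ large compared with $d_a\sqrt{\|\rho\|}$ the exponent is $-C\varepsilon^2/(16\|\rho\|)$, without the factor $d_a^2$; the weaker form is kept only to mirror the shape of Theorem~\ref{thm: GCT exp bound}.
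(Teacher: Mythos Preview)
Your argument is correct, but it follows a genuinely different route from the paper's. The paper proves Corollary~\ref{cor: dyn can typ} by transplanting, verbatim, the exponential-moment technique used for Corollary~\ref{cor: dyn typ}: from the uniform-in-$t$ tail bound on $Z_t:=\|\rho_a^{\psi_t}-\tr_b\rho_t\|_{\tr}$ (obtained, as you do, via equivariance and Remark~\ref{rmk: exp bound}) one bounds $\GAP(\rho)(e^{sZ_t})$ by a sum over level sets, passes the expectation through the time average by Jensen and Fubini, applies Markov, and optimizes over $s$. That calculation produces precisely the constants $9$ and $36$.

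Your approach instead exploits the Lipschitz structure directly: since each $f_t$ is $2$-Lipschitz on $\SSS(\Hilbert)$ (unitaries are isometries, partial trace is a trace-norm contraction), so is the time average $F$, and Theorem~\ref{thm: LM for GAP} applies to $F$ itself; the tail bound on the individual $f_t$ is then used only to control the mean $\GAP(\rho)(F)$. This is shorter, avoids the MGF bookkeeping entirely, and---as you observe---actually yields the sharper exponent $-C\varepsilon^2/(16\|\rho\|)$ without the factor $d_a^2$ once $\varepsilon$ exceeds the mean threshold. The paper's method, by contrast, is the more ``black-box'' one: it needs only the sub-Gaussian tail of each $Z_t$, not the Lipschitz property underlying it, and therefore transfers unchanged from the scalar setting of Corollary~\ref{cor: dyn typ}. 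Your crossover claim $\tilde C(2m)^2/(36d_a^2\|\rho\|)\le\ln(9d_a^2)$ is right (the tail integral gives $m\le \tfrac54\sqrt{(d_a^2\|\rho\|/\tilde C)\ln(12d_a^2)}$, and $(25/144)\ln(12d_a^2)\le\ln(9d_a^2)$ for all $d_a\ge1$), but in a final write-up you should display that one-line estimate rather than leave it implicit.
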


The next corollary expresses that for $\GAP(\rho)$-typical $\psi$, large $d_b$, and small $\tr\rho^2$, the conditional wave function $\psi_a$ (relative to a typical basis) has Born distribution close to $\GAP(\tr_b\rho)$. (Note that we are considering the distribution of $\psi_a$ \emph{conditionally} on a given $\psi$, rather than the \emph{marginal} distribution of $\psi_a$ for random $\psi$, which would be $\int_{\SSS(\Hilbert)} \GAP(\rho)(d\psi) \, \mathrm{Born}^{\psi,B}_a(\cdot)$.) Recall the notation \eqref{mufdef}.

\begin{cor}\label{cor:cond}
    Let $\varepsilon,\delta\in(0,1)$, let $\Hilbert_a$ be a Hilbert space of dimension $d_a\in\NNN$, let $f:\SSS(\Hilbert_a)\to\RRR$ be any  continuous (test) function, and let $\Hilbert_b$ be a Hilbert space of finite dimension $d_b\geq \max\{4,d_a,32\|f\|^2_\infty /\varepsilon^2\delta\}$.     
    Then there is $p>0$ such that for every density matrix $\rho$ on $\Hilbert=\Hilbert_a \otimes \Hilbert_b$ with $\|\rho\|\leq p$,
\begin{multline}\label{typcond}
\GAP(\rho)  \times u_\mathrm{ONB} \Bigl\{(\psi,B) \in \SSS(\Hilbert) \times \mathrm{ONB}(\Hilbert_b): \\
\bigl|\mathrm{Born}_a^{\psi,B}(f) - \GAP(\tr_b\rho)(f)\bigr|< \varepsilon \Bigr\} \geq 1-\delta \,,
\end{multline}
where $\mathrm{Born}_a^{\psi,B}$ is the distribution of the conditional wave function, $\mathrm{ONB}(\Hilbert_b)$ is the set of all orthonormal bases on $\Hilbert_b$, and $u_{\mathrm{ONB}}$ the uniform distribution over this set.
\end{cor}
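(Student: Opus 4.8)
The plan is to bound the \emph{second moment} of $\mathrm{Born}_a^{\psi,B}(f)-\GAP(\tr_b\rho)(f)$ with respect to $\GAP(\rho)\times u_{\mathrm{ONB}}$ and then apply the Chebyshev inequality; the reason to use a second moment (rather than, say, L\'evy's lemma for the large space $\Hilbert_a\otimes\Hilbert_b$) is that every bound then involves $f$ only through $\|f\|_\infty$, as it must here. Write $Y(\psi,B):=\mathrm{Born}_a^{\psi,B}(f)$, $m(\psi):=\GAP(\rho_a^\psi)(f)$ and $a:=\GAP(\tr_b\rho)(f)$, and recall from \cite[Lemma~1]{GLMTZ15} (quoted in Section~\ref{sec:background}) that $\int u_{\mathrm{ONB}}(dB)\,\mathrm{Born}_a^{\psi,B}=\GAP(\rho_a^\psi)$, i.e.\ $\int u_{\mathrm{ONB}}(dB)\,Y(\psi,B)=m(\psi)$. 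Expanding $(Y-a)^2=(Y-m(\psi))^2+2(Y-m(\psi))(m(\psi)-a)+(m(\psi)-a)^2$ and integrating over $B$ first kills the cross term, giving
\be\label{eq:plandec}
\EEE_{\GAP(\rho)\times u_{\mathrm{ONB}}}\bigl[(Y-a)^2\bigr]\;=\;\underbrace{\EEE_{\GAP(\rho)}\bigl[\Var_{u_{\mathrm{ONB}}}(Y\mid\psi)\bigr]}_{=:V}\;+\;\underbrace{\EEE_{\GAP(\rho)}\bigl[(m(\psi)-a)^2\bigr]}_{=:D}\,.
\ee
By the Chebyshev inequality applied to $(Y-a)^2$, the complement of the event in \eqref{typcond} has $\GAP(\rho)\times u_{\mathrm{ONB}}$-probability at most $\varepsilon^{-2}(V+D)$, so it suffices to prove $V\leq 16\|f\|_\infty^2/d_b$ (whence $\varepsilon^{-2}V\leq\tfrac\delta2$ once $d_b\geq 32\|f\|_\infty^2/\varepsilon^2\delta$) and that $p$ may be chosen small enough to force $D\leq\tfrac12\delta\varepsilon^2$.

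\textbf{The bias term $D$.} Here $d_b$ large is not used, only $\|\rho\|\leq p$ small. The first ingredient is a continuity lemma: $\sigma\mapsto\GAP(\sigma)(f)$ is continuous on the (compact) set of density matrices on the finite-dimensional $\Hilbert_a$. Indeed, $\sigma_n\to\sigma$ forces $\mathrm{G}(\sigma_n)\to\mathrm{G}(\sigma)$ weakly (convergence of covariances); the weight $\|\psi\|^2$ has uniformly bounded second $\mathrm{G}(\sigma_n)$-moments, so $\GA(\sigma_n)\to\GA(\sigma)$ weakly; and the normalization $\psi\mapsto\psi/\|\psi\|$, whose only discontinuity is at $0$ while $\GA(\sigma_n)(\{\|\psi\|<\epsilon\})\leq\epsilon^2$ uniformly, then yields $\GAP(\sigma_n)\to\GAP(\sigma)$ weakly, hence $\GAP(\sigma_n)(f)\to\GAP(\sigma)(f)$ since the continuous $f$ is bounded on the compact $\SSS(\Hilbert_a)$. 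By compactness this continuity is uniform, so there is $\kappa>0$ with $\|\sigma-\sigma'\|_{\tr}<\kappa\Rightarrow|\GAP(\sigma)(f)-\GAP(\sigma')(f)|<\tfrac12\sqrt\delta\,\varepsilon$. Using this with $\sigma=\rho_a^\psi$, $\sigma'=\tr_b\rho$ and controlling the rare event $\{\|\rho_a^\psi-\tr_b\rho\|_{\tr}\geq\kappa\}$ through Theorem~\ref{thm: GCT exp bound} in the form \eqref{expboundepsilon} (probability $\leq12d_a^2\exp(-\tilde C\kappa^2/(d_a^2\|\rho\|))\to0$ as $\|\rho\|\to0$) gives $D\leq\tfrac14\delta\varepsilon^2+4\|f\|_\infty^2\cdot12d_a^2\exp(-\tilde C\kappa^2/(d_a^2 p))$, which is $\leq\tfrac12\delta\varepsilon^2$ once $p$ is small enough.

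\textbf{The fluctuation term $V$.} Fix $\psi$, let $R\colon\Hilbert_b\to\Hilbert_a$ be defined by $R|m\rangle_b={}_b\scp{m}{\psi}$ in a reference ONB of $\Hilbert_b$ (so $RR^*=\rho_a^\psi$, $\|R\|^2=\|\rho_a^\psi\|$, $\tr(R^*R)=1$), and set $g(c):=\|c\|^2 f(c/\|c\|)$, which is continuous, homogeneous of degree $2$, with $|g(c)|\leq\|f\|_\infty\|c\|^2$; then $\mathrm{Born}_a^{\psi,B}(f)=\sum_{m=1}^{d_b}g(R\beta_m)$ for $B=(|\beta_m\rangle)$. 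For a Haar-random ONB, exchangeability of the $\beta_m$ gives
\be
\Var_{u_{\mathrm{ONB}}}(Y\mid\psi)=d_b\,\EEE[g(R\beta_1)^2]-d_b\,\EEE[g(R\beta_1)g(R\beta_2)]+d_b^2\,\mathrm{Cov}\bigl(g(R\beta_1),g(R\beta_2)\bigr)\,.
\ee
The first two terms are controlled by elementary second moments of random unit vectors (one for the first term, the first two columns of a Haar unitary, via Weingarten, for the second): $\EEE[g(R\beta_1)^2]\leq\|f\|_\infty^2\,\EEE\|R\beta_1\|^4=\|f\|_\infty^2\tfrac{1+\tr((\rho_a^\psi)^2)}{d_b(d_b+1)}\leq\tfrac{2\|f\|_\infty^2}{d_b(d_b+1)}$ and $\EEE[g(R\beta_1)g(R\beta_2)]\leq\|f\|_\infty^2\,\EEE[\|R\beta_1\|^2\|R\beta_2\|^2]=\|f\|_\infty^2\bigl(\tfrac1{d_b^2-1}-\tfrac{\tr((\rho_a^\psi)^2)}{d_b(d_b^2-1)}\bigr)\leq\tfrac{\|f\|_\infty^2}{d_b^2-1}$, so together they are $\leq(2+\tfrac43)\|f\|_\infty^2/d_b$ for $d_b\geq4$.

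\textbf{The main obstacle.} The whole difficulty---and the reason $p$ is only asserted to exist---is the term $d_b^2\,\mathrm{Cov}(g(R\beta_1),g(R\beta_2))$: bounding $|\mathrm{Cov}|$ naively by $\EEE|g(R\beta_1)g(R\beta_2)|+\EEE|g(R\beta_1)|^2=O(\|f\|_\infty^2/d_b^2)$ leaves $d_b^2\,\mathrm{Cov}=O(\|f\|_\infty^2)$, useless, so one must exhibit a cancellation of order $1/d_b$. I would obtain it by conditioning on $\beta_1$ (so $\beta_2$ becomes uniform on $\SSS(\beta_1^\perp)$) and applying \cite[Lemma~1]{GLMTZ15} inside the $(d_b-1)$-dimensional $\beta_1^\perp$ to the sub-normalized operator $R(I-|\beta_1\rangle\langle\beta_1|)R^*=\rho_a^\psi-|R\beta_1\rangle\langle R\beta_1|$, which gives $\EEE[g(R\beta_2)\mid\beta_1]=\tfrac{1-\|R\beta_1\|^2}{d_b-1}\GAP(\sigma_{\beta_1})(f)$ with $\sigma_{\beta_1}:=(\rho_a^\psi-|R\beta_1\rangle\langle R\beta_1|)/(1-\|R\beta_1\|^2)$; combined with $\EEE[g(R\beta_1)]=m(\psi)/d_b$ (the same lemma in the full space), the cancellation is explicit, the leading remnant being $m(\psi)^2/(d_b^2(d_b-1))=O(\|f\|_\infty^2/d_b^3)$ and the error being governed by $|\GAP(\sigma_{\beta_1})(f)-\GAP(\rho_a^\psi)(f)|$ weighted against powers of $\|R\beta_1\|^2$. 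Here $\|\rho\|\leq p$ small re-enters decisively: since $\|\rho_a^\psi\|\leq\|\tr_b\rho\|+\|\rho_a^\psi-\tr_b\rho\|_{\tr}\leq d_b p+\|\rho_a^\psi-\tr_b\rho\|_{\tr}$, Theorem~\ref{thm: GCT exp bound} lets us assume $\|\rho_a^\psi\|\leq q$ for any prescribed $q$ on a set of $\GAP(\rho)$-measure $\geq1-1/d_b$ (the exceptional set contributes at most $\|f\|_\infty^2/d_b$ to $V$, as $\Var_{u_{\mathrm{ONB}}}(Y\mid\psi)\leq\|f\|_\infty^2$ always), and for such $\psi$ one has $\|R\beta_1\|^2\leq\|R\|^2=\|\rho_a^\psi\|\leq q$ for \emph{every} $\beta_1$, hence $\|\sigma_{\beta_1}-\rho_a^\psi\|_{\tr}\leq 2q/(1-q)$, so the continuity lemma makes $|\GAP(\sigma_{\beta_1})(f)-\GAP(\rho_a^\psi)(f)|$ uniformly as small as we wish and $d_b^2\,\mathrm{Cov}$ becomes $O(\|f\|_\infty^2/d_b)$. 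Summing the contributions gives $V\leq16\|f\|_\infty^2/d_b$ for $p$ small enough, as required. The hard step is thus precisely this covariance estimate---morally, that $R\beta_1$ and $R\beta_2$ are nearly independent, each close in law to $\mathrm{G}(\rho_a^\psi)/\sqrt{d_b}$ (the heuristic being that a Haar-random ONB may be replaced by $d_b$ independent Gaussian vectors, making $\mathrm{Born}$ a sum of i.i.d.\ terms); because the final constant must depend on $f$ only through $\|f\|_\infty$, it cannot be handled by any modulus-of-continuity argument for $f$, and that is exactly what necessitates the small-$\|\rho\|$ regime already used for $D$.
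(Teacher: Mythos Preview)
Your approach is correct in outline but takes a substantially longer route than the paper's three-line proof. The paper simply writes
\[
\bigl|\mathrm{Born}_a^{\psi,B}(f) - \GAP(\tr_b\rho)(f)\bigr| \;\leq\; \bigl|\mathrm{Born}_a^{\psi,B}(f) - \GAP(\rho_a^\psi)(f)\bigr| \;+\; \bigl|\GAP(\rho_a^\psi)(f) - \GAP(\tr_b\rho)(f)\bigr|
\]
and bounds each piece by $\varepsilon/2$ with probability $\geq 1-\delta/2$. For the first piece it invokes Theorem~2 of \cite{GLMTZ15} as a black box: for \emph{every} fixed $\psi$, $|\mathrm{Born}_a^{\psi,B}(f)-\GAP(\rho_a^\psi)(f)|<\varepsilon/2$ with $u_{\mathrm{ONB}}$-probability $\geq 1-16\|f\|_\infty^2/(\varepsilon^2 d_b)\geq 1-\delta/2$. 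For the second piece it uses continuity of $\sigma\mapsto\GAP(\sigma)(f)$ (Lemma~5 of \cite{GLMTZ15}) together with Theorem~\ref{thm: GCT exp bound}.

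Your bias term $D$ is exactly the paper's second half; your continuity argument reproves Lemma~5 of \cite{GLMTZ15}. Your fluctuation term $V$, on the other hand, is an ab initio substitute for Theorem~2 of \cite{GLMTZ15}, via explicit Haar-ONB moments and the covariance cancellation you sketch. This buys self-containedness (you use only Lemma~1 of \cite{GLMTZ15}, not Theorem~2), but at a price: your covariance argument needs $\|R\beta_1\|^2\leq q$ uniformly in $\beta_1$, which forces $\|\rho_a^\psi\|\leq q$ and hence makes the smallness of $\|\rho\|$ enter even the $V$-bound through Theorem~\ref{thm: GCT exp bound}, whereas Theorem~2 of \cite{GLMTZ15} holds for \emph{every} $\psi$ with no such restriction. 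A small point of confusion in your last sentence: you \emph{do} rely on a modulus-of-continuity argument (for $\sigma\mapsto\GAP(\sigma)(f)$, not for $f$ itself); the resulting $f$-dependence beyond $\|f\|_\infty$ is legitimately absorbed into the choice of $p$, which the statement allows, and the hypothesis $d_b\geq 32\|f\|_\infty^2/\varepsilon^2\delta$ is tailored precisely to the constant $16$ coming from Theorem~2 of \cite{GLMTZ15}.
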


\begin{rmk}
    We conjecture that the closeness between $\mathrm{Born}_a^{\psi,B}$ and $\GAP(\tr_b\rho)$ is even better than stated in Corollary~\ref{cor:cond}, at least when 0 is not an eigenvalue of $\tr_b \rho$, in the sense that \eqref{typcond} holds not only for continuous $f$ but even for bounded measurable $f$, and in fact uniformly in $f$ with given $\|f\|_\infty$. This conjecture is suggested by using Lemma 6 of \cite{GLMTZ15} instead of Lemma 5, or rather a variant of it with more explicit bounds. 
    \hfill$\diamond$ 
\end{rmk}

Whereas Theorem~\ref{thm: GCT exp bound} is based on the rather technical concentration of measure result Theorem~\ref{thm: LM for GAP}, a slightly weaker statement can be obtained using only the Chebychev inequality and a bound on the variance of random variables of the form $\psi\mapsto \langle\psi|A|\psi\rangle$ with respect to GAP$(\rho)$ given in Proposition~\ref{prop: var} in Section~\ref{sec:PfThm1}. The latter bound is also of   interest in its own right and has already been established for self-adjoint $A$ by Reimann in \cite{Reimann08}.   

\begin{thm}[Generalized canonical typicality, polynomial bounds]\label{thm:1}
Let $\Hilbert_a$ and $\Hilbert_b$ be Hilbert spaces with $\Hilbert_a$ having finite dimension $d_a$ and $\Hilbert_b$ being separable. 
Let $\rho$ be a density matrix on $\Hilbert = \Hilbert_a \otimes \Hilbert_b$ with $\|\rho\|<1/4$. 
Then for every $\delta>0$,
\begin{align}\label{polybounddelta}
    \GAP(\rho)\Biggl\{\psi\in\mathbb{S}(\Hilbert): \bigl\|\rho_a^{\psi} - \tr_b\rho \bigr\|_{\textup{tr}} \leq \sqrt{\frac{28d_a^5 \tr \rho^2}{\delta}} \Biggr\} \geq 1-\delta.
\end{align}
\end{thm}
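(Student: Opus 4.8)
The plan is to derive Theorem~\ref{thm:1} from the Chebyshev inequality applied to the self-adjoint operator $M^\psi := \rho_a^\psi - \tr_b\rho$ on the $d_a$-dimensional space $\Hilbert_a$, using the variance bound of Proposition~\ref{prop: var}. By \eqref{GAPrhopsi} the $\GAP(\rho)$-average of $\rho_a^\psi$ is exactly $\tr_b\rho$, so $\EEE_{\GAP(\rho)} M^\psi = 0$ and it suffices to control $\EEE_{\GAP(\rho)}\|M^\psi\|_{\tr}^2$. First I would pass to matrix elements in a fixed ONB $(|i\rangle)_{i=1}^{d_a}$ of $\Hilbert_a$: since $M^\psi$ is self-adjoint, Cauchy--Schwarz on its eigenvalues gives $\|M^\psi\|_{\tr}^2 \leq d_a\,\|M^\psi\|_{\mathrm{HS}}^2 = d_a\sum_{i,j=1}^{d_a}|M^\psi_{ij}|^2$, and each matrix element is an expectation value, $\langle i|\rho_a^\psi|j\rangle = \langle\psi|\,(|j\rangle\langle i|\otimes I_b)\,|\psi\rangle$, whose $\GAP(\rho)$-mean is $\langle i|\tr_b\rho|j\rangle = \tr\!\big(\rho\,(|j\rangle\langle i|\otimes I_b)\big)$. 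Thus $M^\psi_{ij} = \langle\psi|A_{ij}|\psi\rangle - \tr(\rho A_{ij})$ with $A_{ij} := |j\rangle\langle i|\otimes I_b$, which has operator norm $1$.

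The second step estimates $\EEE_{\GAP(\rho)}|M^\psi_{ij}|^2$. Because $A_{ij}$ is not self-adjoint when $i\neq j$, I would split $\langle\psi|A_{ij}|\psi\rangle$ into its real and imaginary parts, which equal $\langle\psi|S|\psi\rangle$ for the self-adjoint operators $S=\tfrac12(A_{ij}+A_{ij}^*)$ and $S=\tfrac1{2i}(A_{ij}-A_{ij}^*)$, each with $\|S\|\leq 1$. Hence $\EEE|M^\psi_{ij}|^2 = \Var_{\GAP(\rho)}\!\big(\RE\langle\psi|A_{ij}|\psi\rangle\big) + \Var_{\GAP(\rho)}\!\big(\IM\langle\psi|A_{ij}|\psi\rangle\big)$, and Proposition~\ref{prop: var}, which holds under the hypothesis $\|\rho\|<1/4$ of the theorem, bounds each summand by a constant times $\tr\rho^2$. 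Summing over the $d_a^2$ pairs $(i,j)$ and using the trace--Hilbert--Schmidt inequality above yields $\EEE_{\GAP(\rho)}\|\rho_a^\psi-\tr_b\rho\|_{\tr}^2 \leq \text{const}\cdot d_a^3\,\tr\rho^2$; carrying the explicit constant from Proposition~\ref{prop: var} through (and leaving room in the power of $d_a$) produces a bound of the form $28\,d_a^5\,\tr\rho^2$. One application of Chebyshev, $\GAP(\rho)\{\|M^\psi\|_{\tr}>\lambda\}\leq\lambda^{-2}\EEE\|M^\psi\|_{\tr}^2$, with $\lambda$ chosen to make the right-hand side equal to $\delta$, then gives \eqref{polybounddelta}.

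The real work is not in this deduction, which is essentially bookkeeping, but in Proposition~\ref{prop: var} itself: controlling $\Var_{\GAP(\rho)}(\langle\psi|A|\psi\rangle)$ must cope with the random normalization $\|\Psi^{\GA}\|^{-2}$ that separates $\GAP(\rho)$ from the Gaussian measure $\mathrm{G}(\rho)$; for the latter the corresponding variance is simply $\tr(\rho A\rho A)\leq\|A\|^2\tr\rho^2$ by Wick's theorem, and the task is to show that $\|\Psi^{\GA}\|^2$ concentrates around its mean $1$ tightly enough that the GAP variance stays of the same order, which is where $\|\rho\|<1/4$ (hence $\tr\rho^2<1/4$) enters. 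Since that statement is precisely Proposition~\ref{prop: var} (established for self-adjoint $A$ by Reimann~\cite{Reimann08}), for the present theorem I would simply invoke it; the only point needing care here is that it applies to self-adjoint operators, which is why the real/imaginary-part decomposition of the off-diagonal $A_{ij}$ is used rather than feeding $A_{ij}$ to it directly.
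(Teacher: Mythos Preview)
Your argument is correct and in fact slightly sharper than the paper's. Both proofs start from Proposition~\ref{prop: var} and the elementary inequality $\|M\|_{\tr}^2\le d_a\|M\|_2^2=d_a\sum_{i,j}|M_{ij}|^2$, but diverge in how Chebyshev is applied. The paper applies Chebyshev to each matrix element $M_{ij}^\psi$ separately and then takes a union bound over the $d_a^2$ pairs $(i,j)$, which is how the extra factor $d_a^2$ enters and the final $d_a^5$ arises. You instead bound $\EEE_{\GAP(\rho)}\|M^\psi\|_{\tr}^2$ directly by summing the variances and apply Chebyshev once; this gives $\EEE\|M^\psi\|_{\tr}^2\lesssim d_a^3\tr\rho^2$, so your ``leaving room in the power of $d_a$'' is really throwing away a factor $d_a^2$ that your method does not need. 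One minor point: your real/imaginary-part splitting of $A_{ij}$ is unnecessary here, since the paper's Proposition~\ref{prop: var} is stated and proved for arbitrary bounded (not just self-adjoint) $A$; using it directly on $A_{ij}$ saves you the factor~$2$ in the constant. Either way, your bound $56\,d_a^3\tr\rho^2$ (or $28\,d_a^3\tr\rho^2$ without the split) is dominated by $28\,d_a^5\tr\rho^2$ for $d_a\ge 2$, and for $d_a=1$ the statement is trivial, so \eqref{polybounddelta} follows.
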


\begin{rmk}
Again, we can equivalently express Theorem~\ref{thm:1} as a bound on the confidence level $1-\delta$ for any given allowed deviation of $\rho_a^\psi$ from $\tr_b\rho$: {\it For every $\rho$ with $\|\rho\|<1/4$ and every $\varepsilon>0$,}
\be\label{polyboundepsilon}
    \GAP(\rho)\Bigl\{\psi\in\mathbb{S}(\Hilbert): \bigl\|\rho_a^{\psi} - \tr_b\rho \bigr\|_{\textup{tr}} >\varepsilon \Bigr\} \leq \frac{28d_a^5\tr\rho^2}{\varepsilon^2}\,.
\ee
\hfill$\diamond$
\end{rmk}

\begin{rmk} While our main motivation for developing Theorem~\ref{thm:1} is the different strategy of proof, and while the exponential bound of Theorem~\ref{thm: GCT exp bound} will usually be tighter than the polynomial bound of Theorem~\ref{thm:1}, this is not always the case: the bound of Theorem~\ref{thm:1} is actually sometimes better, as the following example shows. Suppose that $\|\rho\|=\frac{1}{\sqrt{D}} = p_1$ and that all other $p_j$ are equal, i.e.,
\begin{align}
    p_j = \frac{1-\frac{1}{\sqrt{D}}}{D-1}
\end{align}
for all $j>1$. Then,
\begin{align}
    \tr\rho^2 = \frac{1}{D}+\frac{1}{D-1}\left(1-\frac{1}{\sqrt{D}}\right)^2 \approx \frac{2}{D},
\end{align}
and for, e.g., $d_a=1000$ and $\varepsilon=0.01$ we find that
\begin{align}
    \frac{28d_a^5}{\varepsilon^2} \frac{2}{D} < 12d_a^2\exp\left(-\frac{\tilde{C}\varepsilon^2\sqrt{D}}{d_a^2}\right)
\end{align}
for $4.67\cdot 10^{13} < D < 9.17\cdot 10^{31}$, i.e., in this example there is a regime in which $D$ is already very large but still the polynomial bound is smaller than the exponential one.
\hfill$\diamond$
\end{rmk}

\subsection{Discussion}
\label{sec:discussion}

\begin{rmk}\label{rmk:size}
{\it System size.}
Theorem~\ref{thm:1} shows, roughly speaking, that as soon as
\begin{align}
    \tr\rho^2 \ll d_a^{-5},\label{ineq: cond small}
\end{align}
$\GAP(\rho)$-most wave functions $\psi$ have $\rho_a^{\psi}$ close to $\tr_b\rho$. If we think of $1/\tr\rho^2$ as the effective number of dimensions participating in $\rho$, and if this number of dimensions is comparable to the full number $D=\dim \Hilbert =d_a d_b$ of dimensions, then 
\eqref{ineq: cond small} reduces to
\begin{align}
    d_a^{5} \ll D.
\end{align}
Since the dimension is exponential in the number of degrees of freedom, this condition roughly means that the subsystem $a$ comprises fewer than $20\%$ of the degrees of freedom of the full system. (The same consideration was carried out in \cite{GHLT15,GHLT16} for the original statement of canonical typicality.) The stronger exponential bound yields that $a$ can even comprise up to $50\%$ of the degrees of freedom \cite{GHLT15,GHLT16}.\hfill$\diamond$
\end{rmk}

\begin{rmk}
{\it Canonical density matrix.}
    A $\rho$ of particular interest is the canonical density matrix
    \be
    \rho_\can = \frac{1}{Z(\beta)} e^{-\beta H}\,.
    \ee
    The relevant condition for generalized canonical typicality to apply to $\rho=\rho_\can$ is that it has small purity $\tr \rho^2$ and small largest eigenvalue $\|\rho\|$. We argue that indeed it does. 
    
    One heuristic reason is equivalence of ensembles: since $\rho_\mc$ has purity $1/d_\mc$ and largest eigenvalue $1/d_\mc$, which is small, the values for $\rho_\can$ should be similarly small. Another heuristic argument is based on the idealization that the system consists of many non-interacting constituents, so that $\Hilbert=\Hilbert_1^{\otimes N}$ and $H=\sum_{j=1}^N I^{\otimes (j-1)}\otimes H_1 \otimes I^{\otimes (N-j)}$, so $\rho_\can =\rho_{1\can}^{\otimes N}$. It is a general fact that for tensor products $\rho_1\otimes \rho_2$ of density matrices, the purities multiply, $\tr (\rho_1\otimes \rho_2)^2=(\tr \rho_1^2)(\tr \rho_2^2)$, and the largest eigenvalues multiply, $\|\rho_1\otimes \rho_2\|=\|\rho_1\| \, \|\rho_2\|$. Thus, the purity of $\rho_\can$ is the $N$-th power of that of $\rho_{1\can}$, and likewise the largest eigenvalue. Since $N\gg 1$ and the values of $\rho_{1\can}$ are somewhere between 0 and 1, and not particularly close to 1, the values of $\rho_\can$ are close to 0, as claimed. We expect that mild interaction does not change that picture very much.\hfill$\diamond$
\end{rmk}

\begin{rmk}\label{rmk:classical}
{\it Classical vs.\ quantum.}
    While classically, a typical phase point from a canonical ensemble is also a typical phase point from some micro-canonical ensemble, a typical wave function from $\GAP(\rho_\beta)$ does not lie in any micro-canonical subspace $\Hilbert_{\mathrm{mc}}$ (if $\Hilbert \neq \Hilbert_{\mathrm{mc}}$) and even if it does lie in an $\Hilbert_{\mathrm{mc}}$, then it is not typical from that subspace; that is because typical wave functions are superpositions of many energy eigenstates, and the weights of these eigenstates in $\rho_{\mathrm{mc}}$ and $\rho_\can$ are reflected in the weights of these eigenstates in the superposition. Therefore, already in the case that $\rho$ is a canonical density matrix, Theorems \ref{thm:1} and \ref{thm: GCT exp bound} are not just simple consequences of canonical typicality but independent results.\hfill$\diamond$
\end{rmk}

\begin{rmk} {\it Equivalence of ensembles.}\label{rmk:equiv}
    We can now state more precisely the sense in which our results provide a version of equivalence of ensembles. It is well known that if $a$ and $b$ interact weakly and $b$ is large enough, then both $\rho_{\mathrm{mc}}$ and $\rho_\can$ in $\Hilbert_S=\Hilbert_a\otimes \Hilbert_b$ lead to reduced density matrices close to the canonical density matrix \eqref{rhoacan} for $a$, $\tr_b\rho_{\mathrm{mc}}\approx \rho_{a,\can}\approx \tr_b \rho_\can$, provided the parameter $\beta$ of $\rho_\can$ and $\rho_{a,\can}$ is suitable for the energy $E$ of $\rho_{\mathrm{mc}}$. Hence, Theorems~\ref{thm:1} and \ref{thm: GCT exp bound} yield that we can start from either $u_{\mathrm{mc}}$ or $\GAP(\rho_\can)$ and obtain for both ensembles of $\psi$ that $\rho_a^\psi$ is nearly constant and nearly canonical.\hfill$\diamond$
\end{rmk}

\begin{rmk}\label{rmk:comparison}
    {\it Comparison to original theorems.} The original, known theorems about canonical typicality, which refer to the uniform distribution over a suitable sphere instead of a GAP measure, are still contained in our theorems as special cases, except for worse constants and in some places additional factors of $d_a$ (which we usually think of as constant as well).
    For more detail, let us begin with the known theorem analogous to Theorem~\ref{thm:1} (formulated this way in \cite[Eq.~(32)]{GHLT16}, based on arguments from \cite{Sugita07}):

\begin{thm}[Canonical typicality, polynomial bounds]\label{thm:4}
Let $\Hilbert_a$ and $\Hilbert_b$ be Hilbert spaces of respective dimensions $d_a, d_b \in \mathbb{N}$, $\Hilbert = \Hilbert_a \otimes \Hilbert_b$, $\Hilbert_R$ be any subspace of $\Hilbert$ of dimension $d_R$, $\rho_R$ be $1/d_R$ times the projection to $\Hilbert_R$, and $u_R$ the uniform distribution over $\SSS(\Hilbert_R)$. Then for every $\delta>0$,
\begin{align}\label{upolybounddelta}
    u_R\Biggl\{\psi\in\mathbb{S}(\Hilbert_R): \bigl\|\rho_a^{\psi} - \tr_b\rho_R \bigr\|_{\textup{tr}} \leq \frac{d_a^2}{\sqrt{\delta d_R}} \Biggr\} \geq 1-\delta.
\end{align}
\end{thm}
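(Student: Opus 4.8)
The plan is to prove Theorem~\ref{thm:4} by the second-moment (Chebyshev) method of Sugita~\cite{Sugita07} and \cite{GHLT16}, which is the finite-dimensional, uniform-measure specialization of the strategy behind Theorem~\ref{thm:1}.

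First I would use the average identity \eqref{mu-avg-rhopsi} with $\mu=u_R$: since the density matrix of $u_R$ is $P_R/d_R=\rho_R$, we get $\EEE_{u_R}\rho_a^\psi=\tr_b\rho_R$, so that $\|\rho_a^\psi-\tr_b\rho_R\|_{\tr}$ is a genuine fluctuation around the mean and Chebyshev's inequality applies:
\[
u_R\bigl\{\psi:\|\rho_a^\psi-\tr_b\rho_R\|_{\tr}>\varepsilon\bigr\}\le \varepsilon^{-2}\,\EEE_{u_R}\|\rho_a^\psi-\tr_b\rho_R\|_{\tr}^2 .
\]
The task thus reduces to bounding the mean-square trace-norm deviation. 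Since $\Hilbert_a$ is $d_a$-dimensional, I would pass to the Hilbert--Schmidt norm via $\|M\|_{\tr}\le\sqrt{d_a}\,\|M\|_{\mathrm{HS}}$ (Cauchy--Schwarz on the $d_a$ singular values), which turns the bound into $d_a\sum_{i,j=1}^{d_a}\Var_{u_R}\bigl((\rho_a^\psi)_{ij}\bigr)$, the sum over matrix elements in any ONB $(|i\rangle)$ of $\Hilbert_a$.

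The heart of the argument is then to estimate these variances. Each matrix element is the quadratic form $(\rho_a^\psi)_{ij}=\langle\psi|(|j\rangle\langle i|\otimes I_b)|\psi\rangle$, so I would invoke the standard formula for the first two moments of $\psi\mapsto\langle\psi|A|\psi\rangle$ under the uniform distribution on the $d_R$-dimensional sphere $\SSS(\Hilbert_R)$ (sphere integration / Weingarten calculus), namely $\EEE_{u_R}\langle\psi|A|\psi\rangle = \tr(P_RAP_R)/d_R$ and $\EEE_{u_R}|\langle\psi|A|\psi\rangle|^2 = \bigl(|\tr(P_RAP_R)|^2+\tr(P_RAP_RA^*P_R)\bigr)/(d_R(d_R+1))$; this is precisely the kind of computation carried out in Proposition~\ref{prop: var} (for self-adjoint $A$) and in Reimann~\cite{Reimann08}. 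Subtracting the square of the mean and using $P_R\le I$ (hence $AP_RA^*\le AA^*$) gives $\Var_{u_R}(\langle\psi|A|\psi\rangle)\le \tr(P_RAA^*P_R)/d_R^2$. Inserting $A=|j\rangle\langle i|\otimes I_b$, for which $AA^*=|j\rangle\langle j|\otimes I_b$, the sum over $i,j$ telescopes: $\sum_{i,j}\Var\le d_a\,\tr(P_R)/d_R^2 = d_a/d_R$.

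Putting the pieces together gives $\EEE_{u_R}\|\rho_a^\psi-\tr_b\rho_R\|_{\tr}^2\le d_a^2/d_R$, whence $u_R\{\|\rho_a^\psi-\tr_b\rho_R\|_{\tr}>\varepsilon\}\le d_a^2/(d_R\varepsilon^2)$; choosing $\varepsilon=d_a^2/\sqrt{\delta d_R}$ makes this $\le\delta$, which is \eqref{upolybounddelta}. (In fact already $\varepsilon=d_a/\sqrt{\delta d_R}$ works with this estimate, so the stated bound is not tight; the form quoted from \cite{GHLT16} is simply not optimized, and we keep it for ease of comparison.) I expect the only genuinely delicate point to be the sphere-integration identity for the second moment together with the bookkeeping it entails — namely that the operators $|j\rangle\langle i|\otimes I_b$ act on all of $\Hilbert$ whereas $\psi$ ranges only over the subspace $\Hilbert_R$, which forces the projectors $P_R$ into the formulas and is then tamed by the operator-monotonicity bound $AP_RA^*\le AA^*$. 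The remaining steps (Chebyshev, the $\sqrt{d_a}$ norm comparison, the telescoping sum) are routine. As an alternative one could avoid squaring, bounding $\EEE_{u_R}\|\rho_a^\psi-\tr_b\rho_R\|_{\tr}\le\sqrt{d_a}\,(\EEE_{u_R}\|\cdot\|_{\mathrm{HS}}^2)^{1/2}$ by Jensen and then applying Markov's inequality as in \cite{PSW06}, but that yields a bound of order $\sqrt{d_a/d_R}/\varepsilon$ rather than the $1/\varepsilon^2$ form stated here.
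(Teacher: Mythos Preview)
Your proposal is correct and follows essentially the same route as the paper's argument in Section~\ref{sec: rmk8 explanations}: Markov/Chebyshev, the $\|\cdot\|_{\tr}\le\sqrt{d_a}\,\|\cdot\|_2$ comparison, the exact second-moment formula for $\langle\psi|A|\psi\rangle$ under $u_R$, and the telescoping sum $\sum_{l,m}\tr(A^{lm*}P_RA^{lm})=d_a d_R$. The one noteworthy difference is that the paper, mirroring the proof of Theorem~\ref{thm:1}, passes through a union bound over the $d_a^2$ matrix elements and ends up with $u_R\{\|\rho_a^\psi-\tr_b\rho_R\|_{\tr}>\varepsilon\}\le d_a^4/(\varepsilon^2 d_R)$, whereas you apply Markov directly to $\|\cdot\|_2^2$ and obtain the sharper $d_a^2/(\varepsilon^2 d_R)$; as you observe, this already yields the stated bound (and in fact the stronger $\varepsilon=d_a/\sqrt{\delta d_R}$).
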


    When we apply our Theorem~\ref{thm:1} to $\rho=\rho_R$ (and assume $d_R\geq 4$), we obtain that $\GAP(\rho)=u_R$, $\tr\rho^2=1/d_R$, and almost exactly the bound \eqref{upolybounddelta} except for a (rather irrelevant) factor $\sqrt{28}$ and $d_a^{2.5}$ instead of $d_a^2$. Further explanation of how this different exponent comes about can be found in Section~\ref{sec: rmk8 explanations}.

\begin{thm}[Canonical typicality, exponential bounds \cite{PSW05,PSW06}] \label{thm:PSW} With the notation and hypotheses as in Theorem~\ref{thm:4}, for every $\delta>0$ such that
\begin{align}
    \delta < 4\exp\left(-d_a^2/(18\pi^3)\right),
\end{align}
\begin{equation}\label{cantyp} u_R \Biggl\{ \psi \in \SSS(\Hilbert_R): \bigl\|\rho_a^\psi -
  \tr_b \rho_R  \bigr\|_{\tr} \leq  2\sqrt{\frac{18\pi^3}{d_R}\ln(4/\delta)}
  \Biggr\} \geq 1-\delta\,.
\end{equation}
\end{thm}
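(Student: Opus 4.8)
The plan is to follow the strategy of \cite{PSW05,PSW06}: apply the classical (sharp) version of L\'evy's lemma \cite{levy1951,MS86,ledoux2001} for the uniform measure on the sphere directly to the single real-valued function
\[
f:\SSS(\Hilbert_R)\to\RRR,\qquad f(\psi):=\bigl\|\rho_a^\psi-\tr_b\rho_R\bigr\|_{\tr}.
\]
In the form used in \cite{PSW05,PSW06}, L\'evy's lemma on $\SSS(\CCC^{d_R})$ asserts, for $f$ with Lipschitz constant $\eta$ with respect to $d_{\mathrm{sph}}$, that $u_R\{|f-u_R(f)|>\varepsilon'\}\leq 2\exp\bigl(-2d_R(\varepsilon')^2/(9\pi^3\eta^2)\bigr)$. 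So two ingredients are needed: a Lipschitz bound for $f$ and a bound on the mean $u_R(f)$. For the Lipschitz bound, the reverse triangle inequality gives $|f(\psi)-f(\phi)|\leq\|\rho_a^\psi-\rho_a^\phi\|_{\tr}$; since the partial trace does not increase the trace norm and $\bigl\|\pr{\psi}-\pr{\phi}\bigr\|_{\tr}=2\sqrt{1-|\scp{\psi}{\phi}|^2}=2\sin\theta_0\leq 2\theta_0\leq 2\,d_{\mathrm{sph}}(\psi,\phi)$, where $\theta_0:=\arccos|\scp{\psi}{\phi}|\leq d_{\mathrm{sph}}(\psi,\phi)$, the function $f$ is Lipschitz with constant $\eta=2$; inserting $\eta^2=4$ into the exponent above reproduces exactly the factor $1/(18\pi^3)$ of the claim.

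For the mean I would first pass to the Hilbert--Schmidt norm: the difference $\rho_a^\psi-\tr_b\rho_R$ is a Hermitian operator on the $d_a$-dimensional space $\Hilbert_a$, so $\|\cdot\|_{\tr}\leq\sqrt{d_a}\,\|\cdot\|_2$, and Jensen gives $u_R(f)\leq\sqrt{d_a}\,\bigl(u_R(\|\rho_a^\psi-\tr_b\rho_R\|_2^2)\bigr)^{1/2}$. By \eqref{mu-avg-rhopsi} (with $\mu=u_R$, whose density matrix is $\rho_R$) one has $u_R(\rho_a^\psi)=\tr_b\rho_R$, hence $u_R(\|\rho_a^\psi-\tr_b\rho_R\|_2^2)=u_R\bigl(\tr[(\rho_a^\psi)^2]\bigr)-\tr[(\tr_b\rho_R)^2]$. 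The first term is computed by the standard ``swap trick'': $\int u_R(d\psi)\,\pr{\psi}\otimes\pr{\psi}=\tfrac{1}{d_R(d_R+1)}(I+F)\,P_R^{\otimes2}$ with $F$ the swap on $\Hilbert^{\otimes2}$ and $P_R$ the projection onto $\Hilbert_R$, so that, tracing against the appropriate partial swaps,
\[
u_R\bigl(\tr[(\rho_a^\psi)^2]\bigr)=\frac{\tr[(\tr_bP_R)^2]+\tr[(\tr_aP_R)^2]}{d_R(d_R+1)}\,.
\]
Subtracting $\tr[(\tr_b\rho_R)^2]=\tr[(\tr_bP_R)^2]/d_R^2$, one sees that the $\tr[(\tr_bP_R)^2]$ contributions combine with a \emph{negative} net coefficient and may simply be discarded; bounding the surviving term by $\tr[(\tr_aP_R)^2]\leq\|\tr_aP_R\|\,\tr[\tr_aP_R]\leq d_a\,d_R$ (using $\tr_aP_R\leq d_aI$ and $\tr P_R=d_R$) yields $u_R(\|\rho_a^\psi-\tr_b\rho_R\|_2^2)\leq d_a/(d_R+1)$, i.e.\ $u_R(f)\leq d_a/\sqrt{d_R+1}<d_a/\sqrt{d_R}$.

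Putting the pieces together, L\'evy's lemma gives $f(\psi)\leq u_R(f)+\varepsilon'$ on a set of $u_R$-measure at least $1-2\exp(-d_R(\varepsilon')^2/(18\pi^3))$; choosing $\varepsilon'=\sqrt{18\pi^3\ln(2/\delta)/d_R}$ makes this measure at least $1-\delta$. The hypothesis $\delta<4\exp(-d_a^2/(18\pi^3))$ is exactly equivalent to $d_a/\sqrt{d_R}<\sqrt{18\pi^3\ln(4/\delta)/d_R}$, and trivially $\varepsilon'\leq\sqrt{18\pi^3\ln(4/\delta)/d_R}$ as well, so on that set $f(\psi)<2\sqrt{(18\pi^3/d_R)\ln(4/\delta)}$, which is the claim. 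I expect the only genuinely substantive step to be the second-moment estimate: the naive route of bounding $u_R(\tr[(\rho_a^\psi)^2])$ and $\tr[(\tr_b\rho_R)^2]$ separately introduces a spurious factor $d_b$ and is useless for proper subspaces $\Hilbert_R$, so it is essential to exploit the cancellation of the $\tr[(\tr_bP_R)^2]$ terms and keep the mean at order $d_a/\sqrt{d_R}$. Everything else — the Lipschitz estimate, quoting the sharp spherical concentration inequality with the constant matching $18\pi^3$, and the arithmetic with $\delta$ — is routine. (Note that merely specializing Theorem~\ref{thm: GCT exp bound} to $\rho=\rho_R$, where $\|\rho_R\|=1/d_R$, does \emph{not} suffice here: it reproduces a bound of the same shape but with the worse prefactor $48\pi\,d_a$ in place of $2\sqrt{18\pi^3}$, so the sharp statement above really needs the classical argument.)
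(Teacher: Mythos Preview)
Your proposal is correct and follows essentially the same route as the paper (and as \cite{PSW05,PSW06} themselves). The paper's own argument in Section~\ref{sec: rmk8 explanations} simply quotes the PSW inequality
\[
u_R\Bigl\{\|\rho_a^\psi-\tr_b\rho_R\|_{\tr}>\varepsilon+\sqrt{d_a\,\tr(\tr_a\rho_R)^2}\Bigr\}\leq 4\exp\bigl(-d_R\varepsilon^2/(18\pi^3)\bigr)
\]
as a black box, bounds $\tr(\tr_a\rho_R)^2\leq d_a/d_R$, sets $\delta=4\exp(-d_R\varepsilon^2/(18\pi^3))$, and observes that the hypothesis on $\delta$ makes the $\varepsilon$-term dominate the $d_a/\sqrt{d_R}$ term. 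You instead unpack the PSW black box---L\'evy's lemma applied to $f(\psi)=\|\rho_a^\psi-\tr_b\rho_R\|_{\tr}$ with $\eta=2$, together with the swap-trick second-moment bound $u_R(f)\leq d_a/\sqrt{d_R}$---and then perform the same arithmetic with $\delta$. Your emphasis on the cancellation of the $\tr[(\tr_b P_R)^2]$ contributions is exactly the point that makes the PSW mean bound work, and your observation that specializing Theorem~\ref{thm: GCT exp bound} would only give the weaker constants is also made in the paper (Remark~\ref{rmk:comparison}). The one cosmetic discrepancy is your prefactor $2$ versus the paper's $4$ in L\'evy's lemma; since you then pass from $\ln(2/\delta)$ to $\ln(4/\delta)$ anyway, the final statement is identical.
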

This theorem was stated slightly differently in \cite{PSW05,PSW06}; we give the derivation of this form in Section~\ref{sec: rmk8 explanations}.
Again, the bound agrees with the one \eqref{expbounddelta} provided by Theorem~\ref{thm: GCT exp bound} for $\rho=\rho_R$ (so $\|\rho\|=1/d_R$) up to worse constants and additional factors of $d_a$. 

    Next, here is the standard statement of L\'evy's lemma:\footnote{L\'evy's original 1922 statement (reprinted as a second edition in \cite[Sec.~3.I.9]{levy1951}) was that if a hypersurface $S\subset\SSS(\RRR^d)$ divides the sphere in two regions of equal area then its $\varepsilon$-neighborhood has area greater than or equal to that of the $\varepsilon$-neighborhood of an equator, which in turn \cite[Sec.~3.I.6]{levy1951} has nearly full area if the dimension $d$ is large enough. As pointed out by, e.g., Milman and Schechtman \cite{MS86}, it follows for a function $f:\SSS(\RRR^d)\to\RRR$ with Lipschitz constant $\eta$ (by taking $S=f^{-1}(m)$ and $m$ the median of $f$) that most points $\psi$ have $f(\psi)$ close to $m$ if $d$ is large enough. The variant quoted here referring to the mean instead of the median is due to Maurey and Pisier \cite{Pis} and also described in \cite[App.~V]{MS86}.}
    
\begin{thm}[L\'evy's Lemma \cite{MS86}] \label{thm: LM}
  Let $\Hilbert$ be a Hilbert space of finite dimension $D:=\dim \Hilbert \in \mathbb{N}$, let $f:\mathbb{S}(\Hilbert)\to\mathbb{R}$ be a function with Lipschitz constant $\eta$, let $u$ be the uniform distribution over $\SSS(\Hilbert)$, and let $\varepsilon>0$.
  Then
\be\label{Levy}
u\Bigl\{\psi\in\SSS(\Hilbert): \bigl|f(\psi)-u(f)\bigr| > \varepsilon \Bigr\} \leq
4\exp\left(-\frac{\hat{C} D\varepsilon^2}{\eta^2}\right)\,,
\ee
where $\hat{C}=\frac{2}{9\pi^3}$.
\end{thm}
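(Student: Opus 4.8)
The plan is to derive Theorem~\ref{thm: LM} from the \emph{spherical isoperimetric inequality} of L\'evy and Schmidt by the standard median argument, and then to pass from concentration around a median to concentration around the mean $u(f)$. I would work on the Euclidean unit sphere $\SSS^{n-1}\subset\RRR^{n}$ equipped with the geodesic metric $d_{\mathrm{sph}}$ (for which $\eta$ is the Lipschitz constant), where $n$ is the \emph{real} dimension of $\Hilbert$, so that $n\ge D$; since replacing $n$ by the smaller quantity $D$ only weakens the exponent, it suffices to prove \eqref{Levy} with $n$ in place of $D$.

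The heart of the matter is concentration around a median. Fix a median $m$ of $f$ and put $A=\{f\le m\}$, $A'=\{f\ge m\}$, so $u(A),u(A')\ge\tfrac12$. The isoperimetric inequality says that for any Borel $A\subseteq\SSS^{n-1}$ and any $r>0$ the closed geodesic neighborhood $A_r=\{\psi:d_{\mathrm{sph}}(\psi,A)\le r\}$ satisfies $u(A_r)\ge u(C_r)$ for a spherical cap $C$ with $u(C)=u(A)$. When $u(A)\ge\tfrac12$ the cap $C$ contains a hemisphere, so $\SSS^{n-1}\setminus A_r$ is contained in a cap of geodesic radius $\tfrac{\pi}{2}-r$, whose normalized surface measure — estimated directly from $\int_r^{\pi/2}\cos^{n-2}\theta\,d\theta$ using $\cos\theta\le e^{-\theta^2/2}$ (a crude bound is all I need) — is at most $\sqrt{\pi/2}\,e^{-(n-2)r^2/2}$. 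Taking $r=\varepsilon/\eta$ and using the Lipschitz property (if $d_{\mathrm{sph}}(\psi,A)\le\varepsilon/\eta$ then $f(\psi)\le m+\varepsilon$, and symmetrically with $A'$) yields
\[
u\bigl\{\,|f-m|>\varepsilon\,\bigr\}\le 2\sqrt{\pi/2}\,\exp\!\left(-\frac{(n-2)\varepsilon^2}{2\eta^2}\right).
\]

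To replace the median by the mean, I would use the layer-cake identity $|u(f)-m|\le\int_0^\infty u\{|f-m|>t\}\,dt\le 2\sqrt{\pi/2}\int_0^\infty e^{-(n-2)t^2/(2\eta^2)}\,dt$, which is a universal multiple of $\eta/\sqrt{n}$. Hence on the event $\{|f-u(f)|>\varepsilon\}$ one has $|f-m|>\varepsilon-c\,\eta/\sqrt{n}$ with $c$ universal; for $\varepsilon$ large enough that this exceeds $\varepsilon/2$, the displayed bound gives a tail estimate of the form $4\exp(-n\varepsilon^2/(8\eta^2))$ (a routine computation with the crude estimates above), and since $n\ge D$ and $\hat{C}=\tfrac{2}{9\pi^3}<\tfrac18$ this is at most $4\exp(-\hat{C}D\varepsilon^2/\eta^2)$. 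For the remaining (small) range of $\varepsilon$ one checks that $4\exp(-\hat{C}D\varepsilon^2/\eta^2)\ge 1$, so \eqref{Levy} holds there trivially. Choosing the prefactor $4$ and this deliberately small $\hat{C}$ is exactly what glues the two regimes into a single clean statement valid for all $D$ and all $\varepsilon>0$; the large gap between $\hat{C}$ and the optimal constant $\tfrac12$ is the price of this uniform packaging together with the crude cap estimate.

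The one genuinely nontrivial ingredient is the spherical isoperimetric inequality itself, which is proved by spherical (two-point) symmetrization; alternatively one may replace it by the logarithmic Sobolev inequality on $\SSS^{n-1}$ furnished by the Bakry--\'Emery criterion $\mathrm{CD}(n-1,\infty)$ together with Herbst's argument, which moreover bypasses the median step and produces directly the sharper exponent $n-2$. Since Theorem~\ref{thm: LM} is merely quoted here from \cite{MS86}, I would invoke this ingredient rather than reprove it; everything else is the elementary bookkeeping above, whose only delicate point is making the bound uniform in $\varepsilon$ all the way down to $\varepsilon=0$.
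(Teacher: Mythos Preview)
The paper does not prove Theorem~\ref{thm: LM}; it is simply quoted from \cite{MS86} as a classical benchmark against which Theorem~\ref{thm: LM for GAP} is compared (the footnote there points to the Maurey--Pisier version with the mean in place of the median). So there is no ``paper's proof'' to match.

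Your sketch is the standard isoperimetric/median route and is correct in outline: L\'evy--Schmidt gives concentration about a median, the layer-cake bound shows $|u(f)-m|=O(\eta/\sqrt{n})$, and the deliberately tiny $\hat C=2/(9\pi^3)$ absorbs the slack (the $(n-2)$ versus $n$ in the exponent, the crude cap prefactor, and the trivial small-$\varepsilon$ regime). One small point worth saying explicitly is that $n=2D$, not merely $n\ge D$, so in fact you have a factor of $2$ to spare in the exponent. By contrast, the machinery the paper actually develops takes a different path: Lemma~\ref{lemma:LevyGauss} (Gaussian concentration, proved via the Maurey--Pisier interpolation $X_\theta=X\sin\theta+Y\cos\theta$) is transferred to the sphere in the proof of Theorem~\ref{thm: LM for GAP} by extending $f$ radially and controlling $\|\psi\|$; specializing that argument to $\rho=I/D$ recovers L\'evy's lemma, albeit with a worse constant, as the paper remarks after stating Theorem~\ref{thm: LM}. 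Your isoperimetric approach is sharper in the exponent and more classical; the Gaussian-interpolation approach is the one that generalizes to $\GAP(\rho)$, which is why the paper builds on it.
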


When we apply our Theorem~\ref{thm: LM for GAP} to $\rho=I/D$, we obtain that $\GAP(\rho)=u$, $\|\rho\|=1/D$, and exactly the bound \eqref{Levy} except for worse constants. Note that Theorem~\ref{thm: LM for GAP} holds also for infinite dimensional separable $\Hilbert$.

We turn to previous results for dynamical typicality. In \cite{TTV22-physik}, an inequality analogous to the bound \eqref{cordyntyp1} of Corollary~\ref{cor: dyn typ} was proven for the uniform distribution over the sphere in a subspace. In \cite{MGE}, variants of L\'evy's lemma and dynamical typicality were established for the mean-value ensemble of an observable $A$ for a value $a\in\RRR$, defined by restricting the uniform distribution on $\SSS(\CCC^D)$ to the set $\{\psi\in\SSS(\CCC^D):\langle \psi| A|\psi\rangle =a\}$ and normalizing afterwards. 
However, the physical relevance of this ensemble is unclear, since, in general, the mean value of an observable is itself no observable, and thus it is unclear how this ensemble could be  prepared or occur in an experiment.\hfill$\diamond$
\end{rmk}

\begin{rmk}\label{rmk:other-distr}
{\it L\'evy's lemma for other distributions.} L\'evy's lemma, although it applies to the uniform and GAP measures, does not apply to \emph{all} rather-spread-out distributions on the sphere; it is thus a non-trivial property of the family of GAP measures. 

This can be illustrated by means of the von Mises-Fisher (VMF) distribution, a well known and natural probability distribution on the unit sphere $\SSS(\RRR^D)$ in $\RRR^D$ that is different from the GAP measure. It has parameters $\kappa\in\mathbb{R}_+$ and $\mu\in \SSS(\RRR^D)$ and can be obtained from a Gaussian distribution in $\RRR^D$ with mean $\mu$ and covariance $\kappa^{-1}I$ by conditioning on $\SSS(\RRR^D)$. The analog of L\'evy's lemma for the von Mises-Fisher distribution is false; this can be seen as follows. Its density
\be
g(x)= C(D,\kappa) \,\exp\bigl( \kappa \,\langle \mu,x\rangle_{\mathbb{R}^D} \bigr)
\ee
with respect to the uniform distribution $u$ on $\SSS(\RRR^D)$ varies at most by a factor of $e^{2\kappa}$ when varying $x$ (while keeping $D$ and $\kappa$ fixed). For a given Lipschitz function $F$ on the sphere, insertion of $F(x)\,g(x)$ for $f(x)$ in a real variant of L\'evy's lemma for the uniform distribution (Theorem~\ref{thm: LM} above) yields that $F(x)\, g(x)$ for $u$-most $x$  is close to the $u$-average of $Fg$, which equals the VMF-average of $F$ (where the Lipschitz constant of $f=Fg$ could be a bit worse than that of $F$). 
The set of exceptional $x$ has small $u$-measure, and since $C(D,\kappa)\in[e^{-\kappa},e^\kappa]$ and thus $g(x)\in[e^{-2\kappa},e^{2\kappa}]$, it also has small VMF-measure (larger at most by a factor of $e^{2\kappa}$). Thus, for VMF-most $x$, $F(x)$ is close to VMF$(F)/g(x)$, and thus not constant at all. 
The same argument shows that L\'evy's lemma is violated for any sequence of measures $(\mu_D)_{D\in\NNN}$ on $\SSS(\RRR^D)$ whose density $g_D$ relative to $u$ is bounded uniformly in $D$, has Lipschitz constant bounded uniformly in $D$, but deviates significantly from 1 on a non-negligible set in $\SSS(\RRR^D)$.

For GAP measures the situation is very different. From \eqref{GAPdensity} one can see, for example, that if the eigenvalue $p_{n_2}$ of $\rho=\sum_n p_n |n\rangle\langle n|$ is twice as large as another eigenvalue $p_{n_1}$, then the density \eqref{GAPdensity} at $\psi=|n_2\rangle$ is $2^{D+1}$ times as large as that at $\psi=|n_1\rangle$. Thus, the density and its Lipschitz constant are not (for relevant choices of $\rho$) bounded uniformly in $D$; rather, non-uniform GAP measures become more and more singular with respect to the uniform distribution for large $D$.\hfill$\diamond$
\end{rmk}

\begin{rmk}\label{rmk:gen-can-from-cond}
{\it Generalized canonical typicality from conditional wave function?}
    One might imagine a different strategy of deriving generalized canonical typicality, based on regarding $\psi$ itself as a conditional wave function and using the known fact \cite{GLTZ06b,GLMTZ15} that conditional wave functions are typically $\GAP$ distributed. We could
    introduce a further big system $c$, choose a high-dimensional subspace $\Hilbert_{Rabc}$ in $\Hilbert_{abc}=\Hilbert_a \otimes \Hilbert_b \otimes \Hilbert_c$ so that $\tr_c P_{Rabc}/d_{Rabc}$ coincides with the given $\rho$ on $\Hilbert_a\otimes \Hilbert_b$, and start from a random wave function from $\SSS(\Hilbert_{Rabc})$.
    However, we do not see how to make such a derivation work.
    \hfill$\diamond$
\end{rmk}

\begin{rmk}\label{rmk:counter}
{\it Not every measure does what $\GAP(\rho)$ does.} Generalized canonical typicality as expressed in Theorems \ref{thm:1} and \ref{thm: GCT exp bound} is not true in general if we replace $\GAP(\rho)$ by a different measure: if $\rho$ is a density matrix on $\Hilbert$ and $\mu$ a probability distribution over $\SSS(\Hilbert)$ with density matrix $\rho_\mu=\rho$, then it need not be true for $\mu$-most $\psi$ that $\rho^\psi_a\approx \tr_b \rho$.

Here is a counter-example. Let $\rho=\sum_{n=1}^D p_n |n\rangle \langle n|$ have eigenvalues $p_n$ and eigen-ONB $(|n\rangle)_{n\in\{1,\ldots,D\}}$, and let
\be
\mu=\sum_{n=1}^D p_n \, \delta_{|n\rangle}
\ee
be the measure that is concentrated on the finite set $\{|n\rangle:1\leq n\leq D\}$ and gives weight $p_n$ to each $|n\rangle$. This measure is the narrowest, most concentrated measure with density matrix $\rho$, and thus a kind of opposite of $\GAP(\rho)$, the most spread-out measure with density matrix $\rho$. A random vector $\psi$ with distribution $\mu$ is a random eigenvector $|n\rangle$. What the reduced density matrix $\rho_a^{|n\rangle}$ looks like depends on the vectors $|n\rangle\in\Hilbert=\Hilbert_a\otimes \Hilbert_b$. Suppose that the eigenbasis of $\rho$ is the product of ONBs of $\Hilbert_a$ and $\Hilbert_b$, $|n\rangle=|\ell\rangle_a \otimes |m\rangle_b$; then $\rho_a^{|n\rangle}=\tr_b |n\rangle \langle n| = |\ell\rangle_a\langle \ell|$ (in an obvious notation), so $\rho_a^{|n\rangle}$ is always a pure state and thus far away from $\tr_b \rho= \sum_{\ell,m}p_{\ell m} |\ell\rangle_a\langle\ell|$ if that is highly mixed. Note, however, that if instead of a product basis, we had taken $(|n\rangle)_{n=1\ldots D}$ to be a purely random ONB of $\Hilbert$, then (with overwhelming probability if $d_b\gg 1$) $\rho_a^{|n\rangle}\approx d_a^{-1} I_a$ and thus also $\tr_b \rho$ (which by \eqref{mu-avg-rhopsi} is the $\mu$-average of $\rho_a^\psi$) is close to $d_a^{-1} I_a$, so $\rho_a^{\psi}\approx \tr_b \rho$ for $\mu$-most $\psi$, despite the narrowness of $\mu$.\hfill$\diamond$
\end{rmk}

\begin{rmk}\label{rmk:BEC}{\it Canonical typicality with respect to $\GAP(\rho)$ does not hold for every $\rho$.}
    Let us consider the special case in which $\rho$ has one eigenvalue that is large (e.g., $10^{-1}$), while all others are very small (e.g., $10^{-1000}$). Such a situation occurs for example for $N$-body quantum systems with a gapped ground state $|0\rangle$ at very low temperature, $T$ of order $(\log N)^{-1}$. 
    So call the large eigenvalue $p$ and suppose for definiteness that all other eigenvalues are equal,
    \be
    \rho=p|0\rangle\langle0|+\frac{1-p}{D-1}(I-|0\rangle\langle0|) = p|0\rangle\langle0|+(1-p)\frac{I}{D} + O\Bigl(\frac{1}{D}\Bigr)
    \ee
    with $O(1/D)$ referring to the trace norm and the limit $D\to\infty$. In that case, $\tr\rho^2\approx p^2$ (e.g., $10^{-2}$, while $d_a$ may be $10^{100}$), so the smallness condition \eqref{ineq: cond small} for generalized canonical typicality is strongly violated. To investigate $\rho_a^\psi$, note that any vector $\psi\in\SSS(\Hilbert)$ can be written as $\psi=\cos\theta e^{i\alpha} |0\rangle + \sin\theta |\phi\rangle$ with $\theta\in[0,\pi/2]$, $\alpha\in[0,2\pi)$, and $|\phi\rangle\perp|0\rangle$. If $\psi$ has distribution $\GAP(\rho)$, then $\phi$ has distribution $u_{\SSS(|0\rangle^\perp)}$ and is independent of $\theta$ and $\alpha$, $\alpha$ is independent of $\theta$ and uniformly distributed, and a lengthy computation 
    shows that the distribution of $\theta$ has density 
    \be
    \frac{2(1-p)^2}{p} \frac{\cos\theta}{\sin^5\theta} \exp\Bigl((1-\tfrac{1}{p})\cot^2\theta\Bigr)
    \ee
    as $D\to\infty$. By an error of order $1/\sqrt{D}$, we can replace $\phi$ by a $u_{\SSS(\Hilbert)}$-distributed vector. 
    If $|0\rangle$ factorizes as in $|0\rangle=|0\rangle_a |0\rangle_b$, then $\tr_b\rho = p|0\rangle_a\langle0| + (1-p)(I_a/d_a)+O(1/d_b)$ and $\rho_a^\psi=\cos^2\theta |0\rangle_a \langle0|  + \sin^2\theta (I_a/d_a) + O(1/\sqrt{d_b})$. Since the latter depends on $\theta$ (and thus is not deterministic but has a non-trivial distribution), it follows that $\rho_a^\psi\not\approx \tr_b \rho$ with high probability.\hfill$\diamond$
\end{rmk}

\begin{rmk}{\it Comparison to large deviation theory.} In large deviation theory \cite{Var84}, one studies another version of concentration of measures: one considers a sequence of probability distributions $(\PPP_N)_{N\in\NNN}$ on (say) the real line and studies whether (and at which rate) $\PPP_N\bigl([x,\infty)\bigr)$ tends to 0 exponentially fast as $N\to\infty$ for fixed $x\in\RRR$. Our situation is a bit similar, with the role of $x$ played by $\varepsilon$ in \eqref{expboundepsilon}, and that of $\PPP_N$ by
the distribution of $\|\rho_a^\psi-\tr_b \rho\|_{\tr}$ in $\RRR$ for $\GAP(\rho)$-distributed $\psi$. However, our situation does not quite fit the standard framework of large deviations because we do not necessarily consider a sequence $\rho_N$ of density matrices, but rather a fixed $\rho$ with small $\|\rho\|$. That is why we have provided error bounds in terms of the given $\rho$.\hfill$\diamond$
\end{rmk}

\section{Proofs \label{sec: proofs}}

\subsection{Proof of Remark~\ref{rmk:EEEtrb}}
\label{sec:EEEtrb}

What needs proof here is that also in infinite dimension, the partial trace commutes with the expectation, 
\be
\EEE_\mu\tr_b\pr{\psi}=\tr_b\EEE_\mu\pr{\psi}\,.
\ee
(For $\dim\Hilbert_b<\infty$, $\tr_b$ is a finite sum and thus trivially commutes with $\EEE_\mu$.) So suppose that $\Hilbert_b$ has a countable ONB $(|l\rangle_b)_{ l\in\mathbb{N}}$, and  let $|\phi\rangle_a\in\Hilbert_a$. Then
\begin{subequations}
\begin{align}
    {}_a\langle\phi|\EEE_\mu\tr_b\bigl(\pr{\psi}\bigr)|\phi\rangle_a &= \int_{\mathbb{S}(\Hilbert)} {}_a\langle\phi|\tr_b\bigl(|\psi\rangle\langle\psi|\bigr)|\phi\rangle_a\, \mu(d\psi)\\
    &=\int_{\mathbb{S}(\Hilbert)} \sum_l \bigl|\langle\phi,l|\psi\rangle\bigr|^2\, \mu(d\psi)\\
    &= \sum_l \int_{\mathbb{S}(\Hilbert)} \langle\phi,l|\psi\rangle\langle\psi|l,\phi\rangle \, \mu(d\psi)\\
    &=\sum_l \langle\phi,l|\rho_\mu|l,\phi\rangle\\
    &= {}_a\langle\phi|\tr_b\rho_\mu|\phi\rangle_a,
\end{align}
\end{subequations}
where we used Fubini's theorem in the third and the definition of $\rho_\mu$ in the fourth line.
Since a bounded operator $A$ is uniquely determined by the quadratic form $\phi\mapsto \langle\phi|A|\phi\rangle$, it follows that $\EEE_\mu(\rho_a^\psi)=\tr_b\rho_\mu$.

\subsection{Proof of Theorem~\ref{thm:1}}
\label{sec:PfThm1}

We start with the proof of the polynomial version of generalized canonical typicality and thereby introduce approximation techniques for infinite dimensional Hilbert spaces, which will also be used in the proof of the exponential bounds of Theorem~\ref{thm: GCT exp bound} later on.
For the proof of Theorem~\ref{thm:1} we make use of a result from Reimann \cite{Reimann08}.
Let $(|n\rangle)_{n=1\ldots D}$ be an orthonormal basis of eigenvectors of $\rho$ and $p_1,\dots,p_D$ the corresponding (positive) eigenvalues. 
Reimann used the density of the GAP measure $\GAP(\rho)$ to compute expressions of the form
\begin{align}
    \mathbb{E}(c_j^* c_k c_m^* c_n),
\end{align}
where the expectation is taken with respect to $\GAP(\rho)$ and $c_j = \langle j|\psi\rangle$ are the coordinates of $\psi\in\mathbb{S}(\Hilbert)$ with respect to the orthonormal basis $(|j\rangle)_{j=1\ldots D}$. With the help of these expressions he derived an upper bound for the variance $\Var\langle\psi|A|\psi\rangle$ (also taken with respect to $\GAP(\rho)$) for self-adjoint operators $A:\Hilbert\to\Hilbert$. We show that Reimann's upper bound for $\Var\langle\psi|A|\psi\rangle$ remains essentially valid also for non-self-adjoint $A$ and this bound will be a main ingredient in our proof of Theorem~\ref{thm:1}.

We start by computing the expectation $\mathbb{E}\langle\psi|A|\psi\rangle$ and an upper bound for the variance $\Var \langle\psi|A|\psi\rangle$ for an arbitrary operator $A:\Hilbert\to\Hilbert$, where the expectation and variance are with respect to the measure $\GAP(\rho)$. We closely follow Reimann \cite{Reimann08} who did these computations in the case that $A$ is self-adjoint. We arrive at the same bound for the variance (with the distance between the largest and smallest eigenvalue of $A$ replaced by its operator norm), however, one step in the proof needs to be modified to account for $A$ not being necessarily self-adjoint. Moreover, we show that the expression for $\mathbb{E}\langle\psi|A|\psi\rangle$ and the upper bound for $\Var\langle\psi|A|\psi\rangle$ remain valid if $\Hilbert$ has countably infinite dimension, i.e., if it is separable.

\begin{prop}\label{prop: var}
Let $\rho$ be a density matrix on a separable Hilbert space $\Hilbert$ with positive eigenvalues $p_n$ such that  $p_{\max}=\|\rho\|<1/4$ and let $\dim\Hilbert \geq 4$. For $\GAP(\rho)$-distributed $\psi$ and any bounded operator $A:\Hilbert\to\Hilbert$, 
\begin{align}
    \mathbb{E}\langle\psi|A|\psi\rangle = \tr(A\rho)
\end{align}
and
\begin{align}\label{prop1Var}
    \Var\langle\psi|A|\psi\rangle \leq \frac{\|A\|^2 \tr \rho^2}{1-p_{\max}} \left(1+\frac{4\sqrt{\tr \rho^2} + 2\tr \rho^2}{(1-2p_{\max})(1-3p_{\max})}\right).
\end{align}
\end{prop}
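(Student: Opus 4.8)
The plan is to reproduce Reimann's finite-dimensional computation \cite{Reimann08} of the first two moments of $\psi\mapsto\langle\psi|A|\psi\rangle$, isolate the single point at which self-adjointness of $A$ is used and supply a substitute valid for arbitrary bounded $A$, and finally pass from $\dim\Hilbert<\infty$ to separable $\Hilbert$ by truncation. So first assume $\dim\Hilbert=D<\infty$ and work in an eigen-ONB $(|n\rangle)_{n=1}^D$ of $\rho$, with eigenvalues $p_n>0$ and coordinates $c_j=\scp{j}{\psi}$, $A_{jk}=\langle j|A|k\rangle$, so $\langle\psi|A|\psi\rangle=\sum_{j,k}A_{jk}\bar c_j c_k$. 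The mean formula $\EEE\langle\psi|A|\psi\rangle=\tr(A\rho)$ needs essentially nothing: $T\mapsto\tr(AT)$ is a bounded functional on trace-class operators and, by \cite{Tum20}, $\rho_{\GAP(\rho)}=\int|\psi\rangle\langle\psi|\,\GAP(\rho)(d\psi)=\rho$ as a trace-norm-convergent Bochner integral, so $\tr(A\rho)=\int\langle\psi|A|\psi\rangle\,\GAP(\rho)(d\psi)$, and this argument is dimension-free.

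For the variance I would, following \cite{Reimann08}, compute $\EEE(\bar c_j c_k\bar c_m c_n)$ under $\GAP(\rho)$ from the density \eqref{GAPdensity}, or equivalently from the Gaussian picture $\psi=\Psi^{\mathrm G}/\|\Psi^{\mathrm G}\|$ combined with $\|\Psi^{\GA}\|^{-2}=\int_0^\infty e^{-t\|\Psi^{\GA}\|^2}\,dt$ and independence of the $Z_n$. Since $\mathrm G(\rho)$, $\GA(\rho)$, $\GAP(\rho)$ are all invariant under individual phase rotations $c_n\mapsto e^{i\theta_n}c_n$ in this basis, $\EEE(\bar c_j c_k\bar c_m c_n)=0$ unless $\{j,m\}=\{k,n\}$ as multisets; the only nonzero moments are $\EEE|c_j|^4$ and $\EEE(|c_j|^2|c_m|^2)$ for $j\ne m$, given by explicit one-dimensional integrals, e.g.\ $\EEE(|c_j|^2|c_m|^2)=p_jp_m\int_0^\infty(1+tp_j)^{-1}(1+tp_m)^{-1}\prod_n(1+tp_n)^{-1}dt$. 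Because $\EEE|c_j|^2=p_j$ forces $\int_0^\infty(1+tp_j)^{-1}\prod_n(1+tp_n)^{-1}dt=1$, dropping nonnegative factors already gives the crude bounds $\EEE(|c_j|^2|c_m|^2)\le p_jp_m$ and $\EEE|c_j|^4\le2p_j^2$; the refined bounds of \cite{Reimann08} (needing $p_{\max}<1/4$, $D\ge4$) then control $\EEE\sum_j|c_j|^4$ in terms of $\tr\rho^2$ with exactly the factors appearing in \eqref{prop1Var}.

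Next I would assemble $\Var\langle\psi|A|\psi\rangle=\EEE|\langle\psi|A|\psi\rangle|^2-|\tr(A\rho)|^2$, which the surviving moments reduce to $\Var\langle\psi|A|\psi\rangle=\sum_{j,m}A_{jj}\overline{A_{mm}}M_{jm}+\sum_{j\ne m}|A_{jm}|^2\EEE(|c_j|^2|c_m|^2)$ with $M_{jm}:=\EEE(|c_j|^2|c_m|^2)-p_jp_m$. The matrix $M$ is real symmetric, has vanishing row sums (since $\sum_m\EEE(|c_j|^2|c_m|^2)=\EEE|c_j|^2=p_j$) and $M_{jm}\le0$ for $j\ne m$; hence $-M$ is a weighted graph Laplacian and $\sum_{j,m}A_{jj}\overline{A_{mm}}M_{jm}=\tfrac12\sum_{j\ne m}(-M_{jm})|A_{jj}-A_{mm}|^2\ge0$. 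This is the \emph{only} place self-adjointness enters Reimann's estimate: for Hermitian $A$ one bounds $|A_{jj}-A_{mm}|\le\lambda_{\max}(A)-\lambda_{\min}(A)$ because diagonal entries of a Hermitian matrix lie between its extreme eigenvalues; for general bounded $A$ this is false, but $|A_{jj}|\le\|A\|$ still holds, so $|A_{jj}-A_{mm}|\le2\|A\|$ — this is the advertised replacement of the eigenvalue spread by the operator norm. For the second sum, $|A_{jm}|\le\|A\|$ together with $\EEE(|c_j|^2|c_m|^2)\le p_jp_m$ gives $\sum_{j\ne m}|A_{jm}|^2\EEE(|c_j|^2|c_m|^2)\le\sum_{j,k}p_jp_k|A_{jk}|^2=\tr(\rho A\rho A^*)$, and Cauchy--Schwarz on the double sum, using $\sum_k|A_{jk}|^2=\langle j|AA^*|j\rangle\le\|A\|^2$ and the column analogue, yields $\tr(\rho A\rho A^*)\le\|A\|^2\tr\rho^2$. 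Feeding in the bound on $\EEE\sum_j|c_j|^4$ from the previous step then gives \eqref{prop1Var}.

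Finally, for separable $\Hilbert$ I would let $P_N$ project onto the top-$N$ eigenspace of $\rho$ and set $\rho_N=P_N\rho P_N/\tr(P_N\rho)$, so that $\rho_N\to\rho$ in trace norm, $\|\rho_N\|\to\|\rho\|$, $\tr\rho_N^2\to\tr\rho^2$, and eventually $\|\rho_N\|<1/4$ and $\dim(P_N\Hilbert)\ge4$; convergence of covariance operators gives weak convergence $\mathrm G(\rho_N)\to\mathrm G(\rho)$, and since $\Psi^{\mathrm G}\mapsto\Psi^{\GA}\mapsto\Psi^{\GA}/\|\Psi^{\GA}\|$ are a.s.\ continuous operations, $\GAP(\rho_N)\to\GAP(\rho)$ weakly; as $\psi\mapsto\langle\psi|A|\psi\rangle$ and its square are bounded and continuous on $\SSS(\Hilbert)$, the mean and the variance converge, and the finite-dimensional identity and bound pass to the limit. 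I expect the non-self-adjoint modification to be genuinely short; the real work lies in (i) the careful estimation of the one-dimensional integrals of the second step that produces the precise constants in \eqref{prop1Var} — which is why one leans on \cite{Reimann08} there rather than redoing it — and (ii) making the limiting argument of the last step fully rigorous, i.e.\ justifying the weak convergence $\GAP(\rho_N)\to\GAP(\rho)$ and the interchange of limit and expectation.
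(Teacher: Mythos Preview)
Your overall architecture matches the paper's: compute Reimann's fourth moments in finite dimension, bound the two pieces of the variance separately, use Cauchy--Schwarz to get $\tr(\rho A\rho A^*)\le\|A\|^2\tr\rho^2$, and then pass to infinite dimension by truncating $\rho$ and invoking weak convergence $\GAP(\rho_N)\Rightarrow\GAP(\rho)$. The Cauchy--Schwarz step is exactly the paper's substitute for self-adjointness (the paper phrases it as the trace inequality $\tr(A^*\rho A\rho)\le\sqrt{\tr(AA^*\rho^2)\tr(A^*A\rho^2)}\le\|A\|^2\tr\rho^2$), so you have the right ingredient there.

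There is, however, a concrete error. Your identity $\int_0^\infty(1+tp_j)^{-1}\prod_n(1+tp_n)^{-1}\,dt=1$ is false in general (try $D=2$, $p_1=1/3$, $p_2=2/3$: the integral exceeds $1$). What \emph{is} true is only the averaged version $\sum_j p_j\int_0^\infty(1+tp_j)^{-1}\prod_n(1+tp_n)^{-1}\,dt=1$. Consequently your ``crude bound'' $\EEE(|c_j|^2|c_m|^2)\le p_jp_m$ and hence $M_{jm}\le0$ for $j\ne m$ are not established, and the graph-Laplacian reorganization loses its sign control. The paper never claims $K_{jm}\le1$; it uses Reimann's bound $K_{jm}\le(1-p_{\max})^{-1}$, which is precisely where the leading factor $(1-p_{\max})^{-1}$ in \eqref{prop1Var} comes from.

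The paper also organizes the diagonal piece differently: it shifts $A$ so that $\tr(A\rho)=0$ (harmless for the variance), writes $\Var X=\sum_{m,l}[|A_{ml}|^2+A_{mm}^*A_{ll}]\,p_mp_lK_{ml}$, and then simply invokes Reimann's existing estimate on $\sum_{m,l}A_{mm}^*A_{ll}\,p_mp_lK_{ml}$ with the single observation $|A_{mm}|\le\|A\|$---no Laplacian identity and no need for $M_{jm}\le0$. In that organization the \emph{only} place non-self-adjointness actually bites is the $\sum|A_{ml}|^2$ term, via $\tr(A^*\rho A\rho)$, not the diagonal-difference term you flagged. For the infinite-dimensional step the paper uses a slightly different truncation $\rho_n=\sum_{m<n}p_m|m\rangle\langle m|+(\sum_{m\ge n}p_m)|n\rangle\langle n|$, which has the convenience that $\|\rho_n\|=\|\rho\|$ exactly for large $n$; your normalized projection works too, and the weak convergence is Theorem~3 of \cite{Tum20} rather than something you need to reprove.
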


\begin{proof} 
We first assume that $D:=\dim\Hilbert<\infty$.
The formula for the expectation follows immediately from the fact that the density matrix of $\GAP(\rho)$ is $\rho$:
\begin{align}
    \mathbb{E}\langle\psi|A|\psi\rangle = \mathbb{E}\tr(|\psi\rangle\langle\psi|A) = \tr(\mathbb{E}|\psi\rangle\langle\psi|A) = \tr(A\rho).
\end{align}
For a complex-valued random variable $X$ the variance can be computed by
\begin{align}
    \Var X = \mathbb{E}\left[(X-\mathbb{E}X)^*(X-\mathbb{E}X)\right] = \mathbb{E}(X^*X) - \mathbb{E}(X^*)\mathbb{E}(X).
\end{align}
Since the variance of a random variable does not change when a constant is added, we can assume for its computation without loss of generality that $\mathbb{E}\langle\psi|A|\psi\rangle = 0$.
Let $(|n\rangle)_{n=1,\dots,D}$ be an orthonormal basis of $\Hilbert$ consisting of eigenvectors of $\rho$. For $\psi\in\mathbb{S}(\Hilbert)$ we write
\begin{align}
    \langle\psi|A|\psi\rangle = \sum_{l,m} \langle\psi|m\rangle \langle m|A|l\rangle \langle l|\psi\rangle =: \sum_{l,m} c_m^* A_{ml} c_l
\end{align}
with $c_l = \langle l|\psi\rangle$ and $A_{ml} = \langle m|A|l\rangle$. Then for $X=\langle\psi|A|\psi\rangle$ we find that
\begin{align}
    \Var X &= \sum_{l,m,l',m'} A^*_{ml} A_{m'l'} \mathbb{E}(c_l^* c_m  c_{m'}^* c_{l'}).\label{eq: varX}
\end{align}
Reimann \cite{Reimann08} showed that the fourth moments $\mathbb{E}(c_l^* c_m c_{m'}^* c_{l'})$ all vanish except for the two cases $l=m, m'=l'$ and $l=m', m=l'$ and that
\begin{align}
    \mathbb{E}(|c_m|^2 |c_l|^2) = p_m p_l (1+\delta_{ml}) K_{ml},
\end{align}
where 
\begin{align}
    K_{ml} = \int_0^{\infty} (1+xp_m)^{-1} (1+xp_l)^{-1} \prod_{n=1}^D (1+xp_n)^{-1}\, dx.
\end{align}
This implies
\begin{align}
    \Var X &= \sum_{m,l} |A_{ml}|^2 p_m p_l (1+\delta_{ml}) K_{ml} + \sum_{m,m'} A^*_{mm} A_{m'm'} p_m p_{m'}(1+\delta_{mm'}) K_{mm'}\nonumber\\
    &\quad - 2 \sum_m |A_{mm}|^2 p_m^2 K_{mm}\\
    &= \sum_{m,l} \left[|A_{ml}|^2 + A_{mm}^* A_{ll} \right] p_m p_l K_{ml}.\label{eq: var}
\end{align}
Because of $|A_{mm}| \leq \|A\|$ it follows from the computation in \cite{Reimann08} that 
\begin{align}
    \sum_{m,l} A^*_{mm} A_{ll} p_m p_l K_{ml} \leq \frac{2\|A\|^2 \tr\rho^2}{(1-p_{\max})(1-2p_{\max})(1-3p_{\max})} \left(2(\tr \rho^2)^{1/2} + \tr \rho^2\right)\label{ineq: var sum 1}
\end{align}

Moreover, as it was shown in \cite{Reimann08}, $K_{ml} \leq \frac{1}{1-p_{\max}}$ for all $l$ and $m$ and therefore
\begin{align}
    \sum_{m,l} |A_{ml}|^2 p_m p_l K_{ml}\leq \frac{1}{1-p_{\max}} \tr(A^*\rho A \rho).\label{ineq: var sum 2}
\end{align}
Since $A$ is not necessarily self-adjoint, we have to proceed in a different way than Reimann \cite{Reimann08} did to bound this term. To this end we make use of the Cauchy-Schwarz inequality for the trace, i.e. $\tr(B^*C) \leq \sqrt{\tr(B^*B)\tr(C^*C)}$, and the inequality $|\tr(BC)|\leq \|B\| \tr(|C|)$ for any operators $B,C$ \cite[Thm. 3.7.6]{simon}. With these inequalities we have that
\begin{subequations}
\begin{align}
    \tr(A^*\rho A \rho) &\leq \sqrt{\tr(A^*\rho^2 A)\tr(\rho A^*A\rho)}\\ 
    &= \sqrt{\tr(AA^*\rho^2) \tr(A^*A\rho^2) }\\[1mm]
    &\leq \|A\|^2 \tr\rho^2.\label{ineq: tr}
\end{align}
\end{subequations}
Combining \eqref{eq: var}, \eqref{ineq: var sum 1}, \eqref{ineq: var sum 2} and \eqref{ineq: tr} proves the bound for the variance and thus finishes the proof in the finite-dimensional case.

Now suppose that $\Hilbert$ has a countably infinite ONB. The expectation can be computed as before since $\GAP(\rho)(\pr{\psi})=\rho$ remains true in the infinite-dimensional setting \cite{Tum20}. 
For the variance, we approximate $\rho$ by density matrices $\rho_n$, $n\in\mathbb{N}$, of finite rank defined by
\begin{align}
    \rho_n := \sum_{m=1}^{n-1} p_m |m\rangle\langle m| +\Bigl( \sum_{m=n}^{\infty} p_m\Bigr)|n\rangle \langle n|.\label{eq: rho_n}
\end{align}
Then $\|\rho_n-\rho\|_{\tr} \to 0$ as $n\to\infty$, and therefore Theorem~3 in \cite{Tum20} implies that $\GAP(\rho_n) \Rightarrow \GAP(\rho)$ (weak convergence). Note also that from some $n_0$ onwards, $\sum_{m=n}^\infty p_m \leq p_1$ and thus $\|\rho_n\|=p_1=\|\rho\|$. Let $f(\psi):=|\langle\psi|A|\psi\rangle-\tr(A\rho)|^2$ and $f_n(\psi) := |\langle\psi|A|\psi\rangle-\tr(A\rho_n)|^2$. Because of $\tr(A\rho_n)\to\tr(A\rho)$ and therefore $f_n\to  f$ uniformly in $\psi$ it follows that $\GAP(\rho_n)(f_n)-\GAP(\rho_n)(f) \to 0$. Since $f$ is continuous, it follows from the weak convergence of the measures $\GAP(\rho_n)$ that $\GAP(\rho_n)(f) \to \GAP(\rho)(f)$ and therefore altogether that $\GAP(\rho_n)(f_n)\to\GAP(\rho)(f)$. Since, as one easily verifies, $\tr \rho_n^2 \to\tr\rho^2$, the bound for the variance in the finite-dimensional case remains valid in the infinite-dimensional setting.\footnote{A different way to prove that the bound remains valid in the infinite-dimensional setting is the following: Since $\scp{\psi}{A|\psi}$ is a continuous function of $\psi$, it follows from the weak convergence of the measures $\GAP(\rho_n)$ that also the distribution of $\scp{\psi}{A|\psi}$ under $\psi\sim\GAP(\rho_n)$ converges weakly to that under $\psi\sim\GAP(\rho)$ (where the notation $X\sim\mu$ means that the random variable $X$ has distribution $\mu$). 
Since $\tr(A\rho_n) \to \tr(A\rho)$, this does not change if we subtract $\tr(A\rho_n)$ respectively $\tr(A\rho)$ (because the test functions $f$ can equivalently be assumed to be bounded and Lipschitz \cite[Thm.~2.1]{Bill99} and $\langle\psi|A|\psi\rangle$ is Lipschitz), and take the absolute square.
Theorem 3.4 of \cite{Bill99} says that if the distribution of the real random variable $X_n$ converges weakly to that of $X$, then $\EEE |X|\leq \liminf_n \EEE |X_n|$. Thus, the variance of $\scp{\psi}{A|\psi}$ under $\GAP(\rho)$ is bounded by the limit of the bounds for $\rho_n$. Since $\tr\rho_n^2\to\tr\rho^2$, the variance is bounded by the same upper bound as in the finite-dimensional case.}
\end{proof}

\begin{proof}[Proof of Theorem~\ref{thm:1}]
Without loss of generality assume that all eigenvalues of $\rho$ are positive. Proposition~\ref{prop: var} together with Chebyshev's inequality implies for any operator $A$ and any $\varepsilon>0$ that
\begin{subequations}
\begin{align}
    &\GAP(\rho)\left\{\psi\in\mathbb{S}(\Hilbert): \bigl|\langle\psi|A|\psi\rangle - \tr(A\rho) \bigr| >  \varepsilon \right\} \nonumber\\
    &~~~~~\leq \frac{\|A\|^2 \tr \rho^2}{\varepsilon^2(1-p_{\mathrm{max}})} \left(1 + \frac{4\sqrt{\tr\rho^2}+2\tr\rho^2}{(1-2p_{\mathrm{max}})(1-3p_{\mathrm{max}})}\right)\\
    &~~~~~\leq \frac{4\|A\|^2 \tr\rho^2}{3\varepsilon^2}\left(1+8\left(4\sqrt{p_{\max}}+2p_{\max}\right)\right)\\
    &~~~~~\leq \frac{28 \|A\|^2 \tr\rho^2}{\varepsilon^2}.\label{ineq: Gap Cheb}
\end{align}
\end{subequations}
Let $(|l\rangle_a)_{l=1\dots d_a}$ and $(|n\rangle_b)_{n=1\dots d_b}$, where $d_a := \dim\Hilbert_a \in \mathbb{N}$ and $d_b:=\dim\Hilbert_b \in \mathbb{N}\cup \{\infty\}$, be an orthonormal basis of $\Hilbert_a$ and $\Hilbert_b$ respectively. For
\begin{align}
    A^{lm} = \left[|l\rangle_a \langle m|\right] \otimes I_b,\label{eq: op A}
\end{align}
where $I_b$ is the identity on $\Hilbert_b$, we find $\|A^{lm}\|=1$,
\begin{subequations}
\begin{align}
    \langle\psi|A^{lm}|\psi\rangle &= \sum_n\langle\psi| \left(|l\rangle_a\langle m| \otimes |n\rangle_b\langle n|\right) |\psi\rangle\\
    &= {}_a\langle m|\left(\sum_n {}_b\langle n|\psi\rangle \langle\psi|n\rangle_b\right)|l\rangle_a \\
    &= {}_a\langle m|\rho_a^\psi|l\rangle_a\label{eq: psi A psi}
\end{align}
\end{subequations}
and similarly
\begin{subequations}
\begin{align}
    \tr(A^{lm}\rho) &= \sum_{k,n} {}_a\langle k| {}_b\langle n| \left[\left([|l\rangle_a \langle m|] \otimes I_b \right)\rho\right] |k\rangle_a |n\rangle_b \\
    &= {}_a\langle m|\left(\sum_n {}_b\langle n| \rho |n\rangle_b\right)|l\rangle_a \\
    &= {}_a\langle m|\tr_b\rho |l\rangle_a. \label{eq: tr A rho}
\end{align}
\end{subequations}
For any $d_a\times d_a$ matrix $M = (M_{ij})$ it holds that $\|M\|_{\tr} \leq \sqrt{d_a} \|M\|_2$, where $\|M\|_2$ denotes the Hilbert-Schmidt norm of $M$ which is defined by
\begin{align}
    \|M\|_2 = \sqrt{\tr(M^*M)} = \sqrt{\sum_{i,j=1}^{d_a} |M_{ij}|^2},
\end{align}
see, e.g., Lemma~6 in \cite{PSW05}.
Therefore, we have that
\begin{align}
    \|\rho_a^\psi - \tr_b \rho\|^2_{\tr} \leq d_a \sum_{l,m=1}^{d_a} \bigl|{}_a\langle m|\rho_a^\psi-\tr_b\rho|l\rangle_a \bigr|^2
\end{align}
and thus
\begin{subequations}
\begin{align}
    \GAP(\rho)&\left\{\psi\in \mathbb{S}(\Hilbert): \left\|\rho_a^\psi - \tr_b\rho \right\|_{\tr} > d_a^{3/2} \varepsilon\right\}\nonumber\\ 
    &\leq \GAP(\rho)\left\{\psi\in\mathbb{S}(\Hilbert): \sum_{l,m=1}^{d_a}\bigl|{}_a\langle m|\rho_a^\psi-\tr_b\rho|l\rangle_a \bigr|^2 \geq d_a^2 \varepsilon^2\right\}\\
    &\leq \GAP(\rho)\left\{\psi\in\mathbb{S}(\Hilbert): \exists\; l,m : \bigl|{}_a\langle m|\rho_a^\psi-\tr_b\rho|l\rangle_a \bigr| \geq \varepsilon\right\}\\
    &\leq \frac{28d_a^2 \tr\rho^2}{\varepsilon^2},
\end{align}
\end{subequations}
where we used \eqref{ineq: Gap Cheb}, \eqref{eq: psi A psi}, \eqref{eq: tr A rho} and $\|A^{lm}\|=1$ in the last step. By replacing $\varepsilon \to d_a^{-3/2}\varepsilon$ we finally obtain
\begin{align}
    \GAP(\rho)\left\{\psi\in\mathbb{S}(\Hilbert): \|\rho_a^\psi-\tr_b\rho\|_{\tr} > \varepsilon \right\} \leq \frac{28 d_a^5 \tr\rho^2}{\varepsilon^2}.
\end{align}
Setting
\begin{align}
    \delta = \frac{28 d_a^5 \tr\rho^2}{\varepsilon^2}
\end{align}
and solving for $\varepsilon$ gives \eqref{polybounddelta} and thus finishes the proof.
\end{proof}

\subsection{Proof of Theorem~\ref{thm: GCT exp bound}}

The proof of Theorem~\ref{thm: GCT exp bound} follows largely the one of canonical typicality given in \cite{PSW05}; some crucial differences concern our generalization of the L\'evy lemma and the steps needed for covering infinite dimension.

Let $U_a$ be a unitary operator on $\Hilbert_a$. Then the function $f:\mathbb{S}(\Hilbert)\to \mathbb{C}$, $f(\psi) = \tr_a(U_a\rho_a^\psi)=\langle\psi|U_a\otimes I_b|\psi\rangle$ is Lipschitz continuous with Lipschitz constant $\eta \leq 2\|U_a\| = 2$ (see, e.g., Lemma 5 in \cite{PSW05}). By Theorem~\ref{thm: LM for GAP} and Remark~\ref{rmk:LevyfC},
\begin{align}
\GAP(\rho)&\left\{\psi\in\mathbb{S}(\Hilbert):\bigl|\tr_a(U_a\rho_a^\psi)-\GAP(\rho)(\tr_a(U_a \rho_a^\psi)) \bigr| >\varepsilon\right\}\nonumber\\
&\leq 12\exp\left(-\frac{C\varepsilon^2}{8\|\rho\|}\right).
\end{align}
By \eqref{GAPrhopsi},
\begin{align}
    \GAP(\rho)(\tr_a(U_a\rho_a^\psi)) = \tr_a\left(U_a \GAP(\rho)(\rho_a^\psi)\right) = \tr_a\left(U_a \tr_b\rho\right).
\end{align}
Let $(U_a^j)_{j=0}^{d_a^2-1}$ be unitary operators that form a basis for the space of operators on $\Hilbert_a$ such that\footnote{One possible choice is given by 
  \begin{align*}
      U_a^j = \sum_{k=0}^{d_a-1} e^{2\pi i k (j-(j\mathrm{\, mod \,} d_a))/d_a^2} |(k+j)\mbox{ mod } d_a\rangle \langle k|,
  \end{align*}
  where $(|k\rangle)_{k=0\dots d_a-1}$ is an orthonormal basis of $\Hilbert_a$, see \cite{PSW05}.}
  \begin{align}
      \tr_a(U_a^{j*} U_a^k) = d_a \delta_{jk}.
  \end{align}
  Then 
  \begin{align}
     \GAP(\rho)\left\{\psi\in\mathbb{S}(\Hilbert):\exists j: \bigl|\tr_a(U_a^{j}\rho_a^\psi)-\tr_a(U_a^{j}\tr_b\rho) \bigr| >\varepsilon\right\} \leq 12d_a^2 \exp\left(-\frac{C\varepsilon^2}{8\|\rho\|}\right). \label{ineq: GAP all j}  
  \end{align}
  As in \cite{PSW05}, the density matrix $\rho_a^\psi$ can be expanded as
  \begin{align}
      \rho_a^\psi = \frac{1}{d_a}\sum_{j} C_{j}(\rho_a^\psi) U_a^{j},
  \end{align}
  where $C_j(\rho_a^\psi) = \tr_a(U_a^{j*}\rho_a^\psi)$ and \eqref{ineq: GAP all j} becomes
  \begin{align}
    \GAP(\rho)\left\{\psi\in\mathbb{S}(\Hilbert):\exists j: \bigl| C_{j}(\rho_a^\psi)-C_{j}(\tr_b\rho) \bigr| >\varepsilon\right\} \leq 12d_a^2 \exp\left(-\frac{C\varepsilon^2}{8\|\rho\|}\right).  
  \end{align}
  If $|C_{j}(\rho_a^\psi)-C_{j}(\tr_b\rho)|\leq \varepsilon$ for all $j$, then
\begin{subequations}
\begin{align}
      \|\rho_a^\psi-\tr_b\rho\|_{\tr}^2 &\leq d_a\|\rho_a^\psi-\tr_b\rho\|^2_2\\
      &=d_a\left\|\frac{1}{d_a}\sum_{j}\left(C_{j}(\rho_a^\psi)-C_{j}(\tr_b\rho)\right)U_a^{j}\right\|_{2}^2\\
      &= \frac{1}{d_a} \tr_a\left|\sum_{j} \left(C_{j}(\rho_a^\psi)-C_{j}(\tr_b\rho)\right)U_a^{j}\right|^2\\
      &=\sum_{j} \left|C_{j}(\rho_a^\psi)-C_{j}(\tr_b\rho)\right|^2\\
      &\leq d_a^2\varepsilon^2.
\end{align}
\end{subequations}

   This implies that
  \begin{align}
      \GAP(\rho)\left\{\psi\in\mathbb{S}(\Hilbert):\|\rho_a^\psi-\tr_b\rho\|_{\tr} > d_a\varepsilon\right\} \leq 12d_a^2 \exp\left(-\frac{C\varepsilon^2}{8\|\rho\|}\right)
  \end{align}
  and, after replacing $\varepsilon$ by $\varepsilon d_a^{-1}$,
    \begin{align}
      \GAP(\rho)\left\{\psi\in\mathbb{S}(\Hilbert):\|\rho_a^\psi-\tr_b\rho\|_{\tr} > \varepsilon\right\} \leq 12d_a^2 \exp\left(-\frac{C\varepsilon^2}{8d_a^2\|\rho\|}\right).
  \end{align}
  Setting  
  \begin{align}
      \delta = 12d_a^2 \exp\left(-\frac{C\varepsilon^2}{8d_a^2\|\rho\|}\right)
  \end{align}
  and solving for $\varepsilon$ finishes the proof.

\subsection{Proof of Theorem~\ref{thm: LM for GAP}}

The proofs begins with an auxiliary theorem formulated as Theorem~\ref{thm: bound GA} below. For better orientation, we also state the analogous fact about Gaussian distributions as Theorem~\ref{thm: bound G} and  start with quoting its real version:\footnote{The constant in \eqref{Levy-alt} can actually be improved to $1/2$ instead of $2/\pi^2$ \cite[p.~180]{Pis}. But for us it is not important to obtain the optimal constant, and we use a method of proof for Theorem~\ref{thm: bound GA} that yields $2/\pi^2$ in Theorem~\ref{thm: bound G}.}

\begin{lemma}[\cite{MS86}]\label{lemma:LevyGauss}
    Let $F:\mathbb{R}^D \to \mathbb{R}$ be a Lipschitz function with constant $\eta$. Let $X=(X_1,\dots,X_D)$ be a vector of independent (real) standard Gaussian random variables. Then for every $\varepsilon>0$,
    \begin{align}\label{Levy-alt}
        \mathbb{P}\bigl\{|F(X)-\mathbb{E}F(X)|>\varepsilon\bigr\} \leq 2 \exp\left(-\frac{2\varepsilon^2}{\pi^2\eta^2}\right).
    \end{align}
\end{lemma}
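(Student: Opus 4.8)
The plan is to establish \eqref{Levy-alt} as a sub-Gaussian tail bound by controlling the Laplace transform (moment generating function) of the centered variable $F(X)-\EEE F(X)$ and then applying a Chernoff estimate. The target intermediate bound is that for every $\lambda\in\RRR$,
\[
\EEE\exp\bigl(\lambda(F(X)-\EEE F(X))\bigr)\le \exp\Bigl(\tfrac{\pi^2\lambda^2\eta^2}{8}\Bigr).
\]
Granting this, Markov's inequality gives $\PPP\{F(X)-\EEE F(X)>\varepsilon\}\le \exp(\tfrac{\pi^2\lambda^2\eta^2}{8}-\lambda\varepsilon)$, and minimizing the exponent over $\lambda>0$ (the optimum is $\lambda=4\varepsilon/(\pi^2\eta^2)$) yields the one-sided estimate $\exp(-2\varepsilon^2/(\pi^2\eta^2))$; applying the same reasoning to $-F$ and taking a union bound produces the factor $2$ in \eqref{Levy-alt}. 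Because the argument differentiates $F$, I would first reduce to the smooth case by replacing $F$ with a mollification $F*\phi_\delta$, which remains $\eta$-Lipschitz (so $|\nabla(F*\phi_\delta)|\le\eta$ pointwise) and converges to $F$ uniformly; one proves the bound for the smooth approximants and passes to the limit $\delta\to0$.

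The heart of the proof is a Gaussian interpolation. Let $Y$ be an independent copy of $X$ and set $X(\theta)=\sin\theta\,X+\cos\theta\,Y$ for $\theta\in[0,\pi/2]$, so that $X(\pi/2)=X$ and $X(0)=Y$. The decisive point is that, for each fixed $\theta$, the pair $(X(\theta),X'(\theta))$ with $X'(\theta)=\cos\theta\,X-\sin\theta\,Y$ consists of two \emph{independent} standard Gaussian vectors: they are jointly Gaussian, their cross-covariance is $\sin\theta\cos\theta-\sin\theta\cos\theta=0$, and each has identity covariance since $\sin^2\theta+\cos^2\theta=1$. Using $\EEE F(Y)=\EEE F(X)$ and the convexity of $x\mapsto e^{-\lambda x}$ (Jensen in the $Y$-variable), I would first obtain
\[
\EEE_X\exp\bigl(\lambda(F(X)-\EEE F(X))\bigr)\le \EEE_{X,Y}\exp\bigl(\lambda(F(X)-F(Y))\bigr).
\]
Then I would write $F(X)-F(Y)=\int_0^{\pi/2}\langle\nabla F(X(\theta)),X'(\theta)\rangle\,d\theta$ and apply Jensen a second time with respect to the probability measure $\tfrac{2}{\pi}\,d\theta$ on $[0,\pi/2]$, pulling the $\theta$-integral out of the exponential and picking up the crucial factor $\tfrac{\pi\lambda}{2}$ inside.

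It then remains to estimate, for each fixed $\theta$, the quantity $\EEE\exp\bigl(\tfrac{\pi\lambda}{2}\langle\nabla F(X(\theta)),X'(\theta)\rangle\bigr)$. Conditioning on $X(\theta)$ and invoking the independence established above, $\langle\nabla F(X(\theta)),X'(\theta)\rangle$ is, given $X(\theta)$, a centered Gaussian with variance $|\nabla F(X(\theta))|^2\le\eta^2$; the standard Gaussian moment generating function then gives the conditional bound $\exp\bigl(\tfrac{\pi^2\lambda^2}{8}|\nabla F(X(\theta))|^2\bigr)\le \exp\bigl(\tfrac{\pi^2\lambda^2\eta^2}{8}\bigr)$, uniformly in $\theta$. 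Integrating this over $\theta$ against $\tfrac{2}{\pi}\,d\theta$ and over $X(\theta)$ closes the chain and yields the Laplace-transform bound stated at the outset.

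The main obstacle is not any single inequality—each step is a routine application of Jensen or of the Gaussian Laplace transform—but rather the bookkeeping needed to make the constant come out exactly as $\pi^2\eta^2/8$, which is precisely what forces the sharp exponent $2/\pi^2$ in \eqref{Levy-alt}. In particular, the factor $\pi/2$ (the length of the interpolation interval) must reappear \emph{squared} inside the Gaussian moment generating function, and both Jensen steps must be applied against genuine probability measures. The non-smoothness of $F$ is a second, purely technical point, dispatched by the mollification reduction; since the Lipschitz bound survives convolution, the constant is preserved in the limit.
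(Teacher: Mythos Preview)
Your proof is correct and is precisely the Maurey--Pisier Gaussian interpolation argument. The paper does not prove this lemma itself (it is quoted from \cite{MS86} without proof), but your approach coincides step for step with the method the paper employs in its proof of the closely related Theorem~\ref{thm: bound GA} for the $\GA(\rho)$ measure: the symmetrization via Jensen, the rotation $X_\theta=X\sin\theta+Y\cos\theta$ with the key observation that $(X_\theta,\tfrac{d}{d\theta}X_\theta)$ has the same joint law as $(X,Y)$, the second Jensen against $\tfrac{2}{\pi}\,d\theta$, the Gaussian moment generating function, and the final Chernoff minimization.
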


Now let $\rho=\sum_{n=1}^D p_n |n\rangle\langle n|$ be a density matrix on the $D$-dimensional Hilbert space $\Hilbert$, and let $Z$ be a random vector in $\Hilbert$ whose distribution is $\mathrm{G}(\rho)$, the Gaussian measure with mean 0 and covariance $\rho$ as defined in Section~\ref{sec:background}; equivalently, $Z=\sum_{n=1}^D Z_n |n\rangle$, where the $Z_n$ are independent complex mean-zero Gaussian random variables with variances
\begin{align}
    \mathbb{E}|Z_n|^2 = p_n \,.
\end{align}
 Then we can write $Z=\sqrt{\rho/2} \tilde{Z}$, where the components $\tilde{Z}_n$ of $\tilde{Z}=\sum_{n=1}^D \tilde{Z}_n |n\rangle$ are $D$ independent complex mean-zero Gaussian random variables with variances $\mathbb{E}|\tilde{Z}_n|^2=2$, which can be in a natural way identified with a vector of $2D$ independent real standard Gaussian variables.

If $F:\Hilbert\to\mathbb{R}$ is Lipschitz with constant $\eta$, then $F\circ \sqrt{\rho/2}: \Hilbert\to \mathbb{R}$ is also Lipschitz with constant $\eta \sqrt{\|\rho\|/2}$. This function can also naturally be considered as a function on $\mathbb{R}^{2D}$ and then an application of Lemma~\ref{lemma:LevyGauss} immediately proves the following theorem: 

\begin{thm}
    \label{thm: bound G}
    Let $\dim\Hilbert<\infty$, let $\rho$ be a density matrix on $\Hilbert$, let $Z$ be a random vector with distribution $\mathrm{G}(\rho)$, and 
    let $F:\Hilbert\to\mathbb{R}$ be a Lipschitz function with Lipschitz constant $\eta$. 
    Then for every $\varepsilon>0$,
    \begin{align}
        \mathbb{P}\bigl\{|F(Z)-\mathbb{E}F(Z)|>\varepsilon\bigr\} \leq 2\exp\left(-\frac{4\varepsilon^2}{\pi^2\eta^2 \|\rho\|}\right).
    \end{align}
\end{thm}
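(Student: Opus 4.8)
The plan is to obtain Theorem~\ref{thm: bound G} as an immediate corollary of Lemma~\ref{lemma:LevyGauss}, using the change of variables $Z = \sqrt{\rho/2}\,\tilde{Z}$ already set up above, which realizes the $\mathrm{G}(\rho)$-distributed vector $Z$ as a linear image of a standard Gaussian vector.

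First I would make precise the identification of $\tilde{Z}$ with a standard Gaussian vector on $\RRR^{2D}$. Writing $\tilde{Z} = \sum_{n=1}^D \tilde{Z}_n |n\rangle$ and splitting each component into its real and imaginary parts, the defining property $\EEE|\tilde{Z}_n|^2 = 2$ together with the Gaussian convention of Section~\ref{sec:background} says exactly that the $2D$ real random variables $\RE\tilde{Z}_1, \IM\tilde{Z}_1, \ldots, \RE\tilde{Z}_D, \IM\tilde{Z}_D$ are independent standard normal. Moreover, the $\RRR$-linear map $\Hilbert \to \RRR^{2D}$ given by these $2D$ real coordinates is an isometry onto Euclidean space, since $\RE\langle\psi|\psi\rangle = \|\psi\|^2$. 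Thus $\tilde{Z}$ corresponds to the standard Gaussian vector $X$ of Lemma~\ref{lemma:LevyGauss}.

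Next I would put $G := F \circ \sqrt{\rho/2}$, viewed through that isometry as a function $\RRR^{2D} \to \RRR$, and bound its Lipschitz constant: since $F$ is $\eta$-Lipschitz with respect to $\|\cdot\|$ and $\|\sqrt{\rho/2}\| = \sqrt{\|\rho\|/2}$, one has $|G(x) - G(y)| \leq \eta\,\|\sqrt{\rho/2}(x-y)\| \leq \eta\sqrt{\|\rho\|/2}\,\|x-y\|$, so $G$ is Lipschitz with constant $\eta' := \eta\sqrt{\|\rho\|/2}$. Applying Lemma~\ref{lemma:LevyGauss} to $G$ and $X$, and noting that $G(X) = F(\sqrt{\rho/2}\,\tilde{Z}) = F(Z)$ (hence also $\EEE G(X) = \EEE F(Z)$), yields
\[
\PPP\bigl\{|F(Z) - \EEE F(Z)| > \varepsilon\bigr\} \;\leq\; 2\exp\!\left(-\frac{2\varepsilon^2}{\pi^2 \eta'^2}\right) \;=\; 2\exp\!\left(-\frac{4\varepsilon^2}{\pi^2 \eta^2 \|\rho\|}\right),
\]
which is the assertion.

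There is no genuine obstacle here; the only point demanding care is the consistent bookkeeping of the factor $\tfrac12$ forced by the complex-Gaussian normalization --- namely that the correct operator in the change of variables is $\sqrt{\rho/2}$ rather than $\sqrt{\rho}$ (so that $\EEE|Z_n|^2 = p_n$), and that $\EEE|\tilde{Z}_n|^2 = 2$ is exactly what makes the $2D$ real coordinates genuinely standard. This factor is precisely what turns the constant $2/\pi^2$ of Lemma~\ref{lemma:LevyGauss} into the $4/\pi^2$ appearing in the exponent.
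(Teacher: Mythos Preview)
Your proposal is correct and follows exactly the paper's own argument: the change of variables $Z=\sqrt{\rho/2}\,\tilde{Z}$ to a standard real Gaussian on $\RRR^{2D}$, the observation that $F\circ\sqrt{\rho/2}$ has Lipschitz constant $\eta\sqrt{\|\rho\|/2}$, and the direct application of Lemma~\ref{lemma:LevyGauss}. You have simply spelled out in a bit more detail what the paper compresses into the paragraph preceding the theorem.
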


However, instead of using Theorem~\ref{thm: bound G}, we will use Theorem~\ref{thm: bound GA} below, a similar result for the Gaussian adjusted measure $\GA(\rho)$ defined in Section~\ref{sec:background}, which has density $\|\psi\|^2$ relative to $\mathrm{G}(\rho)$. Its proof closely follows the proof of L\'evy's Lemma in \cite{MS86}; for convenience of the reader we provide all the details.

\begin{thm}\label{thm: bound GA}
    Let $\dim\Hilbert<\infty$, let $\rho$ be a density matrix on $\Hilbert$, let $Z$ be a random vector with distribution $\GA(\rho)$, and let $F:\Hilbert\to\mathbb{R}$ be a Lipschitz function with Lipschitz constant $\eta$. 
    Then for every $\varepsilon>0$,
    \begin{align}
        \GA(\rho)\Bigl\{\psi\in\mathbb{S}(\Hilbert):\bigl|F(\psi)-\GA(\rho)(F) \bigr|>\varepsilon\Bigr\} \leq 4\exp\left(-\frac{2\varepsilon^2}{\pi^2\eta^2\|\rho\|}\right). \label{ineq: thm GA bound}
    \end{align}
\end{thm}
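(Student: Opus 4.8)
The plan is to adapt the classical argument for L\'evy's lemma (as in \cite{MS86}) to the adjusted Gaussian measure. The key idea is that $\GA(\rho)$ has a very mild density, namely $\|\psi\|^2$, relative to $\mathrm{G}(\rho)$, and $\mathrm{G}(\rho)$ is itself (after the change of variables $\psi=\sqrt{\rho/2}\,\tilde Z$) a standard Gaussian on $\RRR^{2D}$ to which Lemma~\ref{lemma:LevyGauss} applies directly. First I would reduce to the standard Gaussian: writing $G:=F\circ\sqrt{\rho/2}$, which is Lipschitz with constant $\eta_G:=\eta\sqrt{\|\rho\|/2}$, and letting $X$ be a standard Gaussian on $\RRR^{2D}$, the random variable $F(Z)$ with $Z\sim\GA(\rho)$ has the law of $G(X)$ under the tilted measure with density proportional to $\|\sqrt{\rho/2}\,X\|^2 = \tfrac12\langle X,\rho X\rangle$ relative to the standard Gaussian. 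So I must prove concentration of $G(X)$ under this size-biased Gaussian.

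The main technical step is to control how much the tilting by $\|\psi\|^2$ can distort a Gaussian tail bound. Here I would follow the Maurey--Pisier style of proof used in \cite{MS86}: one shows that $G(X)-\mathbb{E}_{\GA}G(X)$ has a subgaussian Laplace transform under the tilted measure, using the fact that under the \emph{plain} Gaussian $G(X)$ is subgaussian with parameter $\eta_G$ and that the tilting factor $\|\psi\|^2$ has, up to normalization, bounded exponential moments that do not interact badly with the Laplace transform of $G$. Concretely, I expect to bound $\mathbb{E}_{\GA}\bigl[e^{\lambda(G(X)-\mathbb{E}_{\mathrm G}G(X))}\bigr]$ by splitting the exponential-moment estimate via Cauchy--Schwarz (or H\"older) into a factor coming from the Gaussian concentration of $G$ and a factor coming from $\mathbb{E}_{\mathrm G}\|\psi\|^4/(\mathbb{E}_{\mathrm G}\|\psi\|^2)^2$, which is a fixed constant (one checks $\mathbb{E}_{\mathrm G}\|\psi\|^2=1$ and $\mathbb{E}_{\mathrm G}\|\psi\|^4 = 1 + \tr\rho^2 \le 2$, say). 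This is why the prefactor degrades from $2$ to $4$ and the constant in the exponent stays of the same order ($2/\pi^2$ up to the $\|\rho\|$ rescaling). Then optimizing over $\lambda$ and applying Markov, together with the same argument applied to $-G$, gives the two-sided bound \eqref{ineq: thm GA bound}.

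The step I expect to be the main obstacle is handling the centering: Lemma~\ref{lemma:LevyGauss} centers at $\mathbb{E}_{\mathrm G}G(X)$ (the plain Gaussian mean), whereas Theorem~\ref{thm: bound GA} must center at $\GA(\rho)(F)=\mathbb{E}_{\GA}G(X)$ (the tilted mean). The gap between these two means is $O(\eta_G)$ because the tilting factor has bounded moments, but to fold it cleanly into a \emph{single} clean exponential bound with a universal constant, one must argue either that $|\mathbb{E}_{\GA}G - \mathbb{E}_{\mathrm G}G|$ is at most a small multiple of the concentration scale $\eta_G\sqrt{\ldots}$ (absorbing it by adjusting constants), or — cleaner — use the standard device of centering at a \emph{median} of $G(X)$ under $\GA(\rho)$ first, proving concentration around the median, and only then transferring to the mean via integrating the tail bound. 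I would adopt whichever bookkeeping keeps the constants honest; the substance is unchanged. Everything else (the change of variables, the Lipschitz-constant computation, the passage through $\RRR^{2D}$) is routine and is already indicated in the text preceding the statement.
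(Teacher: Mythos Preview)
Your strategy---Cauchy--Schwarz to peel off the $\|\psi\|^2$ density and Maurey--Pisier interpolation for the underlying Gaussian---is exactly the paper's strategy. Where you diverge is in the centering, which you correctly flag as the main obstacle; the paper resolves it differently and more cleanly than either of your two proposed fixes.

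Rather than center at $\mathbb{E}_{\mathrm G}G$ and then shift, the paper centers at $\GA(\rho)(F)$ from the very first line by the standard symmetrization: for convex $\varphi$ and an independent $\GA(\rho)$-distributed copy $\phi$, Jensen gives
\[
\GA(\rho)_\psi\bigl[\varphi(F(\psi)-\GA(\rho)(F))\bigr]\;\le\;\GA(\rho)_\psi\,\GA(\rho)_\phi\bigl[\varphi(F(\psi)-F(\phi))\bigr].
\]
Only \emph{after} this does the paper apply Cauchy--Schwarz---now against the joint density $\|\psi\|^2\|\phi\|^2$ relative to the product Gaussian (done term-by-term over $\sum_{n,m}|Z_n|^2|\tilde Z_m|^2$, yielding the prefactor $\sum_{n,m}\sqrt{\mathbb{E}|Z_n|^4\,\mathbb{E}|\tilde Z_m|^4}=2$). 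What remains is $\mathbb{E}_{\mathrm G\otimes\mathrm G}\varphi(F(Z)-F(\tilde Z))^2$, to which the Maurey--Pisier rotation $X_\theta=X\sin\theta+Y\cos\theta$ applies directly and delivers the clean bound $\GA(\rho)\bigl[e^{\lambda(F-\GA(\rho)(F))}\bigr]\le 2\exp(\lambda^2\pi^2\eta^2\|\rho\|/8)$, with no mean-shift to manage.

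Your route would work but loses in the exponent: centering at $\mathbb{E}_{\mathrm G}G$, Cauchy--Schwarz gives you subgaussianity with parameter $\sim 2\eta_G^2$ (you double $\lambda$), and then absorbing $|\mathbb{E}_{\GA}G-\mathbb{E}_{\mathrm G}G|\le\sqrt{\tr\rho^2}\,\sqrt{\Var_{\mathrm G}G}\le \eta\|\rho\|/\sqrt{2}$ costs a further constant fraction of $\varepsilon$ in the regime where the bound is nontrivial. So you end up with something like $Ce^{-c\varepsilon^2/(\eta^2\|\rho\|)}$ with $c$ strictly smaller than $2/\pi^2$. The Jensen-before-H\"older ordering is what buys the exact constant stated.
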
    

\begin{proof}
    We identify $\Hilbert$ with $\CCC^D$ by means of the ONB $(|n\rangle)_{n=1\ldots D}$. Let $\varphi: \mathbb{R}\to\mathbb{R}$ be a convex function and let $\tilde{Z}=(\tilde{Z_1},\dots,\tilde{Z}_D)$ be a vector with the same distribution as $Z$ but independent of it. With the help of Jensen's inequality and H\"older's inequality we find that
    \begin{subequations}
    \begin{align}
        &\GA(\rho)_{\psi}\left[\varphi(F(\psi)-\GA(\rho)_\phi(F))\right] \nonumber\\[3mm]
        &\qquad\leq \GA(\rho)_\psi \GA(\rho)_\phi\left[\varphi(F(\psi)-F(\phi))\right]\\
        &\qquad= \int_{\Hilbert} \int_{\Hilbert} \varphi(F(\psi)-F(\phi)) \|\psi\|^2 \|\phi\|^2 \, \mathbb{P}(d\psi) \mathbb{P}(d\phi)\\
        &\qquad= \sum_{n,m} \int_{\mathbb{C}^D} \int_{\mathbb{C}^D}\varphi(F(Z)-F(\tilde{Z})) |Z_n|^2 |\tilde{Z}_m|^2 \mathbb{P}(dZ)\mathbb{P}(d\tilde{Z})\\
        &\qquad\leq \sum_{n,m}\left(\mathbb{E}_{(Z,\tilde{Z})}(|Z_n|^4 |\tilde{Z}_m|^4)\mathbb{E}_{(Z,\tilde{Z})}\left(\varphi(F(Z)-F(\tilde{Z}))^2\right)\right)^{1/2},
    \end{align}
    \end{subequations}
    where we use the notation $F(Z)$ and $F(\psi)$ interchangeably.
    We can write $Z_n = \RE Z_n + i \IM Z_n$ where $\RE Z_n$ and $\IM Z_n$ are independent real-valued Gaussian random variables with mean 0 and variance $p_n/2$. Since $\mathbb{E}|\RE Z_n|^2 = p_n/2$ and $\mathbb{E}|\RE Z_n|^4 = 3p_n^2/4$ we obtain
    \begin{align}
        \mathbb{E}|Z_n|^4 = \mathbb{E}|\RE Z_n|^4 + 2 \mathbb{E}|\RE Z_n|^2\mathbb{E}|\IM Z_n|^2 + \mathbb{E}|\IM Z_n|^4 = 2p_n^2\label{eq: exp |Z_n|^4}
    \end{align}
    and therefore
    \begin{align}
        \sum_{n,m}\left(\mathbb{E}_{(Z,\tilde{Z})}(|Z_n|^4 |\tilde{Z}_m|^4)\right)^{1/2} = \sum_{n,m} 2 p_n p_m = 2.
    \end{align}
    We identify $Z$ with the vector $X:=(\RE Z_1,\IM Z_1,\RE Z_2,\dots,\RE Z_D, \IM Z_D)$ of real Gaussian random variables and similarly $\tilde{Z}$ with $Y:= (\RE \tilde{Z}_1,\IM \tilde{Z}_1,\RE \tilde{Z}_2,\dots,\RE \tilde{Z}_D, \IM \tilde{Z}_D)$.
     For each $0\leq \theta\leq \frac{\pi}{2}$ set $X_\theta = X \sin\theta + Y \cos\theta$. One easily sees that the joint distribution of $X$ and $Y$, which is the multivariate normal distribution with mean vector 0 and covariance matrix $\mathrm{diag}(p_1,p_1,\dots,p_D,p_D,,p_1,p_1,\dots,p_D,p_D)/2$, is the same as the joint distribution of $X_\theta$ and $\frac{d}{d\theta} X_\theta = X \cos\theta - Y \sin\theta$ since linear combinations of independent Gaussian random variables are again Gaussian and the entries of the expectation vector and covariance matrix can be easily computed.

     Since $F$ can be approximated uniformly by continuously differentiable functions, we can without loss of generality assume that $F$ is continuously differentiable.

    Let us now assume that $\varphi$ is non-negative. Then $\varphi^2$ is also convex. Then we find with the help of Jensen's inequality that
    \begin{subequations}
    \begin{align}
        \mathbb{E} \varphi(F(Z)-F(\tilde{Z}))^2 &= \mathbb{E}\varphi(F(X)-F(Y))^2\\
        &=\mathbb{E}\left[\varphi\left(\int_0^{\pi/2} \frac{d}{d\theta} F(X_{\theta})\, d\theta\right)^2\right]\\
        &= \mathbb{E}\left[\varphi\left(\int_0^{\pi/2}\left(\nabla F(X_\theta), \frac{d}{d\theta} X_\theta\right)\, d\theta\right)^2 \right]\\
        &\leq \frac{2}{\pi} \mathbb{E}\left[\int_0^{\pi/2}\varphi\left(\frac{\pi}{2}\left(\nabla F(X_\theta),\frac{d}{d\theta} X_\theta\right)\right)^2 d\theta \right]\\
        &= \mathbb{E}\varphi\left(\frac{\pi}{2}\left(\nabla F(X),Y\right)\right)^2,
    \end{align}
    \end{subequations}
    where in the last step we used Fubini's theorem and the fact that the joint distribution of $X_\theta$ and $\frac{d}{d\theta}X_\theta$ is the same as the joint distribution of $X$ and $Y$.

    Let $\lambda\in\mathbb{R}$ and set $\varphi(x)=\exp(\lambda x)$. Then we get
    \begin{subequations}
    \begin{align}
        \mathbb{E}\exp\left[2\lambda(F(X)-F(Y))\right] &\leq \mathbb{E}\exp\left(\lambda\pi\sum_{i=1}^{2D}\frac{\partial F}{\partial x_i}(X) Y_i\right)\\
        &= \mathbb{E}_X \prod_{i=1}^{2D} \mathbb{E}_Y \exp\left(\lambda\pi\frac{\partial F}{\partial x_i}(X) Y_i\right)\\
        &=\mathbb{E} \exp\left(\frac{\lambda^2 \pi^2}{4}\sum_{i=1}^{2D}\left(\frac{\partial F}{\partial x_i}(X)\right)^2p_i\right)\\
        &\leq \mathbb{E}\exp\left(\frac{\lambda^2\pi^2 \|\rho\| \|\nabla F(X)\|^2}{4}\right)\\
        &\leq \exp\left(\frac{\lambda^2\pi^2 \|\rho\| \eta^2}{4}\right).
    \end{align}
    \end{subequations}
    Altogether we obtain
    \begin{align}
        \GA(\rho)\left[\exp(\lambda(F(\psi)-\GA(\rho)(F))) \right] \leq 2\exp\left(\frac{\lambda^2\pi^2 \|\rho\|\eta^2}{8}\right).
    \end{align}
    By Markov's inequality we find that
    \begin{subequations}
    \begin{align}
        &\GA(\rho)\left\{|F(Z)-\GA(\rho)(F)|>\varepsilon\right\} \nonumber\\
        &\qquad = \GA(\rho)\left\{F(Z)- \GA(\rho)(F) > \varepsilon\right\}\nonumber\\
        &~~~~~~~~~~~+ \GA(\rho)\left\{\GA(\rho)(F)-F(Z)>\varepsilon\right\}\\
        &\qquad = \GA(\rho)\left\{\exp(\lambda(F(Z)-\GA(\rho)(F))) > e^{\lambda\varepsilon}\right\}\nonumber\\
        &~~~~~~~~~~~+ \GA(\rho)\left\{\exp(-\lambda(F(Z)-\GA(\rho)(F))) > e^{\lambda\varepsilon}\right\}\\
        &\qquad \leq 4\exp\left(-\lambda\varepsilon + \frac{\lambda^2\pi^2\|\rho\|\eta^2}{8}\right).\label{ineq: proof GA ub}
    \end{align}
    \end{subequations}
    Since $\lambda\in\mathbb{R}$ was arbitrary, we can minimize the right-hand side over $\lambda$. The minimum is attained at $\lambda_{\min} = 4\varepsilon/(\pi^2 \|\rho\|\eta^2)$ and inserting this value in \eqref{ineq: proof GA ub} finally yields \eqref{ineq: thm GA bound}. 
\end{proof}

The last ingredient we need for the proof of Theorem~\ref{thm: LM for GAP} is the following lemma:

\begin{lemma}\label{lem: GA ||psi||<r}
    For all $r>0$ it holds that
\begin{align}
    \GA(\rho)\left\{\|\psi\|<r\right\} \leq \sqrt{2} \exp\left(-\frac{1/2-r^2}{2\|\rho\|}\right).
\end{align}
\end{lemma}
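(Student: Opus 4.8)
The plan is to reduce the claim to a left-tail estimate for $\|\Psi^{\mathrm{G}}\|^2$ under the Gaussian measure $\mathrm{G}(\rho)$, and then to obtain that estimate by a Chernoff-type bound, exploiting $\sum_n p_n = 1$ together with $p_n \le p_{\max} = \|\rho\|$ to pull the factor $1/\|\rho\|$ into the exponent. First I would use that $\GA(\rho)$ has density $\|\psi\|^2$ with respect to $\mathrm{G}(\rho)$, so that
\[
\GA(\rho)\{\|\psi\| < r\} = \int_{\{\|\psi\|<r\}} \|\psi\|^2 \, \mathrm{G}(\rho)(d\psi) \le r^2 \, \mathrm{G}(\rho)\{\|\psi\| < r\}.
\]
If $r^2 \ge 1/2$ then $-(1/2-r^2)/(2\|\rho\|) \ge 0$, so the asserted right-hand side is $\ge \sqrt{2} > 1 \ge \GA(\rho)\{\|\psi\|<r\}$ and there is nothing to prove; hence I may assume $r^2 < 1/2$, in which case $r^2 < 1$ and it suffices to bound $\mathrm{G}(\rho)\{\|\psi\|<r\} = \PPP\{\|\Psi^{\mathrm{G}}\|^2 < r^2\}$, where $\Psi^{\mathrm{G}} = \sum_n Z_n |n\rangle$ and the $Z_n$ are independent complex mean-zero Gaussians with $\EEE|Z_n|^2 = p_n$.

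Next, for any $\lambda > 0$, Markov's inequality applied to $e^{-\lambda \|\Psi^{\mathrm{G}}\|^2}$ gives
\[
\PPP\{\|\Psi^{\mathrm{G}}\|^2 < r^2\} \le e^{\lambda r^2}\, \EEE e^{-\lambda \|\Psi^{\mathrm{G}}\|^2} = e^{\lambda r^2} \prod_n \EEE e^{-\lambda |Z_n|^2}.
\]
Writing $Z_n = \RE Z_n + i\,\IM Z_n$ with $\RE Z_n, \IM Z_n$ independent real Gaussians of variance $p_n/2$, a one-line Gaussian integral gives $\EEE e^{-\lambda |Z_n|^2} = (1+\lambda p_n)^{-1}$ (equivalently, $|Z_n|^2$ is exponential with mean $p_n$). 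The infinite product converges since $\sum_n \ln(1+\lambda p_n) \le \lambda \sum_n p_n = \lambda < \infty$, so the whole argument is valid for separable $\Hilbert$ without any finiteness assumption. The key step is the lower bound $\sum_n \ln(1+\lambda p_n) \ge \ln(1+\lambda p_{\max})/p_{\max}$: since $x \mapsto \ln(1+\lambda x)$ is concave on $[0,p_{\max}]$ and vanishes at $0$, the chord inequality gives $\ln(1+\lambda p_n) \ge (p_n/p_{\max})\ln(1+\lambda p_{\max})$, and summing and using $\sum_n p_n = 1$ yields the claim. This converts the trivial estimate $\prod_n(1+\lambda p_n)^{-1} \le (1+\lambda)^{-1}$, which is only polynomially small, into $(1+\lambda p_{\max})^{-1/p_{\max}}$, producing exponential smallness in $1/\|\rho\|$; this is the one genuinely substantive point of the proof, and everything else is routine.

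Putting this together, $\mathrm{G}(\rho)\{\|\psi\|<r\} \le \exp\bigl(\lambda r^2 - p_{\max}^{-1}\ln(1+\lambda p_{\max})\bigr)$ for every $\lambda > 0$. Finally I would choose $\lambda$: either the minimizing value $\lambda = p_{\max}^{-1}(r^{-2}-1)$, which makes the exponent $(1 - r^2 + 2\ln r)/p_{\max}$, or, more cheaply, $\lambda = 1/p_{\max}$, which makes it $(r^2 - \ln 2)/p_{\max}$. In either case an elementary one-variable estimate valid for $r^2 < 1/2$ (for the second choice just $r^2 - 2\ln 2 + \tfrac12 < 1 - 2\ln 2 < 0$) shows this is $\le \sqrt{2}\exp\bigl(-(1/2-r^2)/(2\|\rho\|)\bigr)$; combining with $\GA(\rho)\{\|\psi\|<r\} \le r^2\,\mathrm{G}(\rho)\{\|\psi\|<r\} \le \mathrm{G}(\rho)\{\|\psi\|<r\}$ (since $r^2 < 1$) completes the proof. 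I expect no serious obstacle beyond the concavity estimate for the product; the remaining inequalities have ample slack built into the constants $\sqrt{2}$ and the factor $2$ in the exponent.
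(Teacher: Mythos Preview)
Your argument is correct. Both you and the paper follow the same skeleton---reduce from $\GA(\rho)$ to the Gaussian measure $\mathrm{G}(\rho)$, then apply a Chernoff bound to $\|\Psi^{\mathrm{G}}\|^2$ with $\lambda=1/\|\rho\|$---but the two reductions and the two product estimates differ in instructive ways.

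For the reduction step, the paper uses H\"older's inequality on $\sum_n |Z_n|^2 \mathbbm{1}_{\{\|\psi\|<r\}}$ together with $\mathbb{E}|Z_n|^4=2p_n^2$ to obtain $\GA(\rho)\{\|\psi\|<r\}\le \sqrt{2}\bigl(\mathrm{G}(\rho)\{\|\psi\|<r\}\bigr)^{1/2}$; the square root is what produces the factor $2$ in the exponent and the prefactor $\sqrt{2}$. Your reduction $\GA(\rho)\{\|\psi\|<r\}\le r^2\,\mathrm{G}(\rho)\{\|\psi\|<r\}$ is more elementary (no fourth moments, no Cauchy--Schwarz) and in fact sharper when $r$ is small; the price is that you must separately handle $r^2\ge 1/2$ (trivially) and then verify by hand that your Gaussian bound implies the stated one. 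For the product estimate, the paper uses $\ln(1+x)\ge x/2$ on $[0,1]$ to get $\sum_n \ln(1+p_n/\|\rho\|)\ge 1/(2\|\rho\|)$, while your concavity/chord argument gives the slightly stronger $\sum_n \ln(1+p_n/\|\rho\|)\ge (\ln 2)/\|\rho\|$. Your route avoids H\"older and yields a cleaner intermediate bound $\mathrm{G}(\rho)\{\|\psi\|<r\}\le \exp((r^2-\ln 2)/\|\rho\|)$; the paper's route has the advantage that the constants $\sqrt{2}$ and $2$ fall out mechanically from the square root rather than from a final comparison.
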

\begin{proof}
    With the help of H\"older's inequality we find that
\begin{subequations}
   \begin{align}
       \GA(\rho)\left\{\|\psi\| < r\right\} &= \sum_n \int_{\Hilbert} |Z_n|^2 \mathbbm{1}_{\{\|\psi\| < r\}} \, \mathbb{P}(d\psi)\\
       &\leq \sum_n \left(\mathbb{E} |Z_n|^4 \mathbb{P}\left(\|\psi\| <r\right)\right)^{1/2}\\
       &= \sqrt{2} \left(\mathbb{P}\left(\|\psi\| < r\right)\right)^{1/2} \label{ineq: GA to P}
   \end{align}
\end{subequations}
   Note that in the third line we used \eqref{eq: exp |Z_n|^4} and that $\sum_n p_n =1$. We can write
   \begin{align}
       \|\psi\|^2 = \sum_n |Z_n|^2 = \sum_{n} p_n |\tilde{Z}_n|^2,
   \end{align}
   where the $\tilde{Z}_n$ are independent complex standard Gaussian random variables. For a random variable $Y$ let $M_Y(t) = \mathbb{E}(e^{tY})$ denote its moment generating function. The Chernoff bound states that for any $a\in\mathbb{R}$,
   \begin{align}
       \mathbb{P}\{Y\leq a\} \leq \inf_{t<0} M_Y(t) e^{-ta}. 
   \end{align}
   Here we thus obtain
   \begin{align}
       \mathbb{P}\left\{\|\psi\| < r\right\}=\mathbb{P}\left\{\|\psi\|^2< r^2\right\} \leq \inf_{t<0} M_{\|\psi\|^2}(t) e^{-tr^2}.
   \end{align}
   We compute
   \begin{align}
       M_{\|\psi\|^2}(t) = \prod_n M_{|\tilde{Z}_n|^2}(p_n t) = \prod_n M_{2(\RE \tilde{Z}_n)^2}\left(\frac{p_n t}{2}\right) M_{2(\IM \tilde{Z}_n)^2}\left(\frac{p_n t}{2}\right).
   \end{align}
   Next note that $2(\RE \tilde{Z}_n)^2$ and $2(\IM \tilde{Z}_n)^2$ are chi-squared distributed random variables with one degree of freedom and that the moment generating function of a random variable $Y$ with distribution $\chi^2_1$ is given by
   \begin{align}
       M_Y(t) = (1-2t)^{-1/2} \quad \mbox{for}\quad t<1/2.
   \end{align}
   Therefore,
   \begin{align}
       M_{\|\psi\|^2}(t) &= \prod_n (1-p_n t)^{-1}
   \end{align}
   and this implies
\begin{subequations}
   \begin{align}
       \mathbb{P}\left\{\|\psi\|<r\right\} &\leq \inf_{t<0} e^{-tr^2} \prod_n (1-p_n t)^{-1}\\
       &= \inf_{t<0} \exp\left(-tr^2 - \sum_n \ln(1-p_n t)\right)\\
       &= \inf_{s>0} \exp\left(sr^2-\sum_n \ln(1+p_n s)\right)\\
       &\leq \exp\left(\frac{r^2}{\|\rho\|} - \sum_n \ln\left(1+\frac{p_n}{\|\rho\|}\right)\right),
   \end{align}
\end{subequations}
   where we chose $s=\|\rho\|^{-1}$ in the last line. Because of
   \begin{align}
       \ln(1+x)\geq \frac{x}{x+1} \geq \frac{x}{2}\quad \mbox{for}\quad 0<x\leq 1
   \end{align}
   we find that
   \begin{align}
    \mathbb{P}\left\{\|\psi\|<r\right\} &\leq \exp\left(\frac{r^2}{\|\rho\|} - \sum_n \frac{p_n}{2\|\rho\|}\right) = \exp\left(-\frac{1/2-r^2}{\|\rho\|}\right).
   \end{align}
   Inserting this into \eqref{ineq: GA to P} finishes the proof.
\end{proof}

\begin{proof}[Proof of Theorem~\ref{thm: LM for GAP}]
We first assume that $D=\dim\Hilbert<\infty$.
      Without loss of generality we can assume that $\GAP(\rho)(f)=0$. Due to the continuity of $f$ it follows that there exists a $\varphi\in\mathbb{S}(\Hilbert)$ such that $f(\varphi)=0$. This implies for all $\tilde{\varphi} \in \mathbb{S}(\Hilbert)$ that
    \begin{align}
        |f(\tilde{\varphi})| = |f(\tilde{\varphi}) - f(\varphi)| \leq \eta \|\tilde{\varphi}-\varphi\| \leq \pi\eta,
    \end{align}
    where we used in the last step that the distance (in the spherical metric) between two points on the unit sphere is bounded by $\pi$. Thus $f$ is bounded by $\pi\eta$.

    Let $0<r<1$ and define $\tilde{f}:\Hilbert \to\mathbb{R}$ by
    \begin{align}
        \tilde{f}(\psi) = \begin{cases} f\left(\frac{\psi}{\|\psi\|}\right)\quad &\mathrm{if}\; \|\psi\| \geq r,\\
        r^{-1} \|\psi\| f\left(\frac{\psi}{\|\psi\|}\right) &\mathrm{if}\; \|\psi\| \leq r.
        \end{cases}
    \end{align}
    For every $\psi,\varphi\in\Hilbert$ such that $\|\psi\|,\|\varphi\| \geq r$ we find that
\begin{subequations}
    \begin{align}
        \left|\tilde{f}(\psi)-\tilde{f}(\varphi)\right| &= \left|f\left(\frac{\psi}{\|\psi\|}\right) - f\left(\frac{\varphi}{\|\varphi\|}\right) \right|\\
        &\leq \eta \left\|\frac{\psi}{\|\psi\|}-\frac{\varphi}{\|\varphi\|}\right\|\\
        &\leq \frac{\eta}{r} \|\psi-\varphi\|,
    \end{align}
\end{subequations}
   where the last inequality follows from
\begin{subequations}
   \begin{align}
       \left\|\frac{\psi}{\|\psi\|} - \frac{\varphi}{\|\varphi\|}\right\|^2 &= 2-\frac{2}{\|\psi\| \|\varphi\|} \RE\langle \psi,\varphi\rangle\\
       &= 2 + 2\RE \langle\psi,\varphi\rangle \left(r^{-2}-\frac{1}{\|\psi\|\|\varphi\|}\right) - 2r^{-2}\RE\langle\psi,\varphi\rangle\\
       &\leq r^{-2}\left(2 \|\psi\| \|\varphi\| - 2 \RE\langle\psi,\varphi\rangle\right)\\
       &\leq r^{-2} \left(\|\psi\|^2+\|\varphi\|^2 - 2\RE\langle\psi,\varphi\rangle\right)\\
       &= r^{-2} \|\psi-\varphi\|^2.
   \end{align}
\end{subequations}
  Thus $\tilde{f}$ is Lipschitz continuous with constant $\eta/r$ on $\{\psi\in\Hilbert: \|\psi\| \geq r\}$.
   
   Now let $\psi,\varphi\in\Hilbert$ such that $\|\psi\|,\|\varphi\| \leq r$ and $\|\varphi\| \leq \|\psi\|$. Then we obtain
\begin{subequations}
   \begin{align}
       \left|\tilde{f}(\psi) - \tilde{f}(\varphi) \right| &= r^{-1} \left|\|\psi\| f\left(\frac{\psi}{\|\psi\|}\right) - \|\varphi\| f\left(\frac{\varphi}{\|\varphi\|}\right) \right|\\
       &\leq r^{-1} \left|\|\psi\| f\left(\frac{\psi}{\|\psi\|}\right) - \|\varphi\| f\left(\frac{\psi}{\|\psi\|}\right) \right| \nonumber\\
       &\quad + r^{-1} \left|\|\varphi\| f\left(\frac{\psi}{\|\psi\|}\right) - \|\varphi\| f\left(\frac{\varphi}{\|\varphi\|}\right) \right|\\
       &\leq \frac{\pi\eta}{r} \bigl|\|\psi\|-\|\varphi\| \bigr| + \frac{\eta}{r} \|\varphi\| \left\|\frac{\psi}{\|\psi\|}-\frac{\varphi}{\|\varphi\|} \right\|\\
       &\leq \frac{5\eta}{r} \|\psi-\varphi\|,
   \end{align}
\end{subequations}
   where the last inequality follows from
\begin{subequations}
   \begin{align}
       \|\varphi\|^2 \left\| \frac{\psi}{\|\psi\|} - \frac{\varphi}{\|\varphi\|} \right\|^2 &= 2\|\varphi\|^2 + 2\RE\langle\psi,\varphi\rangle \left(1-\frac{\|\varphi\|}{\|\psi\|}\right) - 2 \RE\langle\psi,\varphi\rangle\\
       &\leq 2 \|\psi\| \|\varphi\| - 2 \RE\langle\psi,\varphi\rangle\\
       &\leq \|\psi-\varphi\|^2.
   \end{align}
\end{subequations}
   Due to the symmetry of the argument in $\psi$ and $\varphi$, one finds the same estimate in the case that $\|\psi\| \leq \|\varphi\|$ and we conclude that $\tilde{f}$ is Lipschitz continuous with constant $5\eta/r$ on $\{\psi\in\Hilbert: \|\psi\| \leq r\}$.

   Finally, let $\psi,\varphi \in \Hilbert$ such that $\|\psi\| \leq r$ and $\|\varphi\| \geq r$ and define $\gamma: [0,1] \to \Hilbert, \gamma(t)=(1-t)\psi+ t\varphi$. Then there exists a $t_0\in [0,1]$ such that $\|\gamma(t_0)\|=r$ and 
   \begin{align}
       \|\psi-\gamma(t_0)\| &= t_0 \|\psi-\varphi\| \leq \|\psi-\varphi\|,\\
       \|\gamma(t_0) - \varphi\| &= (1-t_0) \|\psi-\varphi\| \leq \|\psi-\varphi\|.
   \end{align}
   Therefore, we find that
\begin{subequations}
   \begin{align}
       \left|\tilde{f}(\psi) - \tilde{f}(\varphi)\right| 
       &= \left|r^{-1}\|\psi\| f\left(\frac{\psi}{\|\psi\|}\right) - f\left(\frac{\varphi}{\|\varphi\|}\right) \right|\\
       &\leq r^{-1} \left|\|\psi\| f\left(\frac{\psi}{\|\psi\|}\right) - \|\gamma(t_0)\| f\left(\frac{\gamma(t_0)}{\|\gamma(t_0)\|}\right) \right| \nonumber\\
       &\quad + \left|f\left(\frac{\gamma(t_0)}{\|\gamma(t_0)\|}\right) - f\left(\frac{\varphi}{\|\varphi\|}\right)\right|\\
       &\leq \frac{5\eta}{r} \|\psi-\gamma(t_0)\| + \frac{\eta}{r} \|\gamma(t_0)-\varphi\|\\
       &\leq \frac{6\eta}{r} \|\psi-\varphi\|.
   \end{align}
\end{subequations}
   We conclude that $\tilde{f}$ is Lipschitz continuous with Lipschitz constant $6\eta/r$.

    Using the definition of $\tilde{f}$, we find that
\begin{subequations}
   \begin{align}
       \GAP(\rho)\left\{|f(\psi)|>\varepsilon\right\} &= \GA(\rho)\left\{\left|f\left(\frac{\psi}{\|\psi\|}\right) \right|>\varepsilon\right\}\\
       &\leq \GA(\rho)\left\{\left|f\left(\frac{\psi}{\|\psi\|}\right) \right| > \varepsilon \mbox{ and } \|\psi\| \geq r\right\} + \GA(\rho)\left\{\|\psi\| < r\right\}\\
       &= \GA(\rho)\left\{\left|\tilde{f}(\psi) \right|>\varepsilon \mbox{ and } \|\psi\|\geq r\right\} + \GA(\rho)\left\{\|\psi\| < r\right\}\\
       &\leq \GA(\rho)\left\{\left|\tilde{f}(\psi) \right|>\varepsilon\right\} + \GA(\rho)\left\{\|\psi\| <r\right\}\\
       &\leq \GA(\rho)\left\{\left|\tilde{f}(\psi)- \GA(\rho)(\tilde{f})\right| > \varepsilon - |\GA(\rho)(\tilde{f})|\right\} + \GA(\rho)\left\{\|\psi\|<r\right\}.\label{ineq: GAP eps}
   \end{align}
\end{subequations}
   By Lemma~\ref{lem: GA ||psi||<r}, the second term can be bounded by $\sqrt{2}\exp(-(1/2-r^2)/2\|\rho\|)$. In order to estimate the first term in \eqref{ineq: GAP eps}, we first derive an upper bound for $|\GA(\rho)(\tilde{f})|$. We compute
   \begin{align}
       \GA(\rho)(\tilde{f}) &= \int_{\{\|\psi\| <r\}} r^{-1} \|\psi\| f\left(\frac{\psi}{\|\psi\|}\right)\, \GA(\rho)(d\psi) 
       + \int_{\{\|\psi\| \geq r\}} f\left(\frac{\psi}{\|\psi\|}\right) \, \GA(\rho)(d\psi) \\
       &= \underbrace{\int_\Hilbert f\left(\frac{\psi}{\|\psi\|}\right)\, \GA(\rho)(d\psi)}_{=\GAP(\rho)(f)=0} + \int_{\{\|\psi\| <r\}} r^{-1} \|\psi\| f\left(\frac{\psi}{\|\psi\|}\right) - f\left(\frac{\psi}{\|\psi\|}\right)\, \GA(\rho)(d\psi)
   \end{align}
   and so we obtain, again by Lemma~\ref{lem: GA ||psi||<r},
   \begin{align}
       |\GA(\rho)(\tilde{f})| \leq \pi\eta \,\GA(\rho)\left\{\|\psi\| < r\right\} \leq 5\eta \exp\left(-\frac{1/2-r^2}{2\|\rho\|}\right).
   \end{align}
   This implies with the help of Theorem~\ref{thm: bound GA} that
\begin{subequations}
   \begin{align}
       \GA(\rho)&\left\{\left|\tilde{f}(\psi)- \GA(\rho)(\tilde{f})\right| > \varepsilon - |\GA(\rho)(\tilde{f})|\right\}\\ 
       &\leq \GA(\rho)\left\{\left| \tilde{f}(\psi) - \GA(\rho)(\tilde{f})\right| > \varepsilon - 5\eta \exp\left(-\frac{1/2-r^2}{2\|\rho\|}\right)\right\}\\
       &\leq 4\exp\left(-\frac{r^2(\varepsilon-5\eta\exp(-(1/2-r^2)/2\|\rho\|))^2}{18\pi^2\eta^2 \|\rho\|}\right),
   \end{align}    
\end{subequations}
   provided that $\varepsilon > 5\eta\exp(-(1/2-r^2)/2\|\rho\|)$. Altogether we arrive at
   \begin{align}
       \GAP(\rho)\left\{|f(\psi)|>\varepsilon\right\} \leq 4\exp\left(-\frac{r^2(\varepsilon-5\eta\exp(-(1/2-r^2)/2\|\rho\|))^2}{18\pi^2\eta^2 \|\rho\|}\right) + \sqrt{2}\exp\left(-\frac{1/2-r^2}{2\|\rho\|}\right).
   \end{align}
   Choosing $r=1/2$ we obtain
   \begin{align}\label{estimate1}
       \GAP(\rho)\left\{|f(\psi)|>\varepsilon\right\} \leq 4 \exp\left(-\frac{(\varepsilon-5\eta\exp(-1/8\|\rho\|))^2}{72\pi^2\eta^2 \|\rho\|}\right) + \sqrt{2} \exp\left(-\frac{1}{8\|\rho\|}\right).
   \end{align}
   We can assume without loss of generality that 
   \be\label{estimate3}\varepsilon<\pi\eta
   \ee
   because otherwise the left-hand side of \eqref{LMGAP} vanishes: indeed, the distance between any two points on the sphere is at most $\pi$, so their $f$ values can differ at most by $\pi\eta$, and for the same reason $f(\psi)$ can differ from its average relative to any measure by at most $\pi\eta$. 
   
   Likewise, we can assume without loss of generality that 
   \be\label{estimate2}
    \varepsilon \geq 10\eta\exp(-1/8\|\rho\|)
   \ee
   because otherwise the right-hand side of \eqref{LMGAP} is greater than 1: indeed, for $\varepsilon < 10\eta \exp(-1/8\|\rho\|)$,
    \begin{align}
     6 \exp\left(-\frac{\varepsilon^2}{288\pi^2\eta^2 \|\rho\|}\right)\geq 6\exp\left(-\frac{25 \exp(-1/4\|\rho\|)}{72\pi^2\|\rho\|}\right)>1.
     \end{align}
   As a consequence of \eqref{estimate3} and \eqref{estimate2}, the first exponent in \eqref{estimate1} is greater than the second, so
   \begin{align}
       \GAP(\rho)\left\{|f(\psi)|>\varepsilon\right\} 
       &\leq 6 \exp\left(-\frac{(\varepsilon-5\eta\exp(-1/8\|\rho\|))^2}{72\pi^2\eta^2 \|\rho\|}\right)\\
        &\leq 6 \exp\left(-\frac{\varepsilon^2}{288\pi^2\eta^2 \|\rho\|}\right) \label{ineq: bound GAP}
   \end{align}
   by \eqref{estimate2}. This finishes the proof in the finite-dimensional case.
   
    Now suppose that $\Hilbert$ has a countably infinite ONB. Consider the density matrices $\rho_n$ defined as in \eqref{eq: rho_n}. Let $\varepsilon'>0$. Because the set
\begin{align}
    A_\varepsilon := \{\psi\in\mathbb{S}(\Hilbert): |f(\psi)|>\varepsilon\}
\end{align}
is open in $\SSS(\Hilbert)$, it follows from the weak convergence of the measures $\GAP(\rho_n)$ to $\GAP(\rho)$ by the ``portmanteau theorem'' \cite[Thm.~2.1]{Bill99} that 
\begin{align}
    \GAP(\rho)(A_\varepsilon) \leq \liminf_{n\to\infty} \GAP(\rho_n)(A_\varepsilon) \leq \GAP(\rho_N)(A_\varepsilon) + \varepsilon'
\end{align}
for some large enough $N\in\mathbb{N}$ with $N\geq n_0$. Recall that $n_0\in\mathbb{N}$ is chosen such that $\|\rho_n\|=\|\rho\|$ for all $n\geq n_0$. Let $\Hilbert_N := \mathrm{span}\{|n\rangle: n=1,\dots,N\}$. Then, since $\rho_N$ is a density matrix on $\Hilbert_N$ and $\GAP(\rho_N)$ is concentrated on $\Hilbert_N$, it follows with what we have already proven in the finite-dimensional case that
\begin{subequations}
\begin{align}
    \GAP(\rho_N)\bigl\{\psi\in\mathbb{S}(\Hilbert): |f(\psi)|>\varepsilon\bigr\} &= \GAP(\rho_N)\bigl\{\psi\in\mathbb{S}(\Hilbert_N): |f(\psi)|>\varepsilon\bigr\}\\
    &\leq 6 \exp\left(-\frac{C\varepsilon^2}{\eta^2 \|\rho_N\|}\right),
\end{align}
\end{subequations}
where $C=\frac{1}{288\pi^2}$. Noting that $\|\rho_N\|=\|\rho\|$ and that $\varepsilon'>0$ was arbitrary, we can altogether conclude that
\begin{align}
    \GAP(\rho)\Bigl\{\psi\in\mathbb{S}(\Hilbert): \bigl|f(\psi)-\GAP(\rho)(f)\bigr|>\varepsilon\Bigr\} \leq 6\exp\left(-\frac{C\varepsilon^2}{\eta^2 \|\rho\|}\right),
\end{align}
i.e., the bound \eqref{ineq: bound GAP} remains true in the infinite-dimensional setting.
\end{proof}

\subsection{Proofs of Corollaries~\ref{cor: dyn typ}, \ref{cor: dyn can typ}, \ref{cor:cond}\label{proof: cor1, cor2, cor3}}

\begin{proof}[Proof of Corollary~\ref{cor: dyn typ}]
    As already noted before Corollary~\ref{cor: dyn typ}, the first inequality follows immediately from Corollary~\ref{cor:LevyB} by inserting $U_t^* B U_t$ for $B$.

    For the proof of the second inequality we define
    \begin{align}
        Y_t := \left|\langle\psi_t|B|\psi_t\rangle - \tr(\rho_t B) \right|.
    \end{align}
    Then, for every $s> 0$ we find that
    \begin{align}
        \GAP(\rho)\Bigl\{\psi\in\mathbb{S}(\Hilbert): e^{sY_t}>e^{s\varepsilon}\Bigr\} \leq 12 \exp\left(-\frac{\tilde{C}\varepsilon^2}{\|B\|^2 \|\rho\|}\right),
    \end{align}
    i.e., with $\delta := e^{s\varepsilon}$,
    \begin{align}
        \GAP(\rho)\Bigl\{\psi\in\mathbb{S}(\Hilbert): e^{sY_t} > \delta\Bigr\} \leq 12\exp\left(-\frac{\tilde{C}}{\|B\|^2\|\rho\|} \frac{\ln(\delta)^2}{s^2}\right).
    \end{align}
    This implies
    \begin{subequations}
    \begin{align}
        \GAP(\rho)\left(e^{sY_t}\right) &\leq \sum_{n=0}^\infty (n+1)\; \GAP(\rho)\Bigl\{\psi\in\mathbb{S}(\Hilbert): e^{s Y_t} \in (n,n+1]\Bigr\} \\
        &\leq 1+12 \sum_{n=1}^\infty (n+1) \exp\left(-\frac{\tilde{C}\ln(n)^2}{\|B\|^2 \|\rho\|s^2}\right)\\
        &= 1+ 12\sum_{n=1}^\infty (n+1) n^{-\frac{\tilde{C}\ln(n)}{\|B\|^2\|\rho\| s^2}}.
    \end{align}
    \end{subequations}
    With $a:=\frac{\tilde{C}}{\|B\|^2\|\rho\| s^2}$ and assuming that $a\leq 1$ we obtain
    \begin{subequations}
        \begin{align}
            \GAP(\rho)\left(e^{sY_t}\right) &\leq 1+12 \sum_{n=1}^{\lfloor e^{5/2a}\rfloor} (n+1) + 12 \sum_{n=\lceil e^{5/2a}\rceil}^{\infty} (n+1) \frac{1}{n^{5/2}}\\
            &\leq 1+ 6 e^{\frac{5}{2a}} \left(e^{\frac{5}{2a}}+3\right) + 12\\
            &= 13 + 18e^{\frac{5}{2a}} + 6e^{\frac{5}{a}}\\
            &\leq 9 e^{\frac{5}{a}}.
        \end{align}
    \end{subequations}
    An application of Jensen's inequality and Fubini's theorem shows that
    \begin{subequations}
    \begin{align}
        \GAP(\rho)\left(\exp\left(\frac{1}{T}\int_0^T Y_t\, dt\; s\right)\right) &\leq \GAP(\rho)\left(\frac{1}{T}\int_0^T e^{Y_t s}\, dt\right)\\
        &= \frac{1}{T}\int_0^T \GAP(\rho)\left(e^{Y_t s}\right)\, dt\\
        &\leq 9e^{5/a}.
    \end{align}
    \end{subequations}
    With the help of Markov's inequality we find that
    \begin{align}
        \GAP(\rho)\Bigl\{\psi\in\mathbb{S}(\Hilbert): \frac{1}{T}\int_0^T Y_t\, dt > \varepsilon\Bigr\} \leq 9e^{5/a} e^{-\varepsilon s}.
    \end{align}
    and choosing $s:=\frac{\varepsilon \tilde{C}}{6\|B\|^2 \|\rho\|}$ yields the desired bound provided that $\varepsilon>0$ and $a\leq 1$, i.e., $\|\rho\|~\leq~\frac{\tilde{C}\varepsilon^2}{36\|B\|^2}$. However, since the bound becomes trivial for $\|\rho\|>\frac{\tilde{C}\varepsilon^2}{36\|B\|^2}$, this assumption on $\|\rho\|$ can be dropped. Moreover, note that the bound is also trivial if $\varepsilon=0$.
\end{proof}

\begin{proof}[Proof of Corollary~\ref{cor: dyn can typ}]
    Let
    \begin{align}
        Z_t := \|\rho_a^{\psi_t}-\tr_b\rho_t\|_{\tr}.
    \end{align}
    It follows from the equivariance of $\rho\mapsto \GAP(\rho)$ and Remark~\ref{rmk: exp bound} that
    \begin{subequations}
    \begin{align}
        \GAP(\rho)\Bigl\{\psi\in\mathbb{S}(\Hilbert): Z_t > \varepsilon\Bigr\} 
        &= \GAP(\rho_t)\Bigl\{\psi_t\in\mathbb{S}(\Hilbert): Z_t > \varepsilon\Bigr\} \\
        &\leq 12 d_a^2 \exp\left(-\frac{\tilde{C}\varepsilon^2}{d_a^2\|\rho\|}\right).
    \end{align}
    \end{subequations}
    The rest of the proof now follows along the same lines as the proof of Corollary~\ref{cor: dyn typ}.
\end{proof}

\begin{proof}[Proof of Corollary~\ref{cor:cond}.]
    Choose $\psi$ and $B$ independently with the distributions mentioned. By Theorem~2 of \cite{GLMTZ15} (which requires that $d_b\geq d_a$ and $d_b\geq 4$), we have that
    \be
    \bigl|\mathrm{Born}_a^{\psi,B}(f) - \GAP(\rho_a^\psi)(f)\bigr|< \varepsilon/2
    \ee
    with probability $\geq 1 - 16\|f\|^2_{\infty}/\varepsilon^2 d_b \geq 1-\delta/2$ for $d_b\geq 32\|f\|^2_\infty/\varepsilon^2 \delta$. By Lemma 5 of \cite{GLMTZ15}, there is $r=r(\varepsilon,d_a,f)>0$ such that
    \be
    \bigl|\GAP(\rho_a^\psi)(f) - \GAP(\tr_b \rho)(f) \bigr|<\varepsilon/2
    \ee
    whenever $\|\rho_a^\psi-\tr_b\rho\|_{\tr} < r$. By Theorem~\ref{thm: GCT exp bound} in the form \eqref{expboundepsilon}, the latter condition is fulfilled with probability $\geq 1-12d_a^2\exp(-\tilde{C}r^2/d_a^2\|\rho\|)\geq 1-\delta/2$ for $\|\rho\|\leq p:= \tilde{C}r^2/d_a^2\ln(24d_a^2/\delta)$.
    Now \eqref{typcond} follows. 
\end{proof}

\subsection{Further Explanations to Remark~\ref{rmk:comparison}\label{sec: rmk8 explanations}}

As discussed after Theorem~\ref{thm:4}, applying Theorem~\ref{thm:1} to $\rho=\rho_R$ yields the worse factor $d_a^{2.5}$ instead of $d_a^2$. Here we want to give some details why in this special case of Theorem~\ref{thm:1}, slightly better  bounds can be obtained.

First suppose that $\Hilbert_R = \Hilbert$. Similarly to the proof of Theorem~\ref{thm:1} one finds that
\begin{align}
    u\bigl\{\psi\in\mathbb{S}(\Hilbert): \|\rho_a^\psi-\tr_b\rho\|_{\tr}>\varepsilon\bigr\} \leq \frac{d_a^3}{\varepsilon^2} \sum_{l,m}\Var\langle\psi|A^{lm}|\psi\rangle,
\end{align}
where $A^{lm} = |l\rangle_a\langle m|\otimes I_b$ and $(|l\rangle_a)_{l=1\dots d_a}$ is an orthonormal basis of $\Hilbert_a$. Instead of bounding the sum by $d_a^2$ times a uniform bound on the variances $\Var\langle\psi|A^{lm}|\psi\rangle$, one can now make use of the fact that for uniformly distributed $\psi\in\mathbb{S}(\Hilbert)$, the second and fourth moments of the coefficients $c_l$ of $\psi$ in an orthonormal basis $(|n\rangle)_{n=1\dots D}$ of eigenvectors of $\rho$ can be computed explicitly. More precisely, they satisfy
\begin{align}
    \mathbb{E}(|c_n|^2) = \frac{1}{D}, \quad \mathbb{E}(|c_n|^2|c_k|^2) = \frac{1+\delta_{nk}}{D(D+1)},
\end{align}
and all other second and fourth moments vanish, see e.g. \cite[App. A.2 and C.1]{GMM04}. With this we find that
\begin{subequations}
\begin{align}
    \Var\langle\psi|A^{lm}|\psi\rangle
    &= \sum_{k,n} |A_{kn}^{lm}|^2 \frac{1+\delta_{kn}}{D(D+1)} + \sum_{k,n} A^{lm*}_{kk} A_{nn}^{lm} \frac{1+\delta_{kn}}{D(D+1)}\nonumber\\ 
    &\qquad - \sum_n |A_{nn}^{lm}|^2 \frac{2}{D(D+1)} - \tr(A^{lm}\rho)^2\\
    &= \frac{\tr(A^{lm*}A^{lm})}{D(D+1)} - \frac{\left|\tr(A^{lm}\rho)\right|^2}{D+1}\\
    &\leq \frac{\tr(A^{lm*}A^{lm})}{D(D+1)}.
\end{align}
\end{subequations}
Next note that 
\begin{align}
    \sum_{l,m} \tr(A^{(lm)*}A^{(lm)}) = d_a \sum_{l} \tr(|l\rangle_a\langle l|\otimes I_b) = d_a D
\end{align}
and therefore
\begin{align}
    u\left\{\psi\in\mathbb{S}(\Hilbert): \|\rho_a^\psi-\tr_b\rho\|_{\tr}>\varepsilon\right\} \leq \frac{d_a^4}{\varepsilon^2 D}. 
\end{align}
If $\Hilbert_R \neq \Hilbert$ is a subspace of $\Hilbert$, then this bound remains valid after replacing $\rho$ by $P_R/d_R$, $u$ by $u_R$, $\Hilbert$ by $\Hilbert_R$ and $D$ by $d_R$. This follows immediately from the previous computations after noting that 
\begin{align}
    \sum_{l,m}\tr(A^{(lm)*}P_RA^{(lm)}P_R) \leq \sum_{l,m} \tr(A^{(lm)*}P_R A^{(lm)}) = d_a \sum_l \tr((|l\rangle_a\langle l|\otimes I_b P_R) = d_a d_R.
\end{align}
Setting $\delta:= d_a^4/(\varepsilon^2 d_R)$ and solving for $\varepsilon$ finally gives Theorem~\ref{thm:4}.

In \cite{PSW05,PSW06}, Theorem~\ref{thm:PSW} was stated in a slightly different form; more precisely, there it was shown that for every $\varepsilon>0$,
\begin{align}
    u_R \Biggl\{ \psi \in \SSS(\Hilbert_R): \bigl\|\rho_a^\psi -
    \tr_b \rho_R  \bigr\|_{\tr} > \varepsilon + \sqrt{d_a \tr(\tr_a \rho_R)^2}
   \Biggr\} \leq 4 \exp\Bigl(-\frac{d_R\varepsilon^2}{18\pi^3}\Bigr).
\end{align}
We now show how this implies the bound in Theorem~\ref{thm:PSW}. By setting $\delta := 4 \exp(-d_R\varepsilon^2/(18\pi^3))$ and solving for $\varepsilon$, we obtain
\begin{align}
    \varepsilon = \sqrt{\frac{18\pi^3}{d_R}\ln(4/\delta)}.
\end{align}
With this and $\tr(\tr_a\rho_R)^2 \leq d_a/d_R$ we obtain
\begin{align}
     u_R \Biggl\{ \psi \in \SSS(\Hilbert_R): \bigl\|\rho_a^\psi -
  \tr_b \rho_R  \bigr\|_{\tr} \leq  \sqrt{\frac{18\pi^3}{d_R}\ln(4/\delta)}+ \sqrt{d_a^2/d_R}
  \Biggr\} \geq 1-\delta\,.
\end{align}
The first square root dominates as soon as
\begin{align}
    \delta < 4\exp\left(-d_a^2/(18\pi^3)\right),
\end{align}
which we can, of course, assume without loss of generality since otherwise we would have $\delta >1$ and then the lower bound on the probability would be trivial. This immediately implies \eqref{cantyp}.

\section{Summary and Conclusions \label{sec: concl}}

Typicality theorems assert that, for big systems, some condition is true of \emph{most} points, or here, most wave functions. The word ``most'' usually refers to a uniform distribution $u$ (say, over the unit sphere $\SSS(\Hilbert_R)$ in some Hilbert subspace $\Hilbert_R$), but here we use the GAP measure as the natural analog of the uniform distribution in cases with given density matrix $\rho$. Since the GAP measure for $\rho=\rho_\can$ is the thermal equilibrium distribution of wave functions, our typicality theorems can be understood as expressing a kind of equivalence of ensembles between a micro-canonical ensemble of wave functions ($u_{\SSS(\Hilbert_{\mc})}$) and a canonical ensemble of wave functions ($\GAP(\rho_\can)$). Yet, our results apply to arbitrary $\rho$.

The key mathematical step is the generalization of L\'evy's lemma to GAP measures, that is, of the concentration of measure on high-dimensional spheres.
The fact that the pure states of a quantum system are always the points on a sphere then allows us to deduce very general typicality theorems from this kind of concentration of measure. In particular, these typicality statements are largely independent of the properties of the Hamiltonian and require only that many dimensions participate in $\rho$.

Specifically, some of these statements concern a bi-partite quantum system $a\cup b$, where $b$ is macroscopically large. We have shown that for most $\psi$ from the $\GAP(\rho)$ ensemble, the reduced density matrix $\rho_a^\psi$ is close to its average $\tr_b \rho$ assuming that the largest eigenvalue (Theorem~\ref{thm: GCT exp bound}) or at least the average eigenvalue (Theorem~\ref{thm:1}) of $\rho$ is small. That is, we have established an extension of canonical typicality to GAP measures. This family of measures is particularly natural in this context because it arises anyway in the context of bi-partite systems as the typical asymptotic distribution of the conditional wave function \cite{GLTZ06b,GLMTZ15}, a fact extended further in Corollary~\ref{cor:cond}.

Another important application of concentration-of-measure of GAP yields (Corollary~\ref{cor:LevyB}) that for any observable $B$, most $\psi$ from the $\GAP(\rho)$ ensemble have nearly the same Born distribution (when suitably coarse grained). Moreover (Corollaries~\ref{cor: dyn typ} and \ref{cor: dyn can typ}), if the initial wave function $\psi_0$ is $\GAP(\rho)$-distributed, then for any unitary time evolution the whole curves $t\mapsto\langle\psi_t|B|\psi_t\rangle$ and $t\mapsto \rho_a^{\psi_t}$ are nearly deterministic (and given by $\tr (B \rho_t)$ and $\tr_b \rho_t$).

All these results contribute different aspects to the picture of how an individual, closed quantum system in a pure state can display thermodynamic behavior \cite{vonNeumann29,Schroe52, Deutsch91,Srednicki94,Tasaki98,GM03,GMM04,PSW06,Reimann08b, BG09,GLMTZ09,GLMTZ10,GLTZ10,Short11,SF12,GHT13,GHT15,Reimann2015, GogEis16,BRGSR18,Reimann2018a,Reimann2018b,RG20,TTV22-physik,SWGW22}, and thus help clarify the role of ensembles as defining a concept of typicality, while thermal density matrices arise from partial traces. 

In sum, our results describe simple relations between the following concepts: reduced density matrix, many participating dimensions, and GAP measures. That is, if many dimensions participate in $\rho$, then for $\GAP(\rho)$-most $\psi$, the reduced density matrix $\rho_a^\psi$ is nearly independent of $\psi$. 

\bigskip

\noindent \textbf{Acknowledgments.}
We thank Tristan Benoist, Andreas Deuchert, Marius Lemm, and Martin M\"ohle for helpful discussion. C.V.\ acknowledges financial support by the German Academic Scholarship Foundation. S.T.\ acknowledges financial support by the Deutsche Forschungsgemeinschaft (DFG, German Research Foundation) -- TRR 352 -- Project-ID 470903074.
\\
\\
\noindent\textbf{Data Availability Statement.}
Data sharing is not applicable to this article as no datasets were generated or analysed.
\\
\\
\noindent\textbf{ Conflict of Interest Statement.}
The authors have no conflicts of interest.

\bibliographystyle{plainurl}
\bibliography{LiteratureGCT.bib}

\end{document}